   \newcommand\SkipToFmtEnd{}%
   \newcommand\EndFmtInput{}%
   \long\def\SkipToFmtEnd#1\EndFmtInput{}%
\newcommand\ReadOnlyOnce[1]{\@ifundefined{#1}{\@namedef{#1}{}}\SkipToFmtEnd}
\DeclareFontFamily{OT1}{cmtex}{}
\DeclareFontShape{OT1}{cmtex}{m}{n}
  {<5><6><7><8>cmtex8
   <9>cmtex9
   <10><10.95><12><14.4><17.28><20.74><24.88>cmtex10}{}
\DeclareFontShape{OT1}{cmtex}{m}{it}
  {<-> ssub * cmtt/m/it}{}
\DeclareFontShape{OT1}{cmtt}{bx}{n}
  {<5><6><7><8>cmtt8
   <9>cmbtt9
   <10><10.95><12><14.4><17.28><20.74><24.88>cmbtt10}{}
\DeclareFontShape{OT1}{cmtex}{bx}{n}
  {<-> ssub * cmtt/bx/n}{}
\newcommand{\Conid}[1]{\mathit{#1}}
\newcommand{\Varid}[1]{\mathit{#1}}
\newcommand{\anonymous}{\kern0.06em \vbox{\hrule\@width.5em}}
\newcommand{\plus}{\mathbin{+\!\!\!+}}
\renewcommand{\leq}{\leqslant}
\newdimen\mathindent\mathindent\leftmargini}%
\def\resethooks{%
  \global\let\SaveRestoreHook\empty
  \global\let\ColumnHook\empty}
\newcommand*{\savecolumns}[1][default]%
  {\g@addto@macro\SaveRestoreHook{\savecolumns[#1]}}
\newcommand*{\restorecolumns}[1][default]%
  {\g@addto@macro\SaveRestoreHook{\restorecolumns[#1]}}
\newcommand*{\aligncolumn}[2]%
  {\g@addto@macro\ColumnHook{\column{#1}{#2}}}
\newcommand{\onelinecommentchars}{\quad-{}- }
\newcommand{\commentbeginchars}{\enskip\{-}
\newcommand{\commentendchars}{-\}\enskip}
\newcommand{\visiblecomments}{%
  \let\onelinecomment=\onelinecommentchars
  \let\commentbegin=\commentbeginchars
  \let\commentend=\commentendchars}
\newcommand{\invisiblecomments}{%
  \let\onelinecomment=\empty
  \let\commentbegin=\empty
  \let\commentend=\empty}
\newlength{\blanklineskip}
\newcommand{\hsindent}[1]{\quad}
\let\hspre\empty
\let\hspost\empty
\newcommand{\hsnewpar}[1]%
  {{\parskip=0pt\parindent=0pt\par\vskip #1\noindent}}
\newcommand{\hscodestyle}{}
\newcommand{\sethscode}[1]%
  {\expandafter\let\expandafter\hscode\csname #1\endcsname
   \expandafter\let\expandafter\endhscode\csname end#1\endcsname}
   \let\hspre\(\let\hspost\)%
   \let\hspre\(\let\hspost\)%
\newcommand{\plainhs}{\sethscode{plainhscode}}
\newcommand{\arrayhs}{\sethscode{arrayhscode}}
\def\codeframewidth{\arrayrulewidth}
   \let\endoflinesave=\\
   \framedhslinecorrect\endoflinesave{.5ex}\hline
\newcommand{\framedhslinecorrect}[2]%
  {#1[#2]}
\def\column##1##2{}%
   \newcommand\>[1][]{}\newcommand\<[1][]{}\newcommand\\[1][]{}%
   \def\fromto##1##2##3{##3}%
\let\orighscode=\hscode
   \let\origendhscode=\endhscode
   \def\endhscode{\def\hscode{\endgroup\def\@currenvir{hscode}\\}\begingroup}
\def\hscode{\endgroup\def\@currenvir{hscode}}}%
   \global\let\hscode=\orighscode
   \global\let\endhscode=\origendhscode}%
\newcommand{\shorteq}{%
  \settowidth{\@tempdima}{-}%
  \resizebox{\@tempdima}{\height}{=}%
}
\definecolor{scarlet}{RGB}{ 141, 27, 53}
\definecolor{steelblue}{RGB}{ 11, 65, 108}
\definecolor{light-gray}{gray}{0.92}
\newcommand\numberthis{\addtocounter{equation}{1}\tag{\theequation}}
\newcommand*{\scName}[1]{\textsc{#1}}
\newcommand*{\seman}[1]{[\kern-.16em[ #1 ]\kern-.16em]}
\newcommand{\pfun}[0]{\mathrel{\ooalign{\hfil$\mapstochar\mkern5mu$\hfil\cr$\to$\cr}}}
\newcommand{\myset}[1]{\left\{\,#1\,\right\}}
\newcommand{\validLink}[3]{#1 \xleftrightarrow[]{#3} #2}
\newcommand{\mywhere}[0]{\quad\text{ \textbf{where} } }
\newcommand{\myspace}[0]{\quad\phantom{\text{ \textbf{where} }} }
\newcommand{\myindent}[0]{\phantom{xxxxxxxxxxx}}
\newcommand{\myindentS}[0]{\phantom{xxxx}}
\newcommand{\myindentSS}[0]{\phantom{xx}}
\newcommand*{\dom}{{\normalfont\textsc{dom}}}
\newcommand*{\ldom}{{\normalfont\textsc{ldom}}}
\newcommand*{\rdom}{{\normalfont\textsc{rdom}}}
\newcommand*{\numcircledmod}[1]{\raisebox{.1pt}{\textcircled{\raisebox{-.6pt} {#1}}}}
\newcommand*{\mcond}[1]{\{\small{\ #1 \ \}}}
\newcommand*{\bb}{\begin{array}{lllll}}
\newcommand*{\ee}{\end{array}}
\newcommand*{\sem}[1]{[ \!\! \ensuremath{\mathbin{\char92 !}[\mskip1.5mu \mathbin{\#}\mathrm{1}\mskip1.5mu]\mathbin{\char92 !\char92 !}\mskip1.5mu]\mathbin{\char92 !}} } 
\newenvironment{centerTab}
  {
  \begingroup
  \setlength{\parskip}{0.5ex}
  \par
  \begingroup
  \centering
  \begin{tabular}{c}
  }
  {
  \end{tabular}
  \par
  \endgroup
  \endgroup
  \setlength{\parskip}{1.5ex}
  \noindent
  }
\lstdefinestyle{dimUnnecessaries}{
  morekeywords = {Lit,Num},
  keywordstyle=\color{lightgray},
  escapeinside={(*@}{@*)}
}
\setlist{leftmargin=*,topsep=0ex}
\begin{document}

\setlength{\mathindent}{\parindent}

\title[Retentive Lenses]{Retentive Lenses}         
\begin{anonsuppress}
\titlenote{Draft manuscript (\today)}   
\end{anonsuppress}


\author{Zirun Zhu}
\affiliation{
  \position{}
  \institution{National Institute of Informatics}
}
\email{zhu@nii.ac.jp}

\author{Zhixuan Yang}
\affiliation{
  \position{}
  \institution{National Institute of Informatics}
}
\email{yzx@nii.ac.jp}

\author{Hsiang-Shang Ko}
\orcid{0000-0002-2439-1048}             
\affiliation{
  \position{Assistant Research Fellow}
  \institution{Institute of Information Science, Academia Sinica}
}
\email{joshko@iis.sinica.edu.tw}

\author{Zhenjiang Hu}
\affiliation{
  \position{Professor}
  \institution{Peking University}
  \department{Department of Computer Science and Technology}
}
\email{huzj@pku.edu.cn}

\begin{abstract}
Based on Foster et al.'s lenses, various bidirectional programming languages and systems have been developed for helping the user to write correct data synchronisers.
The two well-behavedness laws of lenses, namely Correctness and Hippocraticness, are usually adopted as the guarantee of these systems.
While lenses are designed to retain information in the source when the view is modified, well-behavedness says very little about the retaining of information: Hippocraticness only requires that the source be unchanged if the view is not modified, and nothing about information retention is guaranteed when the view is changed.
To address the problem, we propose an extension of the original lenses, called \emph{retentive lenses}, which satisfy a new Retentiveness law guaranteeing that if parts of the view are unchanged, then the corresponding parts of the source are retained as well.
As a concrete example of retentive lenses, we present a domain-specific language for writing tree transformations; we prove that the pair of $\textit{get}$ and $\textit{put}$ functions generated from a program in our DSL forms a retentive lens. We demonstrate the practical use of retentive lenses and the DSL by presenting case studies on code refactoring, Pombrio and Krishnamurthi's resugaring, and XML synchronisation.

\keywords{lenses, bidirectional programming, domain-specific languages}
\end{abstract}



\maketitle

\section{Introduction}
\label{sec:introduction}

We often need to write pairs of transformations to synchronise data.
Typical examples include view querying and updating in relational databases~\cite{Bancilhon1981Update} for keeping a database and its view in sync,
text file format conversion~\cite{Macfarlane2013Pandoc} (e.g.~between Markdown and HTML) for keeping their content and common formatting in sync,
and parsers and printers as front ends of compilers~\cite{Rendel2010Invertible} for keeping program text and its abstract representation in sync.
Asymmetric \emph{lenses}~\cite{Foster2007Combinators} provide a framework for modelling such pairs of programs and discussing what laws they should satisfy; among such laws, two \emph{well-behavedness} laws (explained below) play a fundamental role.
Based on lenses, various bidirectional programming languages and systems (\autoref{sec:relatedWork}) have been developed for helping the user to write correct synchronisers, and the well-behavedness laws have been adopted as the minimum---and in most cases the only---laws to guarantee.
In this paper, we argue that well-behavedness is not sufficient, and a more refined law, which we call \emph{Retentiveness}, should be developed.

To see this, let us first review the definition of well-behaved lenses, borrowing some of \citeauthor{Stevens2008Bidirectional}{'s} terminologies~\cite{Stevens2008Bidirectional}.
Lenses are used to synchronise two pieces of data respectively of types $S$~and~$V$, where $S$~contains more information and is called the \emph{source} type, and $V$~contains less information and is called the \emph{view} type.
Here, being synchronised means that when one piece of data is changed, the other piece of data should also be changed such that consistency is \emph{restored} among them, i.e.~a \emph{consistency relation}~$R$ defined on $S$~and~$V$ is satisfied.
Since $S$~contains more information than~$V$, we expect that there is a function \ensuremath{\Varid{get}\mathbin{:}\Conid{S}\to \Conid{V}} that extracts a consistent view from a source, and this \ensuremath{\Varid{get}} function serves as a consistency restorer in the source-to-view direction: if the source is changed, to restore consistency it suffices to use \ensuremath{\Varid{get}} to recompute a new view.
This \ensuremath{\Varid{get}} function should coincide with~\ensuremath{\Conid{R}} extensionally~\cite{Stevens2008Bidirectional}---that is, \ensuremath{\Varid{s}\mathbin{:}\Conid{S}} and \ensuremath{\Varid{v}\mathbin{:}\Conid{V}} are related by~\ensuremath{\Conid{R}} if and only if $\ensuremath{\Varid{get}}(s) = \ensuremath{\Varid{v}}$.
(Therefore it is only sensible to consider functional consistency relations in the asymmetric setting.)
Consistency restoration in the other direction is performed by another function $\ensuremath{\Varid{put}} : S \times V \to S$, which produces an updated source that is consistent with the input view and can retain some information of the input source.
Well-behavedness consists of two laws regarding the restoration behaviour of \ensuremath{\Varid{put}} with respect to \ensuremath{\Varid{get}} (i.e.~the consistency relation~\ensuremath{\Conid{R}}):
\begin{align}
\ensuremath{\Varid{get}} \ensuremath{(\Varid{put}} \ensuremath{(\Varid{s},\Varid{v}))} &= \ensuremath{\Varid{v}}
\tag{Correctness}\label{equ:correctness} \\
\ensuremath{\Varid{put}} \ensuremath{(\Varid{s},\Varid{get}} \ensuremath{(\Varid{s}))} &= \ensuremath{\Varid{s}}
\tag{Hippocraticness}\label{equ:hippocraticness}
\end{align}
Correctness states that necessary changes must be made by \ensuremath{\Varid{put}} such that the updated source is consistent with the view; Hippocraticness says that if two pieces of data are already consistent, \ensuremath{\Varid{put}} must not make any change.
A pair of \ensuremath{\Varid{get}} and \ensuremath{\Varid{put}} functions, called a \emph{lens}, is \emph{well-behaved} if it satisfies both laws.

Despite being concise and natural, these two properties do not sufficiently characterise the result of an update performed by \ensuremath{\Varid{put}}, and well-behaved lenses may exhibit unintended behaviour regarding what information is retained in the updated source.
Let us illustrate this with a very simple example, in which \ensuremath{\Varid{get}} is a projection function that extracts the first element from a tuple of an integer and a string.
(Hence a source and a view are consistent if the first element of the source tuple is equal to the view.)
\begin{centerTab}
\begin{lstlisting}
get :: (Int, String) -> Int
get (i, s) = i
\end{lstlisting}
\end{centerTab}%
Given this \lstinline{get}\footnote{In this paper, we use \scName{Haskell} notations to write functions, and concrete examples are always typeset in \lstinline{typewriter} font.}, we can define \lstinline[mathescape]{put$_1$} and \lstinline[mathescape]{put$_2$}, both of which are well-behaved with this \lstinline{get} but have rather different behaviour: \lstinline[mathescape]{put$_1$} simply replaces the integer of the source tuple with the view, while \lstinline[mathescape]{put$_2$} also sets the string empty when the source tuple is not consistent with the view.
\begin{centerTab}
\begin{lstlisting}[mathescape]
put$_1$ :: (Int, String) -> Int -> (Int, String)
put$_1$ (i, s) i' = (i', s)
\end{lstlisting}\\
\begin{lstlisting}[mathescape]
put$_2$ :: (Int, String) -> Int -> (Int, String)
put$_2$ src     i'  | get src == i'  = src
put$_2$ (i, s)  i'  | otherwise      = (i',  "")
\end{lstlisting}
\end{centerTab}%
From another perspective, \lstinline[mathescape]{put$_1$} retains the string from the old source when performing the update, while \lstinline[mathescape]{put$_2$} chooses to discard that string---which is not desired but `perfectly legal', for the string does not contribute to the consistency relation.
In fact, unexpected behaviour of this kind of well-behaved lenses could even lead to disaster in practice. For instance, relational databases can be thought of as tables consisting of rows of tuples, and well-behaved lenses used for maintaining a database and its view may erase important data after an update, as long as the data does not contribute to the consistency relation (in most cases this is because the data is simply not in the view). This fact seems fatal, as asymmetric lenses have been considered a satisfactory solution to the longstanding view update problem (stated at the beginning of Foster et~al.'s seminal paper~\cite{Foster2007Combinators}).

The root cause of the information loss (after an update) is that while lenses are designed to retain information, well-behavedness actually says very little about the retaining of information: the only law guaranteeing information retention is Hippocraticness, which merely requires that the \emph{whole} source should be unchanged if the \emph{whole} view is.
In other words, if we have a very small change on the view, we are free to create any source we like.
This is too `global' in most cases, and it is desirable to have a law that makes such a guarantee more `locally'.

To have a finer-grained law, we propose \emph{retentive lenses}, an extension of the original lenses, which can guarantee that if parts of the view are unchanged, then the corresponding parts of the source are retained as well.
Compared with the original lenses, the \ensuremath{\Varid{get}} function of a retentive lens is enriched to compute not only the view of the input source but also a set of \emph{links} relating corresponding parts of the source and the view.
If the view is modified, we may also update the set of links to keep track of the correspondence that still exists between the original source and the modified view.
The \ensuremath{\Varid{put}} function of the retentive lens is also enriched to take the links between the original source and the modified view as input, and it satisfies a new law, \emph{Retentiveness}, which guarantees that those parts in the original source having correspondence links to some parts of the modified view are retained at the right places in the updated source.

The main contributions of the paper are as follows:
\begin{itemize}
  \item We develop a formal definition of retentive lenses for tree-shaped data~(\autoref{sec:framework}).

  \item We present a domain-specific language (DSL) for writing tree synchronisers and prove that any program written in our DSL gives rise to a retentive lens~(\autoref{sec:retentiveDSL}).

  \item We demonstrate the usefulness of retentive lenses in practice by presenting case studies on code refactoring, resugaring, and XML synchronisation (\autoref{sec:application}), with the help of several view editing operations that also update the links between the view and the original source (\autoref{sec:editOperations}).
\end{itemize}
We will start from a high-level sketch of what retentive lenses do~(\autoref{sec:intuitiveRetentiveness}), and after presenting the technical contents, we will discuss related work~(\autoref{sec:relatedWork}) regarding various alignment strategies for lenses,
provenance and origin between two pieces of data, and operational-based bidirectional transformations, before concluding the paper (\autoref{sec:conclusion}).

\begin{figure}[t]
\setlength{\mathindent}{0em}
\begin{small}
\begin{minipage}[t]{0.45\textwidth}
\begin{center}
\begin{lstlisting}[xleftmargin=0pt]
type Annot  =  String
data Expr   =  Plus  Annot Expr Term
            |  Minus Annot Expr Term
            |  FromT Annot Term

data Term   =  Lit   Annot Int
            |  Neg   Annot Term
            |  Paren Annot Expr

data Arith  =  Add Arith Arith
            |  Sub Arith Arith
            |  Num Int
\end{lstlisting}
\end{center}
\end{minipage}
\begin{minipage}[t]{0.5\textwidth}
\begin{center}
\begin{lstlisting}
getE :: Expr -> Arith
getE (Plus   _ e  t) = Add (getE e) (getT t)
getE (Minus  _ e  t) = Sub (getE e) (getT t)
getE (FromT  _    t) = getT t

getT :: Term -> Arith
getT (Lit     _ i  ) = Num i
getT (Neg     _ t  ) = Sub (Num 0) (getT t)
getT (Paren   _ e  ) = getE e
\end{lstlisting}
\end{center}
\end{minipage}
\end{small}
\caption{Data types for concrete and abstract syntax of arithmetic expressions and the consistency relations between them as \lstinline{getE} and \lstinline{getT} functions in \scName{Haskell}.}
\label{fig:runningExpDataTypeDef}
\end{figure}

\section{A Sketch of Retentiveness}
\label{sec:intuitiveRetentiveness}

We will use the synchronisation of concrete and abstract representations of arithmetic expressions as the running example throughout the paper.
The representations are defined in \autoref{fig:runningExpDataTypeDef} (in \scName{Haskell}).
The concrete representation is either an expression of type \lstinline{Expr}, containing additions and subtractions; or a term of type \lstinline{Term}, including numbers, negated terms, and expressions in parentheses.
Moreover, all the constructors have an annotation field of type \lstinline{Annot} mocking up data that exist solely in the concrete representation like code comments and spaces.
The two concrete types \lstinline{Expr} and \lstinline{Term} coalesce into the abstract representation type \lstinline{Arith}, which does not include annotations, explicit parentheses, and negations---negations are considered \emph{syntactic sugar} and represented in the AST by \lstinline{Sub}.

As mentioned in \autoref{sec:introduction}, the core idea of Retentiveness is to use links to relate parts of the source and view.
For data of algebraic data types (which we call `trees' or `terms'), a straightforward interpretation of a `part' is a subtree of the data.
But it is too restrictive in most cases, and a more useful interpretation of a `part' is a \emph{region} of a tree, i.e.~a partial subtree.
Partial trees are trees where some subtrees can be missing.
We will describe the content of a partial tree with a pattern that contains wildcards at the positions of missing subtrees.
In \autoref{fig:swapAndPut}, all grey areas are examples of regions; the topmost region in \lstinline{cst} is located at the root of the whole tree, and its content has the pattern \lstinline{Plus "a plus" _ _}\,, which says that the region includes the \lstinline{Plus} node and the annotation \lstinline{"a plus"}, but not the other two subtrees with roots \lstinline{Minus} and \lstinline{Neg} matched by the wildcards.

\begin{figure}[t]
\centering
\includegraphics[scale=0.65,trim={5.1cm 5.1cm 5.1cm 4.3cm},clip]{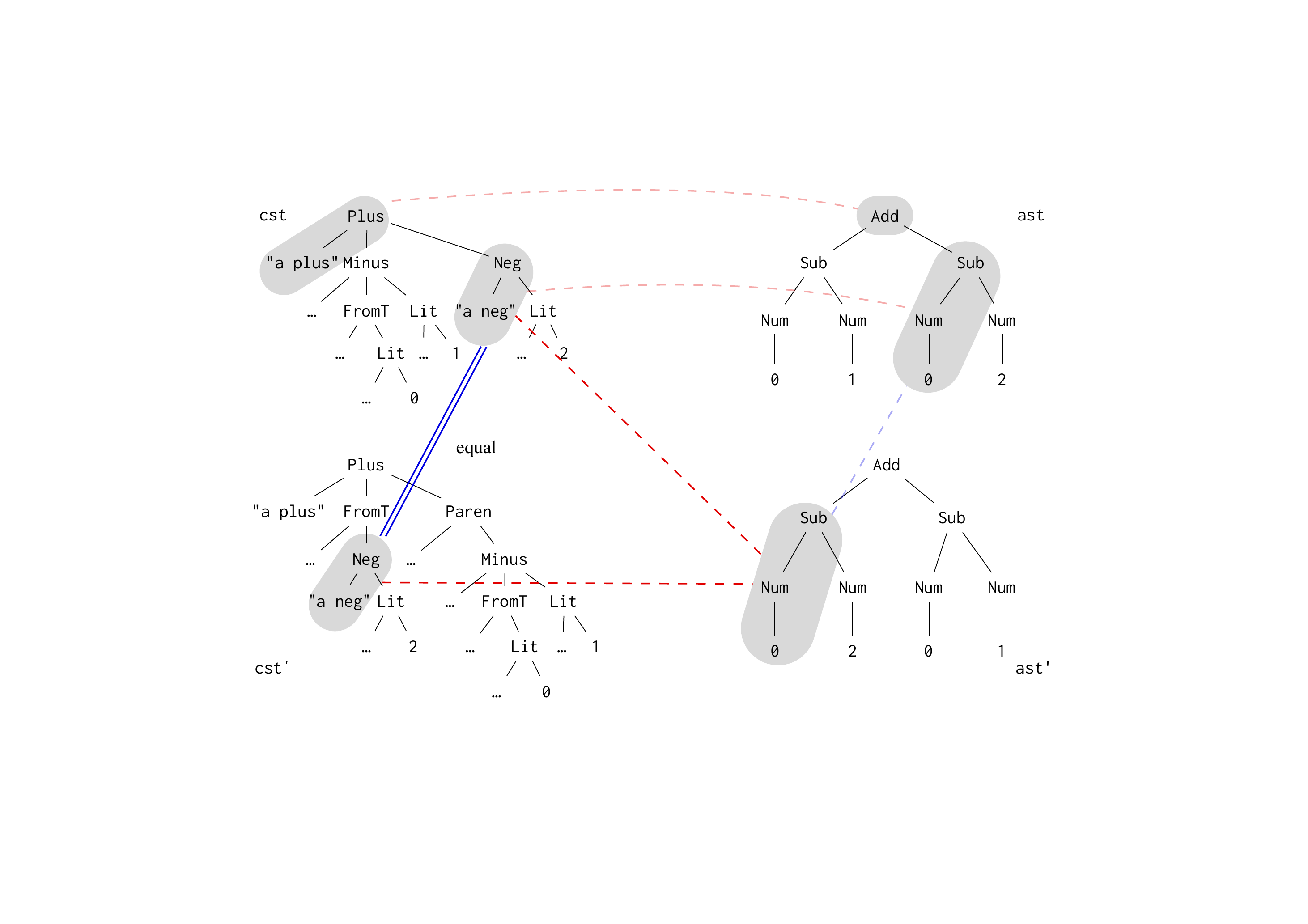}
\caption{Regions, links, and the triangular guarantee.}
\label{fig:swapAndPut}
\end{figure}

Having broken up source and view trees into regions, we can put in \emph{links} to record the correspondences between source and view regions.
In \autoref{fig:swapAndPut}, for example, the light red dashed lines between the source \lstinline{cst} and the view \lstinline{ast = getE cst} represent two possible links.
The topmost region of pattern \lstinline{Plus "a plus" _ _} in \lstinline{cst} corresponds to the topmost region of pattern \lstinline{Add _ _} in \lstinline{ast}, and the region of pattern \lstinline{Neg "a neg" _} in the right subtree of \lstinline{cst} corresponds to the region of pattern \lstinline{Sub (Num 0) _} in \lstinline{ast}.
The \ensuremath{\Varid{get}} function of a retentive lens will be responsible for producing an initial set of links between a source and its view.

As the view is modified, the links between the source and view should also be modified to reflect the latest correspondences between regions.
For example, in \autoref{fig:swapAndPut}, if we change \lstinline{ast} to \lstinline{ast'} by swapping the two subtrees under \lstinline{Add}, then there should be a new link (among others) recording the fact that the \lstinline{Neg "a neg" _} region and the \lstinline{Sub (Num 0) _} region are still related.
We will describe a way of computing new links from old ones in \autoref{sec:editOperations}.

When it is time to put the modified view back into the source, the links between the source and the modified view are used to guide what regions in the old source should be retained in the new one and at what positions.
In addition to the source and view, the \ensuremath{\Varid{put}} function of a retentive lens also takes a collection of links, and provides what we call the \emph{triangular guarantee}, as illustrated in \autoref{fig:swapAndPut}:
when updating \lstinline{cst} with \lstinline{ast'}, the region \lstinline{Neg "a neg" _} (i.e.~syntactic sugar negation) connected by the red dashed link is guaranteed to be preserved in the result \lstinline{cst'} (as opposed to changing it to a \lstinline{Minus}), and the preserved region will be linked to the same region \lstinline{Sub (Num 0) _} of \lstinline{ast'} if we run \lstinline{getE cst'}.
The Retentiveness law will be a formalisation of the triangular guarantee.

\section{Formal Definitions}
\label{sec:framework}

Here we formalise what we described in \autoref{sec:intuitiveRetentiveness}.
Besides the definition of retentive lenses~(\autoref{sec:retentive-lens-definition}), we will also briefly discuss how retentive lenses compose~(\autoref{sec:lensComp}).

\subsection{Retentive Lenses}
\label{sec:retentive-lens-definition}

We start with some notations.
Relations from set~$A$ to set~$B$ are subsets of $A \times B$, and we denote the type of these relations by $A \sim B$.
Given a relation $r : A \sim B$, define its \emph{converse} $r^\circ : B \sim A$ by $r^\circ = \{\, (b, a) \mid (a, b) \in r \,\}$, its \emph{left domain} by $\ldom(r) = \{\,a \in A \mid \exists b.\ (a, b) \in r \,\}$, and its \emph{right domain} by $\rdom(r) = \ldom(r^\circ)$.
The composition $r \cdot s : A \sim C$ of two relations $r : A \sim B$ and $s : B \sim C$ is defined as usual by $r \cdot s = \{\, (a, c) \mid \exists b.\ (a, b) \in r \mathrel\wedge (b, c) \in s\,\}$.
The type of partial functions from~$A$ to~$B$ is denoted by $A \pfun B$.
The \emph{domain} $\dom(f)$ of a function $f : A \pfun B$ is the subset of~$A$ on which $f$~is defined; when $f$~is total, i.e.~$\dom(f) = A$, we write $f : A \to B$.
We will allow functions to be implicitly lifted to relations: a function $f : A \pfun B$ also denotes a relation $f : B \sim A$ such that $(f\,x, x) \in f$ for all $x \in \dom(f)$\footnote{This flipping of domain and codomain (from $A \pfun B$ to $B \sim A$) makes function composition compatible with relation composition: a function composition $\ensuremath{\Varid{g}} \circ \ensuremath{\Varid{f}}$ lifted to a relation is the same as $\ensuremath{\Varid{g}} \cdot \ensuremath{\Varid{f}}$, i.e.~the composition of \ensuremath{\Varid{g}}~and~\ensuremath{\Varid{f}} as relations.}.

We will work within a universal set \ensuremath{\Conid{Tree}} of trees, which is inductively built from all possible finitely branching constructors.
(The semantics of an algebraic data type is then the subset of \ensuremath{\Conid{Tree}} that consists of those trees built with only the constructors of the data type.)
Similarly, the set \ensuremath{\Conid{Pattern}} is inductively built from all possible finitely branching constructors, variables, and a distinguished wildcard element \ensuremath{\anonymous }\,.
We will also need a set \ensuremath{\Conid{Path}} of all possible paths for navigating from the root of a tree to one of its subtrees.
The exact representation of paths is not crucial: paths are only required to support some standard operations such as $\ensuremath{\Varid{sel}} : \ensuremath{\Conid{Tree}} \times \ensuremath{\Conid{Path}} \pfun \ensuremath{\Conid{Tree}}$ such that $\ensuremath{\Varid{sel}}(t, p)$ is the subtree of~\ensuremath{\Varid{t}} at the end of path~\ensuremath{\Varid{p}} (starting from the root), or undefined if \ensuremath{\Varid{p}}~does not exist in~\ensuremath{\Varid{t}}; we will mention these operations in the rest of the paper as the need arises.
But, when giving concrete examples, we will use one particular representation: a path is a list of natural numbers indicating which subtree to go into at each node---for instance, starting from the root of \lstinline{cst} in \autoref{fig:swapAndPut}, the empty path \lstinline{[]} points to the root node \lstinline{Plus}, the path \lstinline{[0]} points to \lstinline{"a plus"} (which is the first subtree under the root), and the path \lstinline{[2,0]} points to \lstinline{"a neg"}.

We define a collection of links between two trees as a relation of type $\ensuremath{\Conid{Region}} \sim \ensuremath{\Conid{Region}}$, where $\ensuremath{\Conid{Region}} = \ensuremath{\Conid{Pattern}} \times \ensuremath{\Conid{Path}}$: a region is identified by a path leading to a subtree and a pattern describing the part of the subtree included in the region.
Briefly, a link is a pair of regions, and a collection of links is a relation between regions of two trees.
For brevity we will write \ensuremath{\Conid{Links}} for $\ensuremath{\Conid{Region}} \sim \ensuremath{\Conid{Region}}$.


An arbitrary collection of links may not make sense for a given pair of trees though---a region mentioned by some link may not exist in the trees at all.
We should therefore characterise when a collection of links is valid for two trees.
\begin{definition}[Region Containment]
\label{def:sat}
For a tree~\ensuremath{\Varid{t}} and a set of regions $\Phi \subseteq \ensuremath{\Conid{Region}}$, we say that $t \models \Phi$ (read `\ensuremath{\Varid{t}} contains $\Phi$') exactly when
\[ \forall (\ensuremath{\Varid{pat}}, \ensuremath{\Varid{path}}) \in \Phi. \quad \ensuremath{\Varid{sel}}(t, \ensuremath{\Varid{path}}) \text{ matches } \ensuremath{\Varid{pat}}\text. \]
\end{definition}
\begin{definition}[Valid Links]
\label{def:validLinks}
Given $\ensuremath{\Varid{ls}} : \ensuremath{\Conid{Links}}$ and two trees \ensuremath{\Varid{t}}~and~\ensuremath{\Varid{u}}, we say that \ensuremath{\Varid{ls}} is \emph{valid} for \ensuremath{\Varid{t}} and \ensuremath{\Varid{u}}, denoted by $\validLink{\ensuremath{\Varid{t}}}{\ensuremath{\Varid{u}}}{\ensuremath{\Varid{ls}}}$, exactly when
\[ \ensuremath{\Varid{t}} \models \ldom(\ensuremath{\Varid{ls}}) \quad\text{and}\quad \ensuremath{\Varid{u}} \models \rdom(\ensuremath{\Varid{ls}}) \text{.} \]
\end{definition}



Now we have all the ingredients for the formal definition of retentive lenses.
\begin{definition}[Retentive Lenses]
\label{def:retLens}
For a set~\ensuremath{\Conid{S}} of source trees and a set~\ensuremath{\Conid{V}} of view trees, a retentive lens between \ensuremath{\Conid{S}}~and~\ensuremath{\Conid{V}} is a pair of functions
\begin{align*}
\ensuremath{\Varid{get}} &: \ensuremath{\Conid{S}} \pfun \ensuremath{\Conid{V}} \times \ensuremath{\Conid{Links}} \\
\ensuremath{\Varid{put}} &: \ensuremath{\Conid{S}} \times \ensuremath{\Conid{V}} \times \ensuremath{\Conid{Links}} \pfun S
\end{align*}
satisfying
\begin{itemize}
\item \emph{Hippocraticness:} if $\ensuremath{\Varid{get}\;\Varid{s}\mathrel{=}(\Varid{v},\Varid{ls})}$, then $(\ensuremath{\Varid{s}}, \ensuremath{\Varid{v}}, \ensuremath{\Varid{ls}}) \in \dom(\ensuremath{\Varid{put}})$ and
\begin{align}
\label{law:get}
\ensuremath{\Varid{put}\;(\Varid{s},\Varid{v},\Varid{ls})\mathrel{=}\Varid{s}} \text{ ;}
\end{align}

\item \emph{Correctness:} if $\ensuremath{\Varid{put}\;(\Varid{s},\Varid{v},\Varid{ls})\mathrel{=}\Varid{s'}}$, then $\ensuremath{\Varid{s'}} \in \dom(\ensuremath{\Varid{get}})$ and
\begin{align}
\label{law:correct}
\ensuremath{\Varid{get}\;\Varid{s'}\mathrel{=}(\Varid{v},\Varid{ls'})} \quad \text{ for some \ensuremath{\Varid{ls'}} ;}
\end{align}

\item \emph{Retentiveness:}
\begin{align}
\label{law:retain}
\ensuremath{\Varid{fst}} \cdot \ensuremath{\Varid{ls}} \subseteq \ensuremath{\Varid{fst}} \cdot \ensuremath{\Varid{ls'}}
\end{align}
where $\ensuremath{\Varid{fst}} : A \sim A \times B$ is the first projection function (lifted to a relation).
\end{itemize}
\end{definition}
Modulo the handling of links, Hippocraticness and Correctness remain the same as their original forms (in the definition of well-behaved lenses).
Retentiveness further states that the input links $\ensuremath{\Varid{ls}}$ must be preserved, except for the location of source regions (i.e.~$\rdom(\ensuremath{\Varid{snd}} \cdot \ensuremath{\Varid{ls}})$ in the compact relational notation). 
The region patterns (data) and the location of the view region, which are $\ensuremath{\Varid{fst}} \cdot \ensuremath{\Varid{ls}}$ in the relational notation, must be exactly the same.
Retentiveness formalises the triangular guarantee in a compact way, and we can expand it pointwise to see that it indeed specialises to the triangular guarantee.

\begin{proposition}[Triangular Guarantee]
Given a retentive lens, suppose \ensuremath{\Varid{put}\;(\Varid{s},\Varid{v},\Varid{ls})\mathrel{=}\Varid{s'}} and \ensuremath{\Varid{get}\;\Varid{s'}\mathrel{=}(\Varid{v},\Varid{ls'})}. If $((\ensuremath{\Varid{spat}}, \ensuremath{\Varid{spath}}), (\ensuremath{\Varid{vpat}}, \ensuremath{\Varid{vpath}})) \in \ensuremath{\Varid{ls}}$, then for some \ensuremath{\Varid{spath'}} we have $\ensuremath{\Varid{s'}} \models \ensuremath{\{\mskip1.5mu (\Varid{spat},\Varid{spath'})\mskip1.5mu\}}$ and $\ensuremath{((\Varid{spat},\Varid{spath'}),(\Varid{vpat},\Varid{vpath}))} \in \ensuremath{\Varid{ls'}}$.
\end{proposition}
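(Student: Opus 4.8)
The plan is to unfold the compact relational statement of Retentiveness~\eqref{law:retain} pointwise and read off the two conclusions almost directly. First I would fix the meaning of the composition $\ensuremath{\Varid{fst}} \cdot \ensuremath{\Varid{ls}}$. Since $\ensuremath{\Varid{fst}} : A \sim A \times B$ is the lifted first projection, we have $(a, (a', b)) \in \ensuremath{\Varid{fst}}$ exactly when $a = a'$. Taking $A = \ensuremath{\Conid{Pattern}}$ and $B = \ensuremath{\Conid{Path}}$ and composing with $\ensuremath{\Varid{ls}} : \ensuremath{\Conid{Region}} \sim \ensuremath{\Conid{Region}}$, the definition of relational composition gives
\[ \ensuremath{\Varid{fst}} \cdot \ensuremath{\Varid{ls}} = \{\, (\ensuremath{\Varid{spat}}, (\ensuremath{\Varid{vpat}}, \ensuremath{\Varid{vpath}})) \mid \exists \ensuremath{\Varid{spath}}.\ ((\ensuremath{\Varid{spat}}, \ensuremath{\Varid{spath}}), (\ensuremath{\Varid{vpat}}, \ensuremath{\Varid{vpath}})) \in \ensuremath{\Varid{ls}} \,\}\text, \]
and analogously for $\ensuremath{\Varid{ls'}}$. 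In words, $\ensuremath{\Varid{fst}} \cdot \ensuremath{\Varid{ls}}$ forgets the source path while retaining the source pattern together with the whole view region.

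With this reading in hand, the link half of the conclusion is immediate. From the hypothesis $((\ensuremath{\Varid{spat}}, \ensuremath{\Varid{spath}}), (\ensuremath{\Varid{vpat}}, \ensuremath{\Varid{vpath}})) \in \ensuremath{\Varid{ls}}$ I would obtain $(\ensuremath{\Varid{spat}}, (\ensuremath{\Varid{vpat}}, \ensuremath{\Varid{vpath}})) \in \ensuremath{\Varid{fst}} \cdot \ensuremath{\Varid{ls}}$ by taking $\ensuremath{\Varid{spath}}$ itself as the existential witness. Retentiveness~\eqref{law:retain} then yields $(\ensuremath{\Varid{spat}}, (\ensuremath{\Varid{vpat}}, \ensuremath{\Varid{vpath}})) \in \ensuremath{\Varid{fst}} \cdot \ensuremath{\Varid{ls'}}$, and unfolding the composition for $\ensuremath{\Varid{ls'}}$ produces some path, which I name $\ensuremath{\Varid{spath'}}$, with $((\ensuremath{\Varid{spat}}, \ensuremath{\Varid{spath'}}), (\ensuremath{\Varid{vpat}}, \ensuremath{\Varid{vpath}})) \in \ensuremath{\Varid{ls'}}$. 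This is exactly the link claimed; note that only the source path may change, while the pattern $\ensuremath{\Varid{spat}}$ and the entire view region $(\ensuremath{\Varid{vpat}}, \ensuremath{\Varid{vpath}})$ are preserved, as guaranteed by the $\ensuremath{\Varid{fst}}$-projection on both sides.

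It remains to establish the containment half, $\ensuremath{\Varid{s'}} \models \ensuremath{\{\mskip1.5mu (\Varid{spat},\Varid{spath'})\mskip1.5mu\}}$. This does \emph{not} follow from Retentiveness alone, and this is the one genuinely non-mechanical step: it relies on the links emitted by $\ensuremath{\Varid{get}}$ being valid for the trees they relate, i.e.\ $\validLink{\ensuremath{\Varid{s'}}}{\ensuremath{\Varid{v}}}{\ensuremath{\Varid{ls'}}}$, which holds because $\ensuremath{\Varid{get}\;\Varid{s'}\mathrel{=}(\Varid{v},\Varid{ls'})}$. Given validity, \autoref{def:validLinks} gives $\ensuremath{\Varid{s'}} \models \ldom(\ensuremath{\Varid{ls'}})$; and since the link just obtained places $(\ensuremath{\Varid{spat}}, \ensuremath{\Varid{spath'}}) \in \ldom(\ensuremath{\Varid{ls'}})$, restricting the universally quantified condition of \autoref{def:sat} to this single region gives that $\ensuremath{\Varid{sel}}(\ensuremath{\Varid{s'}}, \ensuremath{\Varid{spath'}})$ matches $\ensuremath{\Varid{spat}}$, which is precisely $\ensuremath{\Varid{s'}} \models \ensuremath{\{\mskip1.5mu (\Varid{spat},\Varid{spath'})\mskip1.5mu\}}$.

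Thus the main obstacle is not computational but conceptual: the bare Retentiveness inequation delivers the membership clause for free, whereas the region-containment clause needs the separate well-formedness fact that $\ensuremath{\Varid{get}}$ returns valid links. I would therefore make that assumption explicit (or invoke it from the framework) at the outset, since without it the containment conclusion cannot be derived even though the link conclusion can; the rest is a routine pointwise expansion of the relational notation.
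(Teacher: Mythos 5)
Your proof is correct and takes essentially the same approach as the paper, which offers no argument beyond the remark that the Retentiveness inclusion $\ensuremath{\Varid{fst}} \cdot \ensuremath{\Varid{ls}} \subseteq \ensuremath{\Varid{fst}} \cdot \ensuremath{\Varid{ls'}}$ can be "expanded pointwise"; your unfolding of the lifted projection and the relational composition is exactly that expansion. Your further observation that the containment clause $\ensuremath{\Varid{s'}} \models \ensuremath{\{\mskip1.5mu (\Varid{spat},\Varid{spath'})\mskip1.5mu\}}$ does not follow from Retentiveness alone but needs the well-formedness fact that $\ensuremath{\Varid{get}}$ emits links valid for the trees it relates is accurate, and usefully makes explicit an assumption the paper leaves unstated in \autoref{def:retLens}.
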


\begin{example}
In \autoref{fig:swapAndPut}, if the \lstinline{put} function takes \lstinline{cst}, \lstinline{ast'}, and links \lstinline[mathescape]{ls${}={}$$\{$((Neg "a neg" _  , [2]) , (Sub (Num 0) _ , [0]))$\}$} as arguments and successfully produces an updated source~\lstinline{s'}, then \lstinline{get s'} will succeed.
Let \lstinline[mathescape]{(v,ls')${}={}$get s'}; we know that we can find a link in \lstinline{ls'} with the path of its source region removed: \lstinline[mathescape]{c${}={}$(Neg "a neg" _ , (Sub (Num 0) _ , [0]))${}\in{}$fst${}\cdot{}$ls'}.
So the view region referred to by \lstinline{c} is indeed the same as the one referred to by the input link, and having \lstinline{c}${}\in{}$\lstinline{fst}${}\cdot{}$\lstinline{ls'} means that the region in~\lstinline{s'} corresponding to the view region will match the pattern \lstinline{Neg "a neg" _}\,.
\end{example}

Finally, we note that retentive lenses are an extension of well-behaved lenses: every well-behaved lens between trees can be directly turned into a retentive lens (albeit in a trivial way).
\begin{example}[Well-behaved Lenses are Retentive Lenses]
Given a well-behaved lens defined by $g : \ensuremath{\Conid{S}} \rightarrow \ensuremath{\Conid{V}}$ and $p : \ensuremath{\Conid{S}} \times \ensuremath{\Conid{V}} \rightarrow \ensuremath{\Conid{S}}$, we define $\ensuremath{\Varid{get}} : \ensuremath{\Conid{S}} \pfun \ensuremath{\Conid{V}} \times \ensuremath{\Conid{Links}}$ and $\ensuremath{\Varid{put}} : \ensuremath{\Conid{S}} \times \ensuremath{\Conid{V}} \times \ensuremath{\Conid{Links}} \pfun \ensuremath{\Conid{S}}$ as follows:
\[ \begin{array}{llcl}
\ensuremath{\Varid{get}} & \ensuremath{\Varid{s}}          & = & \ensuremath{(\Varid{g}\;\Varid{s},\emptyset)} \\
\ensuremath{\Varid{put}} & \ensuremath{(\Varid{s},\Varid{v},\Varid{ls})} & = & \ensuremath{\Varid{p}\;(\Varid{s},\Varid{v})} \ \text{.}
\end{array} \]
In the definition, $\dom(\ensuremath{\Varid{put}})$ is restricted to $\big\{\, (\ensuremath{\Varid{s}}, \ensuremath{\Varid{v}}, \emptyset) \,\big\}$.
Hippocraticness and Correctness hold because the underlying $g$ and $p$ are well-behaved.
Retentiveness is also satisfied vacuously since the input link of \ensuremath{\Varid{put}} is empty.
\end{example}

\subsection{Composition of Retentive Lenses}
\label{sec:lensComp}

\begin{figure}[t]
\centering
\includegraphics[scale=0.75,trim={4cm 7.9cm 10cm 3.9cm},clip]{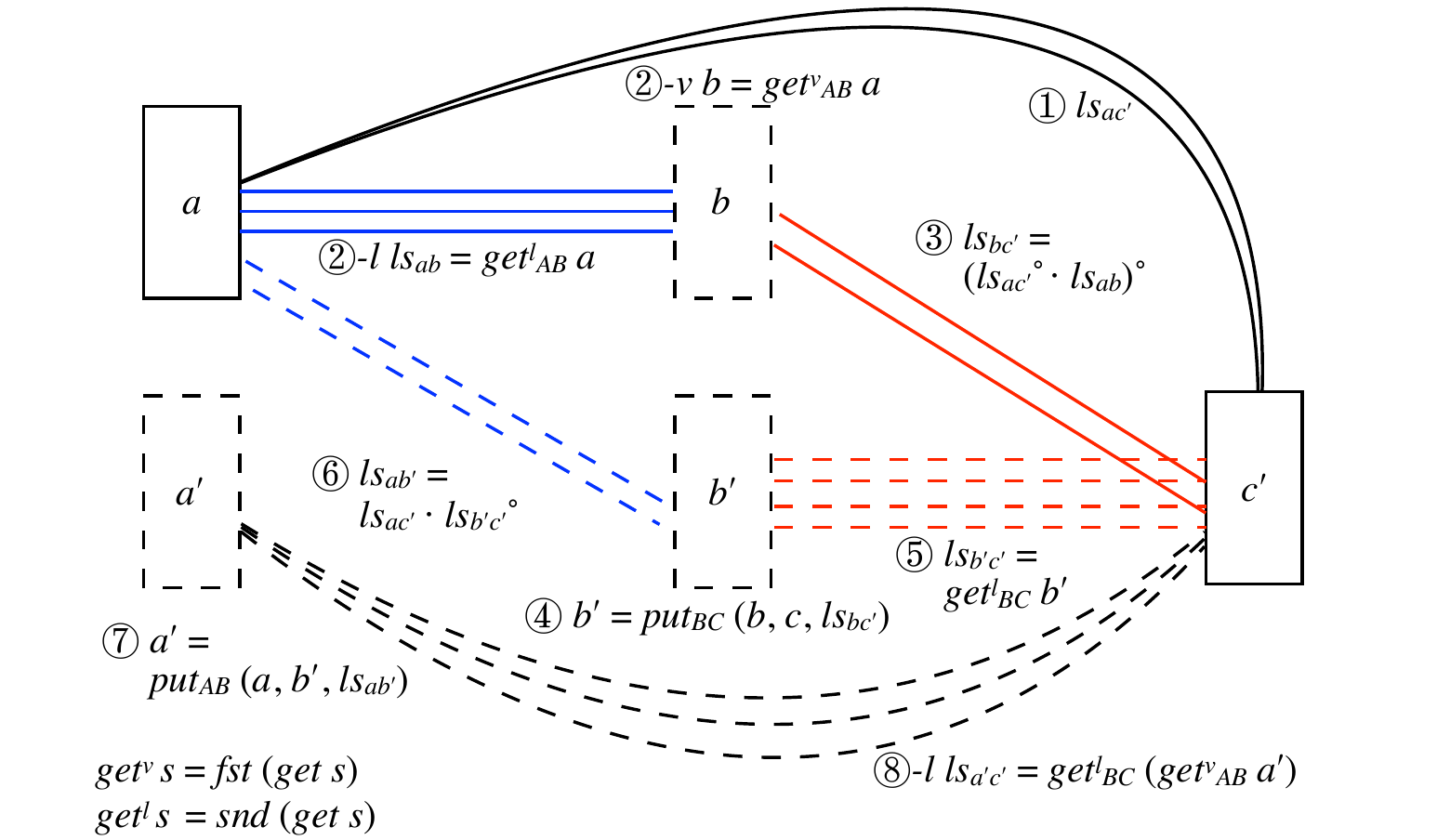}
\caption{The \ensuremath{\Varid{put}} behaviour of a composite retentive lens, divided into steps \numcircledmod{1} to \numcircledmod{8}. Step \numcircledmod{8} produces consistency links for showing the triangular guarantee.}
\label{fig:rlensComp}
\end{figure}

It is standard to provide a composition operator for composing large lenses from small ones.
Here we discuss this operator for retentive lenses, which basically follows the definition of composition for well-behaved lenses, except that we need to deal with links carefully.
Below we use \ensuremath{lens_{AB}} to denote a retentive lens that synchronises trees of sets \ensuremath{\Conid{A}} and \ensuremath{\Conid{B}}, \ensuremath{get_{AB}} and \ensuremath{put_{AB}} the \ensuremath{\Varid{get}} and \ensuremath{\Varid{put}} functions of the lens, \ensuremath{l_{ab}} a link between tree \ensuremath{\Varid{a}} (of set \ensuremath{\Conid{A}}) and tree \ensuremath{\Varid{b}} (of set \ensuremath{\Conid{B}}), and \ensuremath{ls_{ab}} a collection of links between \ensuremath{\Varid{a}} and \ensuremath{\Varid{b}}.

\begin{definition}[Retentive Lens Composition]
\label{def:rlensComp}
Given two retentive lenses \ensuremath{lens_{AB}} and \ensuremath{lens_{BC}}, define the \ensuremath{\Varid{get}} and \ensuremath{\Varid{put}} functions of their composition by\\
\begin{minipage}[t]{0.4\textwidth}
\begin{center}
\begin{hscode}\SaveRestoreHook
\column{B}{@{}>{\hspre}l<{\hspost}@{}}%
\column{3}{@{}>{\hspre}l<{\hspost}@{}}%
\column{10}{@{}>{\hspre}l<{\hspost}@{}}%
\column{21}{@{}>{\hspre}l<{\hspost}@{}}%
\column{33}{@{}>{\hspre}l<{\hspost}@{}}%
\column{E}{@{}>{\hspre}l<{\hspost}@{}}%
\>[B]{}get_{AC}\;\Varid{a}\mathrel{=}(\Varid{c},ls_{ab}\;{\cdot}\;ls_{bc}){}\<[E]%
\\
\>[B]{}\hsindent{3}{}\<[3]%
\>[3]{}\mathbf{where}\;{}\<[10]%
\>[10]{}(\Varid{b},ls_{ab}{}\<[21]%
\>[21]{})\mathrel{=}get_{AB}\;{}\<[33]%
\>[33]{}\Varid{a}{}\<[E]%
\\
\>[10]{}(\Varid{c},ls_{bc}{}\<[21]%
\>[21]{})\mathrel{=}get_{BC}\;{}\<[33]%
\>[33]{}\Varid{b}{}\<[E]%
\ColumnHook
\end{hscode}\resethooks
\end{center}
\end{minipage}
\begin{minipage}[t]{0.6\textwidth}
\begin{hscode}\SaveRestoreHook
\column{B}{@{}>{\hspre}l<{\hspost}@{}}%
\column{3}{@{}>{\hspre}l<{\hspost}@{}}%
\column{12}{@{}>{\hspre}l<{\hspost}@{}}%
\column{24}{@{}>{\hspre}c<{\hspost}@{}}%
\column{24E}{@{}l@{}}%
\column{27}{@{}>{\hspre}l<{\hspost}@{}}%
\column{30}{@{}>{\hspre}l<{\hspost}@{}}%
\column{E}{@{}>{\hspre}l<{\hspost}@{}}%
\>[B]{}put_{AC}\;(\Varid{a},\Varid{c'},ls_{ac^{'}})\mathrel{=}{}\<[30]%
\>[30]{}\Varid{a'}{}\<[E]%
\\
\>[B]{}\hsindent{3}{}\<[3]%
\>[3]{}\mathbf{where}\;{}\<[12]%
\>[12]{}(\Varid{b},ls_{ab}){}\<[24]%
\>[24]{}\mathrel{=}{}\<[24E]%
\>[27]{}get_{AB}\;\Varid{a}{}\<[E]%
\\
\>[12]{}ls_{bc^{'}}{}\<[24]%
\>[24]{}\mathrel{=}{}\<[24E]%
\>[27]{}(ls_{ac^{'}}^{\circ}\;{\cdot}\;ls_{ab})^{\circ}{}\<[E]%
\\
\>[12]{}\Varid{b'}{}\<[24]%
\>[24]{}\mathrel{=}{}\<[24E]%
\>[27]{}put_{BC}\;(\Varid{b},\Varid{c'},ls_{bc^{'}}){}\<[E]%
\\
\>[12]{}ls_{b^{'}c^{'}}{}\<[24]%
\>[24]{}\mathrel{=}{}\<[24E]%
\>[27]{}\Varid{fst}\;(get_{BC}\;\Varid{b'}){}\<[E]%
\\
\>[12]{}ls_{ab^{'}}{}\<[24]%
\>[24]{}\mathrel{=}{}\<[24E]%
\>[27]{}ls_{ac^{'}}\;{\cdot}\;ls_{b^{'}c^{'}}^{\circ}{}\<[E]%
\\
\>[12]{}\Varid{a'}{}\<[24]%
\>[24]{}\mathrel{=}{}\<[24E]%
\>[27]{}put_{AB}\;(\Varid{a},\Varid{b'},ls_{ab^{'}}).{}\<[E]%
\ColumnHook
\end{hscode}\resethooks
\end{minipage}
\end{definition}
The \ensuremath{\Varid{get}} behaviour of a composite retentive lens is straightforward; the \ensuremath{\Varid{put}} behaviour, on the other hand, is a little complex and can be best understood with the help of \autoref{fig:rlensComp}.
Let us first recap the composite behaviour of \ensuremath{\Varid{put}} of traditional lenses: in \autoref{fig:rlensComp}, if we need to propagate changes from data \ensuremath{\Varid{c'}} back to data \ensuremath{\Varid{a}} without links, we will first construct the \emph{intermediate} data \ensuremath{\Varid{b}} (by running \ensuremath{get_{AB}\;\Varid{a}}), propagate changes from \ensuremath{\Varid{c'}} to \ensuremath{\Varid{b}} and produce \ensuremath{\Varid{b'}}, and finally use \ensuremath{\Varid{b'}} to update \ensuremath{\Varid{a}}.
The composition of retentive lenses is similar:
besides the intermediate data \ensuremath{\Varid{b}}, we also need to construct intermediate links \ensuremath{ls_{bc^{'}}} (\numcircledmod{3} in the figure) for retaining information when updating \ensuremath{\Varid{b}} to \ensuremath{\Varid{b'}}, so that we can further construct intermediate links \ensuremath{ls_{ab^{'}}} (\numcircledmod{6} in the figure) for retaining information when updating \ensuremath{\Varid{a}} to \ensuremath{\Varid{a'}} using \ensuremath{\Varid{b'}}.

\begin{theorem}
\label{thm:retPrsv}
The composition of two retentive lenses is still a retentive lens.
\end{theorem}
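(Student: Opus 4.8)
The plan is to check the three laws of \autoref{def:retLens} for the composite $(get_{AC}, put_{AC})$ of \autoref{def:rlensComp}, each time reducing to the corresponding law of $lens_{AB}$ and $lens_{BC}$ and discharging the side conditions on links by relational algebra. I will freely use that relational composition is associative and monotone under $\subseteq$, that $(r\cdot s)^{\circ}=s^{\circ}\cdot r^{\circ}$, that the intermediate links unfold to $ls_{bc'}=ls_{ab}^{\circ}\cdot ls_{ac'}$ and $ls_{ab'}=ls_{ac'}\cdot ls_{b'c'}^{\circ}$, and that the projection $\mathit{fst}$ touches only the source side of a link, so that $\mathit{fst}\cdot(r\cdot s)=(\mathit{fst}\cdot r)\cdot s$ and $\rdom(\mathit{fst}\cdot r)=\rdom(r)$. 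Correctness is immediate: assuming $put_{AC}(a,c',ls_{ac'})=a'$ so that every clause is defined, Correctness of $lens_{BC}$ applied to $b'=put_{BC}(b,c',ls_{bc'})$ gives $get_{BC}\,b'=(c',\ldots)$, and Correctness of $lens_{AB}$ applied to $a'=put_{AB}(a,b',ls_{ab'})$ gives $get_{AB}\,a'=(b',\ldots)$; plugging both into $get_{AC}$ yields $get_{AC}\,a'=(c',\,ls_{a'b'}\cdot ls_{b'c'})$, as required.

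For Hippocraticness, feed $get_{AC}\,a=(c,\,ls_{ab}\cdot ls_{bc})$ into $put_{AC}$. The first clause recomputes the same $b$ and $ls_{ab}$, and the reconstructed link unfolds to $ls_{bc'}=ls_{ab}^{\circ}\cdot ls_{ab}\cdot ls_{bc}$. The key step is a lemma that $get$ emits a clean correspondence, so that $ls_{ab}^{\circ}\cdot ls_{ab}$ acts as the identity on $\ldom(ls_{bc})$ (concretely, $\ldom(ls_{bc})\subseteq\rdom(ls_{ab})$ and distinct intermediate regions occurring in $ls_{bc}$ do not share a source preimage in $ls_{ab}$). This collapses $ls_{bc'}$ to $ls_{bc}$, so Hippocraticness of $lens_{BC}$ gives $b'=b$ and $ls_{b'c'}=ls_{bc}$; the same collapse reduces $ls_{ab'}=ls_{ab}\cdot ls_{bc}\cdot ls_{bc}^{\circ}$ to $ls_{ab}$, and Hippocraticness of $lens_{AB}$ gives $a'=a$.

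Retentiveness carries the real content. By Correctness the regenerated links are $ls_{a'c'}=ls_{a'b'}\cdot ls_{b'c'}$, so the goal is $\mathit{fst}\cdot ls_{ac'}\subseteq\mathit{fst}\cdot(ls_{a'b'}\cdot ls_{b'c'})$. Retentiveness of $lens_{AB}$ gives $\mathit{fst}\cdot ls_{ab'}\subseteq\mathit{fst}\cdot ls_{a'b'}$, so by monotonicity it suffices to prove $\mathit{fst}\cdot ls_{ac'}\subseteq\mathit{fst}\cdot(ls_{ab'}\cdot ls_{b'c'})$; substituting $ls_{ab'}=ls_{ac'}\cdot ls_{b'c'}^{\circ}$ this becomes
\[ \mathit{fst}\cdot ls_{ac'}\ \subseteq\ \mathit{fst}\cdot ls_{ac'}\cdot(ls_{b'c'}^{\circ}\cdot ls_{b'c'}), \]
which holds as soon as $\rdom(ls_{ac'})\subseteq\rdom(ls_{b'c'})$, since then every view region $r$ of $ls_{ac'}$ satisfies $(r,r)\in ls_{b'c'}^{\circ}\cdot ls_{b'c'}$. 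Routing through $B$, Retentiveness of $lens_{BC}$ gives $\rdom(ls_{bc'})\subseteq\rdom(ls_{b'c'})$, and $ls_{bc'}=ls_{ab}^{\circ}\cdot ls_{ac'}$ gives $\rdom(ls_{bc'})=\{\,r\mid\exists r_a\in\ldom(ls_{ab}).\ (r_a,r)\in ls_{ac'}\,\}$.

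The main obstacle is to close this last inclusion: it requires that every view region of $ls_{ac'}$ have a source preimage inside $\ldom(ls_{ab})$, i.e.\ a coherence condition $\ldom(ls_{ac'})\subseteq\ldom(ls_{ab})$ relating the source regions named by the input links to those exposed by $get_{AB}\,a$. This is precisely why the intermediate links are defined by pulling back along $ls_{ab}$ (step \numcircledmod{3} of \autoref{fig:rlensComp}): the pullback transports the input correspondences down to $B$ so that $lens_{BC}$'s own Retentiveness can regenerate them in $ls_{b'c'}$. I expect this coherence to follow from validity of the input links (\autoref{def:validLinks}) together with the clean-correspondence shape of $get$-produced links already used for Hippocraticness, after restricting $\dom(put_{AC})$ to inputs coherent with $get_{AB}\,a$ if needed. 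With that bookkeeping in place, the pointwise reading of the final inclusion is exactly the triangular guarantee threaded through the two component triangles, which is what the consistency links of step \numcircledmod{8} of \autoref{fig:rlensComp} make explicit.
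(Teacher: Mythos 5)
Your proposal follows essentially the same route as the paper's proof (\autoref{prf:retPrsv}): Correctness and Hippocraticness by unfolding the composite and appealing to the component laws, and Retentiveness by inserting $\mathit{id}_{\rdom(ls_{ac'})} \subseteq ls_{b'c'}^{\circ}\cdot ls_{b'c'}$, regrouping to expose $ls_{ab'}$, applying Retentiveness of $lens_{AB}$, and discharging the side condition $\rdom(ls_{ac'})\subseteq\rdom(ls_{b'c'})$ by routing through Retentiveness of $lens_{BC}$. The coherence condition you flag as the remaining obstacle, $\ldom(ls_{ac'})\subseteq\ldom(ls_{ab})$ (and its analogue, the collapse of $ls_{ab}^{\circ}\cdot ls_{ab}$ in the Hippocraticness case), is precisely where the paper's own argument is loose---its sub-proof asserts $\rdom(ls_{bc'})=\rdom(ls_{ac'})$ where the cited steps yield only one inclusion---so you have correctly identified, rather than introduced, the gap.
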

The proof is available in the appendix (\autoref{prf:retPrsv}).

\section{A DSL for Retentive Bidirectional Tree Transformations}
\label{sec:retentiveDSL}

The definition of retentive lenses is somewhat complex, but we can ease the task of constructing retentive lenses with a declarative domain-specific language.
Our DSL is designed to describe consistency relations between algebraic data types, and from each consistency relation defined in the DSL, we can obtain a pair of \ensuremath{\Varid{get}} and \ensuremath{\Varid{put}} functions forming a retentive lens.
Below we will give an overview of the DSL and how retentive lenses are derived from programs in the DSL using the arithmetic expression example (\autoref{sec:dslByExamples}), the syntax (\autoref{sec:syntax}) and semantics (\autoref{sec:DSLSem}) of the DSL, and finally the theorem stating that the generated lenses satisfy the required laws (\autoref{thm:mainthm}).
Due to limited space, we can only provide the proof of the theorem in the appendix~(\autoref{app:proof}), but the essence is given in the last part of \autoref{sec:dslByExamples}.
Also some more programming examples other than syntax tree synchronisation can be found in the appendix (\autoref{sec:programmingExamples}).

\subsection{Overview of the DSL}
\label{sec:dslByExamples}

\begin{figure}[t]
\begin{small}
\begin{minipage}{0.45\textwidth}
\begin{centerTab}
\begin{lstlisting}
Expr <---> Arith
  Plus   _  x  y  ~  Add  x  y
  Minus  _  x  y  ~  Sub  x  y
  FromT  _  t     ~  t
\end{lstlisting}
\end{centerTab}%
\end{minipage}
\begin{minipage}[t]{0.45\textwidth}
\begin{centerTab}
\begin{lstlisting}
Term <---> Arith
  Lit    _  i  ~  Num  i
  Neg    _  r  ~  Sub  (Num 0)  r
  Paren  _  e  ~  e
\end{lstlisting}
\end{centerTab}%
\end{minipage}
\end{small}
\caption{The program in our DSL for synchronising data types defined in \autoref{fig:runningExpDataTypeDef}.}
\label{fig:dsl_example}
\end{figure}

Recall the arithmetic expression example (\autoref{fig:runningExpDataTypeDef}).
In our DSL, we define data types in \scName{Haskell} syntax and describe consistency relations between them that bear some similarity to \ensuremath{\Varid{get}} functions.
For example, the data type definitions for \lstinline{Expr} and \lstinline{Term} written in our DSL remain the same as those in \autoref{fig:runningExpDataTypeDef}, and the consistency relations between them (i.e.~\lstinline{getE} and \lstinline{getT} in \autoref{fig:runningExpDataTypeDef}) are expressed as the ones in \autoref{fig:dsl_example}.
Here we specify two consistency relations similar to \lstinline{getE} and \lstinline{getT}: one between \lstinline{Expr} and \lstinline{Arith}, and the other between \lstinline{Term} and \lstinline{Arith}.
Each consistency relation is further defined by a set of \emph{inductive rules}, stating that if the subtrees matched by the same variable appearing on the left-hand side (i.e.~source side) and right-hand side (i.e.~view side) are consistent, then the larger pair of trees constructed from these subtrees are also consistent.
Take
\begin{centerTab}
\begin{lstlisting}
Plus _ x y  ~  Add x y
\end{lstlisting}
\end{centerTab}%
for example:
it means that if \ensuremath{\Varid{x}_{\Varid{s}}} is consistent with \ensuremath{\Varid{x}_{\Varid{v}}}, and \ensuremath{\Varid{y}_{\Varid{s}}} is consistent with \ensuremath{\Varid{y}_{\Varid{v}}}, then \lstinline{Plus}~\ensuremath{\Varid{a}\;\Varid{x}_{\Varid{s}}\;\Varid{y}_{\Varid{s}}} and \lstinline{Add}~\ensuremath{\Varid{x}_{\Varid{v}}\;\Varid{y}_{\Varid{v}}} are consistent for any value~\ensuremath{\Varid{a}}, where \ensuremath{\Varid{a}}~corresponds to the `don't-care' wildcard in \lstinline{Plus _ x y}. So the meaning of \lstinline{Plus _ x y ~ Add x y} can be better understood as the following proof rule:
\[ \frac{\ensuremath{\Varid{x}_{\Varid{s}}\mathrel\sim\Varid{x}_{\Varid{v}}} \quad \ensuremath{\Varid{y}_{\Varid{s}}\mathrel\sim\Varid{y}_{\Varid{v}}}}
        {\text{\lstinline{Plus}\;\ensuremath{\Varid{a}\;\Varid{x}_{\Varid{s}}\;\Varid{y}_{\Varid{s}}\mathrel\sim} \lstinline{Add}\;\ensuremath{\Varid{x}_{\Varid{v}}\;\Varid{y}_{\Varid{v}}}}} \]

Each consistency relation is translated to a pair of \ensuremath{\Varid{get}} and \ensuremath{\Varid{put}} functions defined by case analysis generated from the inductive rules.
Detail of the translation will be given in \autoref{sec:DSLSem}, but the idea behind the translation is a fairly simple one which establishes Retentiveness by construction.
For \ensuremath{\Varid{get}}, the rules themselves are already close to function definitions by pattern matching, so what we need to add is only the computation of output links.
For \ensuremath{\Varid{put}}, we use the rules backwards and define a function that turns the regions of an input view into the regions of the new source, reusing regions of the old source wherever required: when there is an input link connected to the current view region, \ensuremath{\Varid{put}} grabs the source region at the other end of the link in the old source; otherwise, \ensuremath{\Varid{put}} creates a new source region as described by the left-hand side of an appropriate rule.

For example, suppose that the \ensuremath{\Varid{get}} and \ensuremath{\Varid{put}} functions generated from the consistency relation \lstinline{Expr <---> Arith} are named \lstinline{getEA} and \lstinline{putEA} respectively.
The inductive rule \lstinline{Plus _ x y ~ Add x y} generates the definition for \lstinline{getEA s} when \lstinline{s} matches \lstinline{Plus _ x y}:
\lstinline{getEA (Plus _ x y)} computes a view recursively in the same way as \lstinline{getE} in \autoref{fig:runningExpDataTypeDef}; furthermore, it produces a new link between the top regions \lstinline{Plus} and \lstinline{Add}, and keeps the links produced by the recursive calls \lstinline{getEA x} and \lstinline{getEA y}.
In the \ensuremath{\Varid{put}} direction, the inductive rule \lstinline{Plus _ x y ~ Add x y} leads to a case \lstinline{putEA s (Add x y) ls}, under which there are two subcases:
if there is any link in \lstinline{ls} that is connected to the \lstinline{Add} region at the top of the view, \lstinline{putEA} grabs the region at the other end of the link in the old source and tries to use it as the top part of the new source;
if such a link does not exist, \lstinline{putEA} uses a \lstinline{Plus} with a default annotation as a substitute for the top part of the new source.
In either case, the subtrees of the new source at the positions marked by \lstinline{x} and \lstinline{y} are computed recursively from the view subtrees \lstinline{x} and \lstinline{y}.

While the core idea is simple, there are cases in which the translated functions do not constitute valid retentive lenses, and the crux of \autoref{thm:mainthm} is finding suitable ways of computation or reasonable conditions to circumvent all such cases (some of which are rather subtle).
The following cases should give a good idea of what is involved in the correctness of the theorem.
\begin{enumerate}[label=\Roman*.,leftmargin=2em]
\item \emph{The translated functions may not be well-defined.}
For example, in the \ensuremath{\Varid{get}} direction, an arbitrary set of rules may assign zero or more than one view to a source, making \ensuremath{\Varid{get}} partial (which, though allowed by the definition, we want to avoid) or ill-defined, and we will impose (fairly standard) restrictions on patterns to preclude such rules.
These restrictions are sufficient to guarantee that exactly one rule is applicable in the \ensuremath{\Varid{get}} direction but not in the \ensuremath{\Varid{put}} direction, in which we need to carefully choose a rule among the applicable ones or risk non-termination (e.g.~producing an infinite number of parentheses by alternating between the \lstinline{Paren} and \lstinline{FromT} rules).
\item \emph{A region grabbed by \ensuremath{\Varid{put}} from the old source may not have the right type.}
For example, if \ensuremath{\Varid{put}} is run on \lstinline{cst}, \lstinline{ast'}, and the link between them in \autoref{fig:swapAndPut}, it has to grab the source region \lstinline{Reg "a neg" _}\,, which has type \lstinline{Term}, and install it as the second argument of \lstinline{Plus}, which has to be of type \lstinline{Expr}.
In this case there is a way out since we can convert a \lstinline{Term} to an \lstinline{Expr} by wrapping the \lstinline{Term} in the \lstinline{FromT} constructor.
We will formulate conditions under which such conversions are needed and can be synthesised automatically.
\item \emph{Hippocraticness may be accidentally invalidated by \ensuremath{\Varid{put}}.}
Suppose that there is another parenthesis constructor \lstinline{Brac} that has the same type as \lstinline{Paren} and for which a similar rule \lstinline{Brac _ e ~ e} is supplied.
Given a source that starts with \lstinline{Brac "" (Paren "" ...)}, \ensuremath{\Varid{get}} will produce two links (among others) relating both the \lstinline{Brac} and \lstinline{Paren} regions with the empty region at the top of the view.
If \ensuremath{\Varid{put}} is immediately invoked on the same source, view, and links, it may choose to process the link attached to the \lstinline{Paren} region first rather than the one attached to the \lstinline{Brac} region, so that the new source starts with \lstinline{Paren "" (Brac "" ...)}, invalidating Hippocraticness.
Therefore \ensuremath{\Varid{put}} has to carefully process the links in the right order for Hippocraticness to hold.
\item \emph{Retentiveness may be invalidated if \ensuremath{\Varid{put}} does not correctly reject invalid input links.}
Unlike \ensuremath{\Varid{get}}, which can easily be made total, \ensuremath{\Varid{put}} is inherently partial since input links may well be invalid and make Retentiveness impossible to hold.
For example, if there is an input link relating a \lstinline{Neg} region and an \lstinline{Add} region, then it is impossible for \ensuremath{\Varid{put}} to produce a result that satisfies Retentiveness since \ensuremath{\Varid{get}} does not produce a link of this form.
Instead, \ensuremath{\Varid{put}} must correctly reject invalid links for Retentiveness to hold.
Apart from checking that input links have the right forms as specified by the rules, there are more subtle cases where the view regions referred to by a set of input links are overlapping---for example, in a view starting with \lstinline{Sub (Num 0) ...} there can be links referring to both the \lstinline{Sub _ _} region and the \lstinline{Sub (Num 0) _} region at the top.
Our \ensuremath{\Varid{get}} cannot produce overlapping view regions, and therefore such input links must be detected and rejected as well.
\end{enumerate}

In the rest of this section we will describe the DSL in more detail.

\subsection{Syntax}
\label{sec:syntax}

\begin{figure}[t]
\centerline{
\framebox[10cm]{\vbox{\hsize=14cm
\[
\begin{array}{llllllll}
\colorbox{light-gray}{\textsf{Program}} \\
\bb
  \ensuremath{\Conid{Prog}} &\Coloneqq& \ensuremath{\Conid{TypeDef}}^* \ \ensuremath{\Conid{RelDef}}^+ \\
\ee \\[0.5em]
\colorbox{light-gray}{\textsf{Type Definition}} \\
\bb
  \ensuremath{\Conid{TypeDef}}  &\Coloneqq& \texttt{data} \ \ensuremath{\Conid{Type}}\ \texttt{=}\ \ensuremath{\Conid{Con}} \ \ensuremath{\Conid{Type}}^* \ \{ \texttt{|}\ \ensuremath{\Conid{Con}}\ \ensuremath{\Conid{Type}}^* \}^* \\
\ee \\[0.5em]
\colorbox{light-gray}{\textsf{Consistency Relation Definition}} \\
\bb
  \ensuremath{\Conid{RelDef}} &\Coloneqq& \ensuremath{\Conid{Type}}_s \longleftrightarrow \ensuremath{\Conid{Type}}_v\ \ensuremath{\Conid{Rule}}^+
\ee \\[0.5em]
\colorbox{light-gray}{\textsf{Inductive Rule}}\\
\bb
   \ensuremath{\Conid{Rule}}  &\Coloneqq& \ensuremath{\Conid{Pat}}_s \ \texttt{\textasciitilde}\ \ensuremath{\Conid{Pat}}_v \\
\ee \\[0.5em]
\colorbox{light-gray}{\textsf{Pattern}}\\
\bb
  \ensuremath{\Conid{Pat}} &\Coloneqq& \texttt{\_}  \quad|\quad  \ensuremath{\Conid{Var}}  \quad|\quad  \ensuremath{\Conid{Con}}\; \ensuremath{\Conid{Pat}}  \\
\ee
\end{array}
\]
}}}
\caption[Syntax of the DSL.]{Syntax of the DSL.}
\label{fig:dsl_syntax}
\end{figure}

The syntax of our DSL is summarised in \autoref{fig:dsl_syntax}, where nonterminals are in \textit{italic}; terminals are typeset in \texttt{typewriter} font; $\{ \}$ is for grouping; $^?$, $^*$, and~$^+$ represent zero-or-one occurrence, zero-or-more occurrence, and one-or-more occurrence respectively, and \ensuremath{\Conid{Type}}, \ensuremath{\Conid{Con}}, and \ensuremath{\Conid{Var}} are syntactic categories (whose definitions are omitted) for the names of types, constructors, and variables respectively.
We sometimes additionally attach a subscript $s$ or $v$ to a symbol to mean that the symbol is related to sources or views.
A program consists of two parts: data types definitions and consistency relations between these data types.
We adopt the \scName{Haskell} syntax for data type definitions---a data type is defined by specifying a set of data constructors and their argument types.
As for the definitions of consistency relations, each of them starts with \ensuremath{\Conid{Type}_{\Varid{s}}\leftrightarrow\Conid{Type}_{\Varid{v}}}, declaring the source and view types for the relation.
The body of each consistency relation is a list of inductive rules, each of which defined by a pair of source and view patterns \ensuremath{\Conid{Pat}_{\Varid{s}}\mathrel\sim\Conid{Pat}_{\Varid{v}}}, where a pattern can include wildcards, variables, and constructors.

\subsubsection{Syntactic Restrictions}
\label{sec:synres}

We impose some syntactic restrictions to guarantee that programs in our DSL indeed give rise to retentive lenses (\autoref{thm:mainthm}).

On \emph{patterns}, we require (i)~pattern coverage:
for any consistency relation $\ensuremath{\Conid{S}\leftrightarrow\Conid{V}} = \{\, p_i \sim q_i \mid 1 \leq i$ $\leq n \,\}$ defined in a program, $\myset{p_i}$ should cover all possible cases of type \ensuremath{\Conid{S}}, and  $\myset{q_i}$ should cover all cases of type \ensuremath{\Conid{V}}.
We also require (ii)~source pattern disjointness:
any distinct $p_i$ and $p_j$ should not be matched by the same tree.
Finally, (iii)~a bare variable pattern is not allowed on the source side (e.g.~\ensuremath{\Varid{x}\mathrel\sim\Conid{D}\;\Varid{x}}), and (iv) wildcards are not allowed on the view side (e.g.~\ensuremath{\Conid{C}\;\Varid{x}\mathrel\sim\Conid{D}\;\anonymous \;\Varid{x}}), and (v)~the source side and the view side must use exactly the same set of variables.
These conditions ensure that \ensuremath{\Varid{get}} is total and well-defined (ruling out Case~I in \autoref{sec:dslByExamples}).%

To state the next requirement we need a definition: two data types \ensuremath{\Conid{S}_{\mathrm{1}}} and \ensuremath{\Conid{S}_{\mathrm{2}}} defined in a program are \emph{interchangeable} in data type \ensuremath{\Conid{S}} exactly when (i) there are some data type \ensuremath{\Conid{V'}} and \ensuremath{\Conid{V}} for which consistency relations \ensuremath{\Conid{S}_{\mathrm{1}}\leftrightarrow\Conid{V'}}, \ensuremath{\Conid{S}_{\mathrm{2}}\leftrightarrow\Conid{V'}} and \ensuremath{\Conid{S}\leftrightarrow\Conid{V}} are defined in the program, and (ii) \ensuremath{\Conid{S}} may have subterms of type \ensuremath{\Conid{S}_{\mathrm{1}}} and \ensuremath{\Conid{S}_{\mathrm{2}}}, and \ensuremath{\Conid{V}} may have subterms of type \ensuremath{\Conid{V'}}.
%
%
%
If \ensuremath{\Conid{S}_{\mathrm{1}}} and \ensuremath{\Conid{S}_{\mathrm{2}}} are interchangeable, then Case~II (\autoref{sec:dslByExamples}) may happen:
when doing \ensuremath{\Varid{put}} on \ensuremath{\Conid{S}} and \ensuremath{\Conid{V}} there might be input links dictating that values of type~\ensuremath{\Conid{S}_{\mathrm{2}}} should be retained in a context where values of type~\ensuremath{\Conid{S}_{\mathrm{1}}} are expected, or vice versa.
When this happens, we need two-way conversions between \ensuremath{\Conid{S}_{\mathrm{1}}}~and~\ensuremath{\Conid{S}_{\mathrm{2}}}.
%

We choose a simple way to ensure the existence of conversions:%
for any interchangeable types \ensuremath{\Conid{S}_{\mathrm{1}}} and \ensuremath{\Conid{S}_{\mathrm{2}}} with \ensuremath{\Conid{S}_{\mathrm{1}}\leftrightarrow\Conid{V'}} and \ensuremath{\Conid{S}_{\mathrm{2}}\leftrightarrow\Conid{V'}} defined, we require that there exists a sequence of data types in the program
\[ S_1 = T_1,\ T_2,\ \cdots,\ T_{n-1},\ T_n = S_2\]
with $n >= 2$ such that for any $1 \leq i < n$, consistency relation \ensuremath{\Conid{T}_{\Varid{i}}\leftrightarrow\Conid{V'}} is defined and has a rule \ensuremath{\Conid{Pat}_{\Varid{i}}\mathrel\sim\Varid{x}} whose source pattern \ensuremath{\Conid{Pat}_{\Varid{i}}} contains exactly one variable, and its type in \ensuremath{\Conid{Pat}_{\Varid{i}}} is $T_{i+1}$ (we also require such a sequence with the roles of \ensuremath{\Conid{S}_{\mathrm{1}}} and \ensuremath{\Conid{S}_{\mathrm{2}}} switched).
With rule \ensuremath{\Conid{Pat}_{\Varid{i}}\mathrel\sim\Varid{x}}, we immediately get a function $t_i : T_{i+1} \rightarrow T_{i}$ contructing a \ensuremath{\Conid{T}_{\Varid{i}}} from a term \ensuremath{\Varid{v}} of $\ensuremath{\Conid{T}}_{i+1}$ by substituting \ensuremath{\Varid{v}} for \ensuremath{\Varid{x}} in \ensuremath{\Conid{Pat}_{\Varid{i}}} (and filling wildcard positions with default values).
Then we have the needed conversion function:
\begin{equation}\label{equ:inj}
\ensuremath{\Varid{inj}}_{{\ensuremath{\Conid{S}_{\mathrm{2}}}} \rightarrow {\ensuremath{\Conid{S}_{\mathrm{1}}}}@\ensuremath{\Conid{V'}}} = t_{n-1} \circ \cdots \circ t_2 \circ t_1 
\end{equation}
(and similary $\ensuremath{\Varid{inj}}_{{\ensuremath{\Conid{S}_{\mathrm{1}}}} \rightarrow {\ensuremath{\Conid{S}_{\mathrm{2}}}}@\ensuremath{\Conid{V'}}}$).
For example, \lstinline{FromT _ t ~ t} gives rise to a function 
\[\text{\lstinline{inj}}_{\text{\lstinline{Term}} \rightarrow \text{\lstinline{Expr}}@\text{\lstinline{Arith}}}\ x \;=\; \text{\lstinline{FromT ""}}\; x\]
and it can be used to convert \lstinline{Term} to \lstinline{Expr} whenever needed when doing \ensuremath{\Varid{put}} with view type \lstinline{Arith}.

\subsection{Semantics}
\label{sec:DSLSem}

We give the semantics of our DSL in terms of a translation into `pseudo-\scName{Haskell}', where we may replace chunks of \scName{Haskell} code with natural language descriptions to improve readability.
As in \autoref{sec:retentive-lens-definition}, let \ensuremath{\Conid{Tree}} be the set of values of any algebraic data type, and \ensuremath{\Conid{Pattern}} the set of all patterns.
For a pattern $\ensuremath{\Varid{p}} \in \ensuremath{\Conid{Pattern}}$, \ensuremath{\Conid{Vars}\;\Varid{p}} denotes the set of variables in $p$.
For each $\ensuremath{\Varid{v}} \in \ensuremath{\Conid{Vars}\;\Varid{p}}$, \ensuremath{\Conid{TypeOf}\;(\Varid{p},\Varid{v})} is (the set of all values of) the type of $\ensuremath{\Varid{v}}$ in pattern \ensuremath{\Varid{p}}, and \ensuremath{\Varid{path}\;(\Varid{p},\Varid{v})} is the path of variable $\ensuremath{\Varid{v}}$ in pattern $\ensuremath{\Varid{p}}$.
We use the following functions (two of which are dependently typed) to manipulate patterns:
\begin{align*}
\ensuremath{\Varid{isMatch}} &: \phantom{(\ensuremath{\Varid{p}} \in {}} \ensuremath{\Conid{Pattern}} \phantom{)} \times \ensuremath{\Conid{Tree}} \rightarrow \ensuremath{\Conid{Bool}} \\
\ensuremath{\Varid{decompose}} &: (\ensuremath{\Varid{p}} \in \ensuremath{\Conid{Pattern}}) \times \ensuremath{\Conid{Tree}} \pfun \big(\ensuremath{\Conid{Vars}\;\Varid{p}} \rightarrow \ensuremath{\Conid{Tree}}\big) \\
\ensuremath{\Varid{reconstruct}} &: (\ensuremath{\Varid{p}} \in \ensuremath{\Conid{Pattern}}) \times \big(\ensuremath{\Conid{Vars}\;\Varid{p}} \rightarrow \ensuremath{\Conid{Tree}}\big) \pfun \ensuremath{\Conid{Tree}} \\
\ensuremath{\Varid{fillWildcards}} &: \phantom{(\ensuremath{\Varid{p}} \in {}} \ensuremath{\Conid{Pattern}} \phantom{)} \times \ensuremath{\Conid{Tree}} \pfun \ensuremath{\Conid{Pattern}} \\
\ensuremath{\Varid{fillWildcardsWD}} &: \phantom{(\ensuremath{\Varid{p}} \in {}} \ensuremath{\Conid{Pattern}} \phantom{)} \rightarrow \ensuremath{\Conid{Pattern}} \\
\ensuremath{\Varid{eraseVars}} &: \phantom{(\ensuremath{\Varid{p}} \in {}} \ensuremath{\Conid{Pattern}} \phantom{)} \rightarrow \ensuremath{\Conid{Pattern}} ~\text.
\end{align*}
Given a pattern~\ensuremath{\Varid{p}} and a tree~\ensuremath{\Varid{t}}, \ensuremath{\Varid{isMatch}\;(\Varid{p},\Varid{t})} tests whether \ensuremath{\Varid{t}}~matches~\ensuremath{\Varid{p}}.
If the match succeeds, \ensuremath{\Varid{decompose}\;(\Varid{p},\Varid{t})} returns a function mapping every variable in \ensuremath{\Varid{p}} to its corresponding matched subtree of~\ensuremath{\Varid{t}}.
Conversely, \ensuremath{\Varid{reconstruct}\;(\Varid{p},\Varid{f})} produces a tree matching~\ensuremath{\Varid{p}} by replacing every occurrence of $\ensuremath{\Varid{v}} \in \ensuremath{\Conid{Vars}\;\Varid{p}}$ in \ensuremath{\Varid{p}} with \ensuremath{\Varid{f}\;\Varid{v}}, provided that \ensuremath{\Varid{p}}~does not contain any wildcard.
To remove wildcards, we can use \ensuremath{\Varid{fillWildcards}\;(\Varid{p},\Varid{t})} to replace all the wildcards in~\ensuremath{\Varid{p}} with the corresponding subtrees of~\ensuremath{\Varid{t}} (coerced into patterns) when \ensuremath{\Varid{t}}~matches~\ensuremath{\Varid{p}}, or use \ensuremath{\Varid{fillWildcardsWD}} to replace all the wildcards with the default values of their types.
Finally, \ensuremath{\Varid{eraseVars}\;\Varid{p}} replaces all the variables in~\ensuremath{\Varid{p}} with wildcards.
The definitions of these functions are straightforward and omitted here.%

\subsubsection{Get Semantics}
\label{sec:getsem}
For a consistency relation $\ensuremath{\Conid{S}} \leftrightarrow \ensuremath{\Conid{V}}$ defined in our DSL with a set of inductive rules $R = \{\, \ensuremath{\Varid{spat}_{\Varid{k}}} \sim \ensuremath{\Varid{vpat}_{\Varid{k}}} \mid 1 \leq \ensuremath{\Varid{k}} \leq n\,\}$, its corresponding $\ensuremath{\Varid{get}}_{\ensuremath{\Conid{SV}}}$ function has the following type:
\begin{align*}
\ensuremath{\Varid{get}}_{\ensuremath{\Conid{SV}}} : \ensuremath{\Conid{S}} \rightarrow  \ensuremath{\Conid{V}} \times \ensuremath{\Conid{Links}}
\end{align*}
The idea of computing \ensuremath{\Varid{get}\;\Varid{s}} is to use a rule $\ensuremath{\Varid{spat}_{\Varid{k}}} \sim \ensuremath{\Varid{vpat}}_k \in R$ such that~$s$ matches \ensuremath{\Varid{spat}_{\Varid{k}}}---the restrictions on patterns imply that such a rule uniquely exists for all $s$---to generate the top portion of the view with \ensuremath{\Varid{vpat}_{\Varid{k}}}, and then recursively generate subtrees for all variables in \ensuremath{\Varid{spat}_{\Varid{k}}}.
The \ensuremath{\Varid{get}} function also creates links in the recursive procedure: when a rule $\ensuremath{\Varid{spat}_{\Varid{k}}} \sim \ensuremath{\Varid{vpat}}_k \in R$ is used, it creates a link relating the matched parts/regions in the source and view, and extends the paths in the recursively computed links between the subtrees.
In all, the \ensuremath{\Varid{get}} function defined by $R$ is:
\begin{align*}
&\ensuremath{\Varid{get}}_\ensuremath{\Conid{SV}}~s = (\ensuremath{\Varid{reconstruct}}\;(\ensuremath{\Varid{vpat}_{\Varid{k}}}, \ensuremath{\Varid{fst}} \circ \ensuremath{\Varid{vls}}),\; l_{\mathit{root}} \cup \ensuremath{\Varid{links}}) \numberthis \label{equ:get} \\
&\mywhere \text{find } k \text{ such that } \ensuremath{\Varid{spat}_{\Varid{k}}} \sim \ensuremath{\Varid{vpat}}_k \in R \text{ and } \ensuremath{\Varid{isMatch}}(\ensuremath{\Varid{spat}_{\Varid{k}}}, \ensuremath{\Varid{s}}) \\
&\myspace \ensuremath{\Varid{vls}} = (\ensuremath{\Varid{get}} \circ \ensuremath{\Varid{decompose}\;(\Varid{spat}_{\Varid{k}},\Varid{s})}) \in \ensuremath{\Conid{Vars}\;\Varid{spat}_{\Varid{k}}} \rightarrow \ensuremath{\Conid{V}} \times \ensuremath{\Conid{Links}} \\
&\myspace \ensuremath{\Varid{spat'}} = \ensuremath{\Varid{eraseVars}\;(\Varid{fillWildcards}\;(\Varid{spat}_{\Varid{k}},\Varid{s}))} \\
&\myspace l_{\mathit{root}} = \myset{((\ensuremath{\Varid{spat'}} , \ensuremath{[\mskip1.5mu \mskip1.5mu]}) , (\ensuremath{\Varid{eraseVars}\;\Varid{vpat}_{\Varid{k}}}, [\,])) } \\
&\myspace \ensuremath{\Varid{links}} = \big\{\, ((\ensuremath{\Varid{spat}}, \ensuremath{\Varid{path}\;(\Varid{spat}_{\Varid{k}},\Varid{v})\plus \Varid{spath}}), (\ensuremath{\Varid{vpat}}, \ensuremath{\Varid{path}\;(\Varid{vpat}_{\Varid{k}},\Varid{v})\plus \Varid{vpath}})) \\
&\myspace \myindentS\myindentS \mid \ensuremath{\Varid{v}} \in \ensuremath{\Conid{Vars}\;\Varid{vpat}_{\Varid{k}}}, \ensuremath{((\Varid{spat},\Varid{spath}),(\Varid{vpat},\Varid{vpath}))} \in \ensuremath{\Varid{snd}\;(\Varid{vls}\;\Varid{v})} \,\big\} \ \text{.}
\end{align*}
The auxiliary function $\ensuremath{\Varid{path}} : (\ensuremath{\Varid{p}} \in \ensuremath{\Conid{Pattern}}) \times \ensuremath{\Conid{Vars}\;\Varid{p}} \rightarrow \ensuremath{\Conid{Path}}$ returns the path from the root of a pattern to one of its variables, and \ensuremath{(\plus )} is path concatenation.
While the recursive call is written as $\ensuremath{\Varid{get}} \circ \ensuremath{\Varid{decompose}\;(\Varid{spat}_{\Varid{k}},\Varid{s})}$ in the definition above, to be precise, \ensuremath{\Varid{get}} should have different subscripts \ensuremath{\Conid{TypeOf}\;(\Varid{spat}_{\Varid{k}},\Varid{v})} and \ensuremath{\Conid{TypeOf}\;(\Varid{vpat}_{\Varid{k}},\Varid{v})} for different $\ensuremath{\Varid{v}} \in \ensuremath{\Conid{Vars}\;\Varid{spat}_{\Varid{k}}}$.

\subsubsection{Put Semantics}\label{sec:putsem}
For a consistency relation $\ensuremath{\Conid{S}} \leftrightarrow \ensuremath{\Conid{V}}$ defined in our DSL as $R = \{\, \ensuremath{\Varid{spat}_{\Varid{k}}} \sim \ensuremath{\Varid{vpat}_{\Varid{k}}} \mid 1 \leq \ensuremath{\Varid{k}} \leq n\,\}$, its corresponding $\ensuremath{\Varid{put}}_{\ensuremath{\Conid{SV}}}$ function has the following type:
\[\ensuremath{\Varid{put}}_{\ensuremath{\Conid{SV}}} : \ensuremath{\Conid{Tree}} \times \ensuremath{\Conid{V}} \times \ensuremath{\Conid{Links}} \pfun \ensuremath{\Conid{S}} \ \text{.} \]
The source argument of \ensuremath{\Varid{put}} is given the generic type \ensuremath{\Conid{Tree}} since the type of the old source may be different from the type of the result that \ensuremath{\Varid{put}} is supposed to produce.
Given arguments $(\ensuremath{\Varid{s}}, \ensuremath{\Varid{v}}, \ensuremath{\Varid{ls}})$, \ensuremath{\Varid{put}} is defined by two cases depending on whether the root of the view is within a region referred to by the input links, i.e.~whether there is some $(\ensuremath{\anonymous }\, , (\ensuremath{\anonymous }\, , \ensuremath{[\mskip1.5mu \mskip1.5mu]})) \in \ensuremath{\Varid{ls}}$.

\begin{itemize}
\item {
  In the first case where the root of the view is not within any region of the input links, \ensuremath{\Varid{put}} selects a rule $\ensuremath{\Varid{spat}_{\Varid{k}}} \sim \ensuremath{\Varid{vpat}_{\Varid{k}}} \in R$ whose \ensuremath{\Varid{vpat}_{\Varid{k}}} matches \ensuremath{\Varid{v}}---our restriction on view patterns implies that at least one such rule exists for all \ensuremath{\Varid{v}}---and uses \ensuremath{\Varid{spat}_{\Varid{k}}} to build the top portion of the new source:
  wildcards in \ensuremath{\Varid{spat}_{\Varid{k}}} are filled with default values and variables in \ensuremath{\Varid{spat}_{\Varid{k}}} are filled with trees recursively constructed from their corresponding parts of the view.
  \begin{align*}
  &\ensuremath{\Varid{put}}_{\ensuremath{\Conid{SV}}}~\ensuremath{(\Varid{s},\Varid{v},\Varid{ls})} = \ensuremath{\Varid{reconstruct}\;(\Varid{spat}_{\Varid{k}}',\Varid{ss})} \numberthis \label{equ:put1}\\
  &\mywhere \text{find } k \text{ such that } \ensuremath{\Varid{spat}_{\Varid{k}}} \sim \ensuremath{\Varid{vpat}_{\Varid{k}}} \in R \text{ and } \ensuremath{\Varid{isMatch}\;(\Varid{vpat}_{\Varid{k}},\Varid{v})} \\
    &\myspace \phantom{\text{find } k \text{ such }} \text{ and } k \text{ satisfies the extra condition below} \\
  &\myspace \ensuremath{\Varid{vs}} = \ensuremath{\Varid{decompose}\;(\Varid{vpat}_{\Varid{k}},\Varid{v})}\\
  &\myspace \ensuremath{\Varid{ss}} = \lambda\,(t \in \ensuremath{\Conid{Vars}\;\Varid{spat}_{\Varid{k}}}) \rightarrow \\
  &\myspace\myindentS \ensuremath{\Varid{put}}(\ensuremath{\Varid{s}}, \ensuremath{\Varid{vs}\;\Varid{t}}, \ensuremath{\Varid{divide}\;(\Varid{path}\;(\Varid{vpat}_{\Varid{k}},\Varid{t}))}, \ensuremath{\Varid{ls}}) \numberthis \label{equ:rec1} \\
  &\myspace \ensuremath{\Varid{spat}}'_k = \ensuremath{\Varid{fillWildcardsWD}\;\Varid{spat}_{\Varid{k}}} \\
  &\ensuremath{\Varid{divide}\;(\Varid{prefix},\Varid{ls})} = \myset{(\ensuremath{\Varid{r}}_\ensuremath{\Varid{s}}, \ensuremath{(\Varid{vpat},\Varid{vpath})}) \mid (\ensuremath{\Varid{r}}_\ensuremath{\Varid{s}}, (\ensuremath{\Varid{vpat}}, \ensuremath{\Varid{prefix}\plus \Varid{vpath}}) \in \ensuremath{\Varid{ls}})}\numberthis \label{equ:divide} 
  \end{align*}
  The omitted subscripts of \ensuremath{\Varid{put}} in (\ref{equ:rec1}) are \ensuremath{\Conid{TypeOf}\;(\Varid{spat}_{\Varid{k}},\Varid{t})} and \ensuremath{\Conid{TypeOf}\;(\Varid{vpat}_{\Varid{k}},\Varid{t})}.
  Additionally, if there is more than one rule whose view pattern matches \ensuremath{\Varid{v}}, the first rule whose view pattern is \emph{not} a bare variable pattern is preferred for avoiding infinite recursive calls: if \ensuremath{\Varid{vpat}_{\Varid{k}}\mathrel{=}\Varid{x}}, the size of the input of the recursive call in~(\ref{equ:rec1}) does not decrease because \ensuremath{\Varid{vs}\;\Varid{t}\mathrel{=}\Varid{v}} and \ensuremath{\Varid{path}\;(\Varid{t},\Varid{vpat}_{\Varid{k}})\mathrel{=}[\mskip1.5mu \mskip1.5mu]}.
    For example, when the view patterns of both \lstinline{Plus _ x y ~ Add x y} and \lstinline{FromT _ t ~ t} match a view tree, the former is preferred.
    This helps to avoid non-termination of \ensuremath{\Varid{put}} as mentioned in Case~I in \autoref{sec:dslByExamples}.
}
\item {
  In the case where the root of the view is an endpoint of some link, \ensuremath{\Varid{put}} uses the source region (pattern) of the link as the top portion of the new source.%
  \begin{align*}
  &\ensuremath{\Varid{put}}_{\ensuremath{\Conid{SV}}}~\ensuremath{(\Varid{s},\Varid{v},\Varid{ls})} = \ensuremath{\Varid{inj}}_{\ensuremath{\Conid{TypeOf}\;\Varid{spat}_{\Varid{k}}} \rightarrow \ensuremath{\Conid{S}}@V} \ensuremath{(\Varid{reconstruct}}(\ensuremath{\Varid{spat}_{\Varid{k}}'},\ensuremath{\Varid{ss}}))  \numberthis \label{equ:put2}\\
  &\mywhere l = ((\ensuremath{\Varid{spat}}, \ensuremath{\Varid{spath}}) , (\ensuremath{\Varid{vpat}}, \ensuremath{\Varid{vpath}})) \in \ensuremath{\Varid{ls}} \\
  &\myspace\myindentSS \text{ such that } \ensuremath{\Varid{vpath}} = \ensuremath{[\mskip1.5mu \mskip1.5mu]},\,\ensuremath{\Varid{spath}}\text{ is the shortest } \\
  &\myspace \text{find } k \text{ such that } \ensuremath{\Varid{spat}_{\Varid{k}}} \sim \ensuremath{\Varid{vpat}_{\Varid{k}}} \in R \text{\ \ and\ \ \ensuremath{\Varid{spat}}} \\
  &\myspace\myindentSS \text{is \ensuremath{\Varid{eraseVars}\;(\Varid{fillWildcards}\;(\Varid{spat}_{\Varid{k}},\Varid{t}))} for some } t\\
  &\myspace \ensuremath{\Varid{spat}_{\Varid{k}}'} =  \ensuremath{\Varid{fillWildcards}\;(\Varid{spat}_{\Varid{k}},\Varid{spat})} \\
  &\myspace \ensuremath{\Varid{vs}} = \ensuremath{\Varid{decompose}\;(\Varid{vpat}_{\Varid{k}},\Varid{v})}\\
  &\myspace \ensuremath{\Varid{ss}} = \lambda\,(t \in \ensuremath{\Conid{Vars}\;\Varid{spat}_{\Varid{k}}}) \rightarrow \\
  &\myspace\myindentS \ensuremath{\Varid{put}}(\ensuremath{\Varid{s}}, \ensuremath{\Varid{vs}\;\Varid{t}}, \ensuremath{\Varid{divide}\;(\Varid{path}\;(\Varid{vpat}_{\Varid{k}},\Varid{t}))}, \ensuremath{\Varid{ls}} \setminus \myset{l})) \numberthis \label{equ:rec2}
  \end{align*}
  When there is more than one source region linked to the root of the view, to avoid Case~III in \autoref{sec:dslByExamples}, \ensuremath{\Varid{put}} chooses the source region whose path is the shortest, which ensures that the preserved region patterns in the new source will have the same relative positions as those in the old source, as the following figure shows.
  \begin{center}
  \includegraphics[scale=0.65,trim={6cm 8.4cm 4.5cm 8cm},clip]{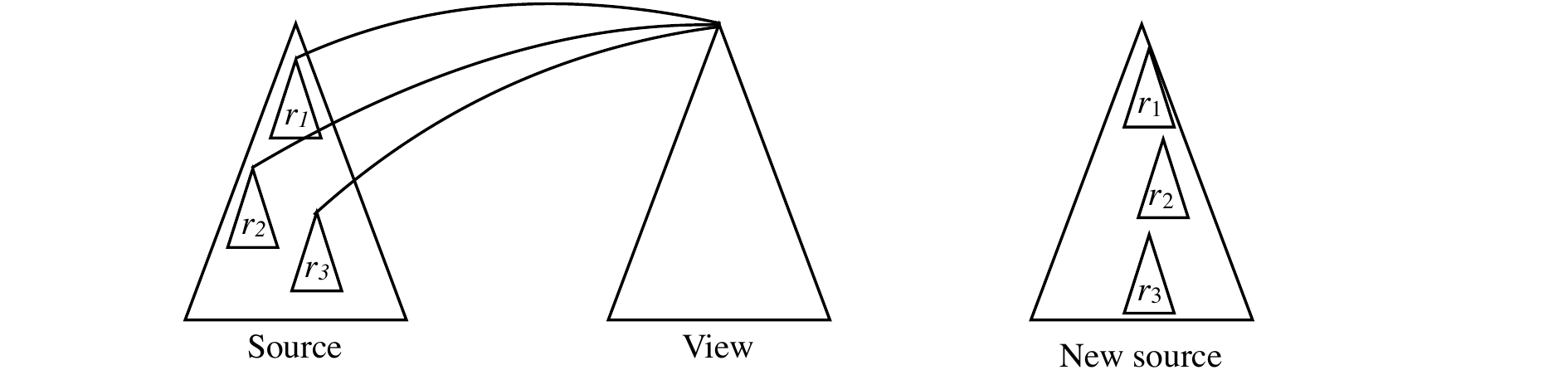}
  \end{center}
    Since the linked source region (pattern) does not necessarily have type \ensuremath{\Conid{S}}, we need to use the function $\ensuremath{\Varid{inj}}_{\ensuremath{\Conid{TypeOf}\;\Varid{spat}_{\Varid{k}}} \rightarrow \ensuremath{\Conid{S}}}@V$ (Equation \ref{equ:inj}) to convert it to type \ensuremath{\Conid{S}}; this function is available due to our requirement on interchangeable data types (see \hyperref[sec:synres]{Syntax Restrictions} in \autoref{sec:syntax}).
}

\end{itemize}

\subsubsection{Domain of \ensuremath{\Varid{put}}}
To avoid Case~IV in \autoref{sec:dslByExamples}, in the actual implementation of \ensuremath{\Varid{put}} there are runtime checks for detecting invalid input links, but these checks are omitted in the above definition of \ensuremath{\Varid{put}} for clarity.
We extract these checks into a separate function \ensuremath{\Varid{check}} below, which also serves as a decision procedure for the domain of \ensuremath{\Varid{put}}.
\begin{align*}
&\ensuremath{\Varid{check}} : \ensuremath{\Conid{Tree}} \times \ensuremath{\Conid{V}} \times \ensuremath{\Conid{Links}} \rightarrow \ensuremath{\Conid{Bool}} \\
&\ensuremath{\Varid{check}\;(\Varid{s},\Varid{v},\Varid{ls})} =
  \begin{cases}
  \ensuremath{\Varid{chkWithLink}\;(\Varid{s},\Varid{v},\Varid{ls})}  \quad & \text{if some } \ensuremath{((\anonymous ,\anonymous ),(\anonymous ,[\mskip1.5mu \mskip1.5mu]))} \in \ensuremath{\Varid{ls}}\\
  \ensuremath{\Varid{chkNoLink}\;(\Varid{s},\Varid{v},\Varid{ls})} & \text{otherwise}
  \end{cases}
\end{align*}
\ensuremath{\Varid{chkNoLink}} corresponds to the first case of \ensuremath{\Varid{put}} (\ref{equ:put1}).
\begin{align*}
&\ensuremath{\Varid{chkNoLink}\;(\Varid{s},\Varid{v},\Varid{ls})} = \ensuremath{\Varid{cond}}_1 \wedge \ensuremath{\Varid{cond}}_2 \wedge \ensuremath{\Varid{cond}}_3 \\
&\mywhere \text{find } k \text{ such that } \ensuremath{\Varid{spat}_{\Varid{k}}} \sim \ensuremath{\Varid{vpat}_{\Varid{k}}} \in R \text{ and } \ensuremath{\Varid{isMatch}\;(\Varid{vpat}_{\Varid{k}},\Varid{v})} \\
  &\myspace\myindentS \text{ and } k \text{ satisfies the same condition as in (\ref{equ:put1})} \\
&\myspace \ensuremath{\Varid{vs}} = \ensuremath{\Varid{decompose}\;(\Varid{vpat}_{\Varid{k}},\Varid{v})}\\
&\myspace \ensuremath{\Varid{vp}\;\Varid{t}} = \ensuremath{\Varid{path}\;(\Varid{vpat}_{\Varid{k}},\Varid{t})} \\
&\myspace \ensuremath{\Varid{cond}}_1 = \ensuremath{\Varid{ls}\mathop{\shorteq\kern 1pt\shorteq}} \left(\bigcup_{t \in \ensuremath{\Conid{Vars}\;\Varid{spat}_{\Varid{k}}}} \ensuremath{\Varid{addVPrefix}\;(\Varid{vp}\;\Varid{t},\Varid{divide}\;(\Varid{vp}\;\Varid{t},\Varid{ls}))}\right) \\
&\myspace \ensuremath{\Varid{cond}}_2 = \bigwedge_{t \in \ensuremath{\Conid{Vars}\;\Varid{spat}_{\Varid{k}}}} \ensuremath{\Varid{check}\;(\Varid{s},\Varid{vs}\;\Varid{t},\Varid{divide}\;(\Varid{vp}\;\Varid{t},\Varid{ls}))} \\
&\myspace \ensuremath{\Varid{cond}}_3 = \text{\textbf{if} } \ensuremath{\Varid{vpat}}_k \text{ is some bare variable pattern `$x$'} \text{ \textbf{then} } \\
&\myspace\qquad\qquad\ \ensuremath{\Conid{TypeOf}\;(\Varid{spat}_{\Varid{k}},\Varid{x})\leftrightarrow\Conid{V}}  \text{ has a rule } \ensuremath{\Varid{spat}_{\Varid{j}}} \sim \ensuremath{\Varid{vpat}_{\Varid{j}}} \text{ such that}\\
&\myspace\qquad\qquad\ \myindentSS \ensuremath{\Varid{isMatch}\;(\Varid{vpat}_{\Varid{j}},\Varid{v})} \text{ and } \ensuremath{\Varid{vpat}_{\Varid{j}}} \text{ is not a bare variable pattern} \\
&\myspace \ensuremath{\Varid{addVPrefix}\;(\Varid{prefix},\Varid{rs})} = \myset{\ensuremath{((\Varid{a},\Varid{b}),(\Varid{c},\Varid{prefix}\plus \Varid{d}))} \mid \ensuremath{((\Varid{a},\Varid{b}),(\Varid{c},\Varid{d}))} \in \ensuremath{\Varid{rs}}}
\end{align*}
The \ensuremath{\Varid{divide}} function is defined as in \autoref{equ:divide}.
Condition \ensuremath{\Varid{cond}_{\mathrm{1}}} checks that every link in \ensuremath{\Varid{ls}} is processed in one of the recursive calls, i.e.\ the path of every view region of \ensuremath{\Varid{ls}} starts with \ensuremath{\Varid{path}\;(\Varid{vpat}_{\Varid{k}},\Varid{t})} for some \ensuremath{\Varid{t}}.
(Specifically, if \ensuremath{\Conid{Vars}\;\Varid{spat}_{\Varid{k}}} is empty, \ensuremath{\Varid{ls}} in \ensuremath{\Varid{cond}_{\mathrm{1}}} should also be empty meaning that all the links have already been processed.)
\ensuremath{\Varid{cond}_{\mathrm{2}}} summarises the results of \ensuremath{\Varid{check}} for recursive calls.
\ensuremath{\Varid{cond}_{\mathrm{3}}} guarantees the termination of recursion: When \ensuremath{\Varid{vpat}_{\Varid{k}}} is a bare variable pattern, the recursive call in \autoref{equ:rec1} does not decrease the size of any of its arguments;
\ensuremath{\Varid{cond}_{\mathrm{3}}} makes sure that such non-decreasing recursion will not happen in the next round\footnote{For presentation purposes we only check two rounds here, but in general we should check \ensuremath{\Conid{N}\mathbin{+}\mathrm{1}} rounds where \ensuremath{\Conid{N}} is the number data types defined in the program.} for avoiding infinite recursive calls.

For \ensuremath{\Varid{chkWithLink}}, as in the corresponding case of \ensuremath{\Varid{put}} (\autoref{equ:put2}), let $\ensuremath{\Varid{l}\mathrel{=}((\Varid{spat},\Varid{spath}),} $
$\ensuremath{(\Varid{vpat},\Varid{vpath}))} \in \ensuremath{\Varid{ls}}$ such that \ensuremath{\Varid{vpath}\mathrel{=}[\mskip1.5mu \mskip1.5mu]} and \ensuremath{\Varid{spath}} is the shortest when there is more than one such link.
\phantomsection\label{def:chkWithLink}
\begin{align*}
&\ensuremath{\Varid{chkWithLink}\;(\Varid{s},\Varid{v},\Varid{ls})} = \ensuremath{\Varid{cond}}_1 \wedge \ensuremath{\Varid{cond}}_2 \wedge \ensuremath{\Varid{cond}}_3 \wedge \ensuremath{\Varid{cond}}_4 \\
& \mywhere \\
& \myindentSS \ensuremath{\Varid{cond}}_1 = \ensuremath{\Varid{isMatch}\;(\Varid{spat},\Varid{sel}\;(\Varid{s},\Varid{spath}))} \wedge \ensuremath{\Varid{isMatch}\;(\Varid{vpat},\Varid{sel}\;(\Varid{v},\Varid{vpath}))} \\
& \myindentSS \ensuremath{\Varid{cond}}_2 = \exists ! (\ensuremath{\Varid{spat}_{\Varid{k}}}, \ensuremath{\Varid{vpat}}_k) \in \ensuremath{\Conid{R}}.\ \ensuremath{\Varid{vpat}} = \ensuremath{\Varid{eraseVars}\;\Varid{vpat}_{\Varid{k}}} \\
& \myindent \wedge \ensuremath{\Varid{spat}} \text{ is \ensuremath{\Varid{eraseVars}\;(\Varid{fillWildcards}\;(\Varid{spat}_{\Varid{k}},\Varid{t}))} for some } t\\
& \myindentSS \ensuremath{\Varid{cond}}_3 = \ensuremath{\Varid{ls}} \ensuremath{\mathop{\shorteq\kern 1pt\shorteq}} (\myset{l} \cup \bigcup_{t \in \ensuremath{\Conid{Vars}\;\Varid{spat}_{\Varid{k}}}} \ensuremath{\Varid{addVPrefix}\;(\Varid{path}\;(\Varid{vpat}_{\Varid{k}},\Varid{t}),}\\
& \myindentSS\myindent\myindent\ensuremath{\Varid{divide}\;(\Varid{path}\;(\Varid{vpat}_{\Varid{k}},\Varid{t}),\Varid{ls}\setminus \myset{l}))}) \\
& \myindentSS \ensuremath{\Varid{cond}}_4 = \bigwedge_{t \in \ensuremath{\Conid{Vars}\;\Varid{spat}_{\Varid{k}}}}\ensuremath{\Varid{check}}~(s, \ensuremath{\Varid{vs}\;\Varid{t}}, \ensuremath{\Varid{divide}}~\ensuremath{(\Varid{path}\;(\Varid{vpat}_{\Varid{k}},\Varid{t}))}, \ensuremath{\Varid{ls}} \setminus \myset{l}))
\end{align*}
\ensuremath{\Varid{cond}_{\mathrm{1}}} makes sure that the link \ensuremath{\Varid{l}} is valid (\autoref{def:validLinks}) and \ensuremath{\Varid{cond}_{\mathrm{2}}} further checks that it can be generated from some rule of the consistency relations.
\ensuremath{\Varid{cond}_{\mathrm{3}}} and \ensuremath{\Varid{cond}_{\mathrm{4}}} are for recursive calls: the latter summarises the results for the subtrees and the former guarantees that no link will be missed.
It is \ensuremath{\Varid{cond}_{\mathrm{3}}} that rejects the subtle case of overlapping view regions as described at the end of Case~IV in \autoref{sec:dslByExamples}.%

\subsubsection{Main Theorem}
We can now state our main theorem in terms of the definitions of \ensuremath{\Varid{get}} and \ensuremath{\Varid{put}} above.

\begin{theorem}
\label{thm:mainthm}
Let \ensuremath{\Varid{put'}} be \ensuremath{\Varid{put}} with its domain intersected with $\ensuremath{\Conid{S}} \times \ensuremath{\Conid{V}} \times \ensuremath{\Conid{Links}}$. Then \ensuremath{\Varid{get}} and \ensuremath{\Varid{put'}} form a retentive lens as in \autoref{def:retLens}.
\end{theorem}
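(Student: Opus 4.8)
The plan is to verify the three laws of \autoref{def:retLens} for the generated $\textit{get}$ and $\textit{put}'$. Because the definitions are mutually recursive across all consistency relations in the program, every argument is an induction: on the source tree for Hippocraticness, and on the (well-founded) recursion tree of the $\textit{put}$ computation for Correctness and Retentiveness. Before the main laws I would dispatch three auxiliary facts. First, \emph{totality and single-valuedness of $\textit{get}$}: pattern coverage and source-pattern disjointness (\autoref{sec:synres}) guarantee that for every source exactly one rule applies, so $\textit{get}$ is a total function and $s' \in \dom(\textit{get})$ comes for free in Correctness; this rules out Case~I on the $\textit{get}$ side. Second, \emph{that $\textit{check}$ decides $\dom(\textit{put})$ and that $\textit{put}$ terminates on it}: termination rests on a well-founded measure on the view, with the bare-variable safeguard $\textit{cond}_3$ ensuring that a chain of bare-variable rules cannot continue indefinitely (the subtle part of Case~I on the $\textit{put}$ side). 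Third, an \emph{injection-transparency lemma}: composing $\textit{get}$ after $\textit{inj}_{T' \to T @ V}$ (\autoref{equ:inj}) leaves the view unchanged, because each $t_i$ building $\textit{inj}$ comes from a rule $\textit{Pat}_i \sim x$ with a bare-variable view pattern that $\textit{get}$ simply peels off; I would also record that the first projections of the emitted links are unaffected by the injection.

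For \emph{Correctness} I would assume $\textit{put}\;(s,v,ls) = s'$ and show $\textit{get}\;s' = (v, ls')$, splitting on the definition of $\textit{put}$. In the no-link case (\autoref{equ:put1}), $s'$ is built from $\textit{spat}_k' = \textit{fillWildcardsWD}\;\textit{spat}_k$, so $s'$ matches $\textit{spat}_k$; by source-pattern disjointness $\textit{get}$ selects the \emph{same} rule $k$, emits $\textit{vpat}_k$ at the top, and recurses on exactly the subtrees produced by the recursive $\textit{put}$ calls, whereupon the induction hypothesis returns each view subtree $\textit{vs}\;t$ and $\textit{reconstruct}$ rebuilds $v$. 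In the link case (\autoref{equ:put2}), the installed top region is $\textit{spat}_k' = \textit{fillWildcards}(\textit{spat}_k, \textit{spat})$, and the validity conditions of $\textit{chkWithLink}$ ($\textit{cond}_1$, $\textit{cond}_2$) guarantee that $\textit{spat}$ arises from rule $k$ with $\textit{vpat} = \textit{eraseVars}\;\textit{vpat}_k$, so $\textit{get}$ again picks rule $k$ and, once the injection-transparency lemma strips $\textit{inj}$, reconstructs $v$; the recursive obligations reduce to the induction hypothesis on the $\textit{divide}$d links (\autoref{equ:divide}).

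I would prove \emph{Retentiveness} inside this same induction, strengthening the Correctness invariant to track links: whenever an input link $((\textit{spat}, \textit{spath}), (\textit{vpat}, \textit{vpath}))$ of $ls$ is consumed (necessarily in the link case, at the view root after the recursion has peeled off view-path prefixes), the reconstructed source installs $\textit{spat}$ at a position whose subsequent $\textit{get}$ re-derives a link with the \emph{same} source pattern $\textit{spat}$ and the \emph{same} view region $(\textit{vpat}, \textit{vpath})$, with only the source path possibly differing. This is exactly $\textit{fst} \cdot ls \subseteq \textit{fst} \cdot ls'$. The equalities $\textit{cond}_1$ in $\textit{chkNoLink}$ and $\textit{cond}_3$ in $\textit{chkWithLink}$ are precisely what let me account for \emph{every} input link across the recursive calls and reject overlapping view regions, so that no link is silently dropped --- the content of Case~IV.

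Finally, for \emph{Hippocraticness} I would assume $\textit{get}\;s = (v, ls)$ and show $(s, v, ls) \in \dom(\textit{put}')$ with $\textit{put}\;(s,v,ls) = s$, by structural induction on $s$. The root link $l_{\textit{root}}$ emitted by $\textit{get}$ (\autoref{equ:get}) has source path $[\,]$, so $\textit{put}$ is in the link case, and the shortest-source-path discipline forces it to select the link whose source region sits at the root of $s$ rather than one attached to a deeper region that also reaches the view root; the installed top region then coincides with the top of $s$, the types match so $\textit{inj}$ is the identity, and the recursive calls reproduce the subtrees of $s$ by the induction hypothesis --- this is exactly why that discipline is needed and what rules out Case~III. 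Membership in $\dom(\textit{put}')$ follows from a separate check that the links generated by $\textit{get}$ pass $\textit{check}$: they are valid by construction and never overlap because $\textit{get}$ cannot emit overlapping view regions. \textbf{The main obstacle} I anticipate is making the link case watertight, since three concerns all collide in \autoref{equ:put2}: justifying the termination measure in the presence of bare-variable rules and type conversions; proving injection transparency so that $\textit{inj}$ never perturbs the recovered view or the first projections of the links; and showing that the shortest-source-path selection is simultaneously consistent with Hippocraticness, Correctness, and Retentiveness. Keeping the link bookkeeping exactly synchronised with what $\textit{get}$ re-derives is the delicate heart of the theorem.
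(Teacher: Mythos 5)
Your proposal follows essentially the same route as the paper's proof: totality of \ensuremath{\Varid{get}}, the fact that the output of \ensuremath{\Varid{get}} passes \ensuremath{\Varid{check}}, structural induction on the source for Hippocraticness, induction on the size of the view and links for Correctness and Retentiveness, and the same reliance on pattern disjointness, the \ensuremath{\Varid{cond}} conditions, and the transparency of \ensuremath{\Varid{inj}} under \ensuremath{\Varid{get}}. The only ingredient the paper isolates that you leave implicit is a small \emph{focusing} lemma stating that \ensuremath{\Varid{put}} and \ensuremath{\Varid{check}} depend only on the subtree of the old source reached by the common prefix of the links' source paths---this is what lets the induction hypothesis be applied to the recursive calls even though they pass the whole original source as their first argument.
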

The proof goes by induction on the size of the arguments to \ensuremath{\Varid{put}} or \ensuremath{\Varid{get}} and can be found in the appendix (\autoref{app:proof}).

\section{Edit Operations and Link Maintenance}
\label{sec:editOperations}
Our \ensuremath{\Varid{get}} function only produces horizontal links between a source and its consistent view, while the input links to a \ensuremath{\Varid{put}} function are the ones between a source and a modified view.
To bridge the gap, in this section, we demonstrate how to update the view while maintaining the links using a set of typical \emph{edit operations} (on views).
These edit operations will be used in the three case studies in the next section.



\begin{figure}[t]
\centering
\includegraphics[scale=0.75,trim={7cm 10.17cm 7cm 2cm},clip]{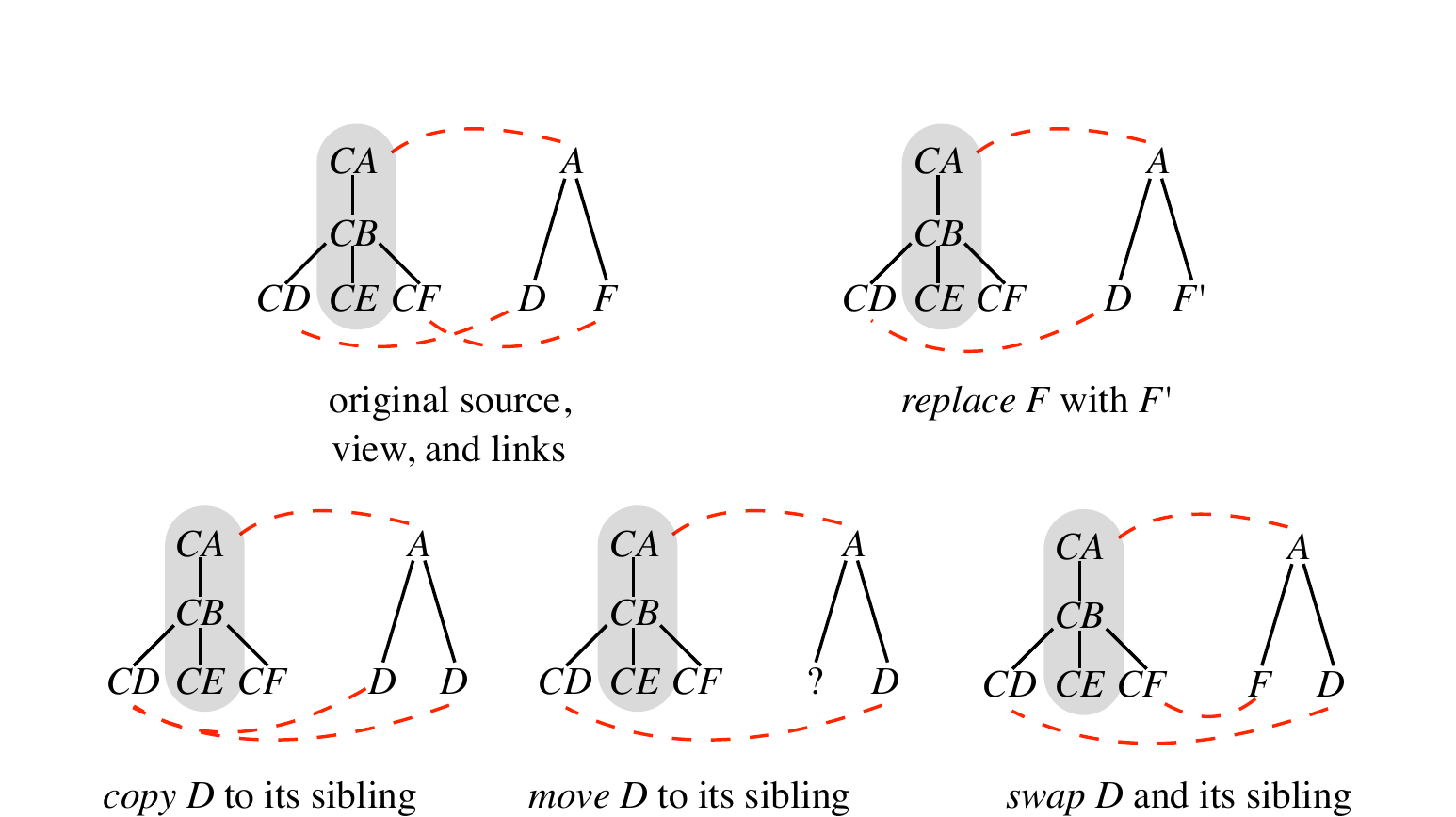}
\caption{How edit operations \ensuremath{\Varid{replace}}, \ensuremath{\Varid{copy}}, \ensuremath{\Varid{move}}, and \ensuremath{\Varid{swap}} main links.}
\label{fig:editOperations}
\end{figure}

We define four edit operations, \ensuremath{\Varid{replace}}, \ensuremath{\Varid{copy}}, \ensuremath{\Varid{move}}, and \ensuremath{\Varid{swap}}, of which \ensuremath{\Varid{move}} and \ensuremath{\Varid{swap}} are defined in terms of \ensuremath{\Varid{copy}} and \ensuremath{\Varid{replace}}.
The edit operations accept not only an AST but also a set of links, which is updated along with the AST.
The interface has been designed in a way that the last argument of an edit operation is the pair of the AST and links, so that the user can use \scName{Haskell}'s ordinary function composition to compose a sequence of edits (partially applied to all the other arguments).
The implementation of the four edit operations takes less than 40 lines of \scName{Haskell} code, as our DSL already generates useful auxiliary functions such as fetching a subtree according to a path in some tree.

We briefly explain how the edit operations update links, as illustrated in \autoref{fig:editOperations}:
Replacing a subtree at path \ensuremath{\Varid{p}} will destroy all the links previously connecting to path \ensuremath{\Varid{p}}.
Copying a subtree from path \ensuremath{\Varid{p}} to path \ensuremath{\Varid{p'}} will duplicate the set of links previously connecting to \ensuremath{\Varid{p}} and redirect the duplicated links to connect to \ensuremath{\Varid{p'}}.
Moving a subtree from \ensuremath{\Varid{p}} to \ensuremath{\Varid{p'}} will destroy links connecting to \ensuremath{\Varid{p'}} and redirect the links (previously) connecting to \ensuremath{\Varid{p}} to connect to \ensuremath{\Varid{p'}}.
Swapping subtrees at \ensuremath{\Varid{p}} and \ensuremath{\Varid{p'}} will also swap the links connecting to \ensuremath{\Varid{p}} and \ensuremath{\Varid{p'}}.


\section{Case Studies}
\label{sec:application}
We demonstrate how our DSL works for the problems of code refactoring~\cite{Fowler1999Refactoring}, resugaring~\cite{Pombrio2014Resugaring, Pombrio2015Hygienic}, and XML synchronisation~\cite{Pacheco2014BiFluX}, all of which require that we constantly make modifications to ASTs and synchronise them with CSTs.
For all these problems, retentive lenses provide a systematic way for the user to preserve information of interest in the original CST after synchronisation.
The source code for these case studies can be found on the first author's web page: \url{http://www.prg.nii.ac.jp/members/zhu/}.

\subsection{Refactoring}
\label{sec:refactoring}

As we will report below, we have programmed the consistency relations between CSTs and ASTs for a small subset of Java~8~\cite{Gosling2014Java} and tested the generated retentive lens on a particular refactoring.
Even though the case study is small, we believe that our framework is general enough:
We have surveyed the standard set of refactoring operations for Java~8 provided by Eclipse Oxygen (with Java Development Tools) and found that all the 23 refactoring operations can be represented as the combinations of our edit operations defined in \autoref{sec:editOperations}.
A summary can be found in the appendix (\autoref{sec:refactOptAsEditSeq}).





\begin{figure}[t]
\centering
\includegraphics[scale=0.6,trim={4cm 4cm 4cm 3cm},clip]{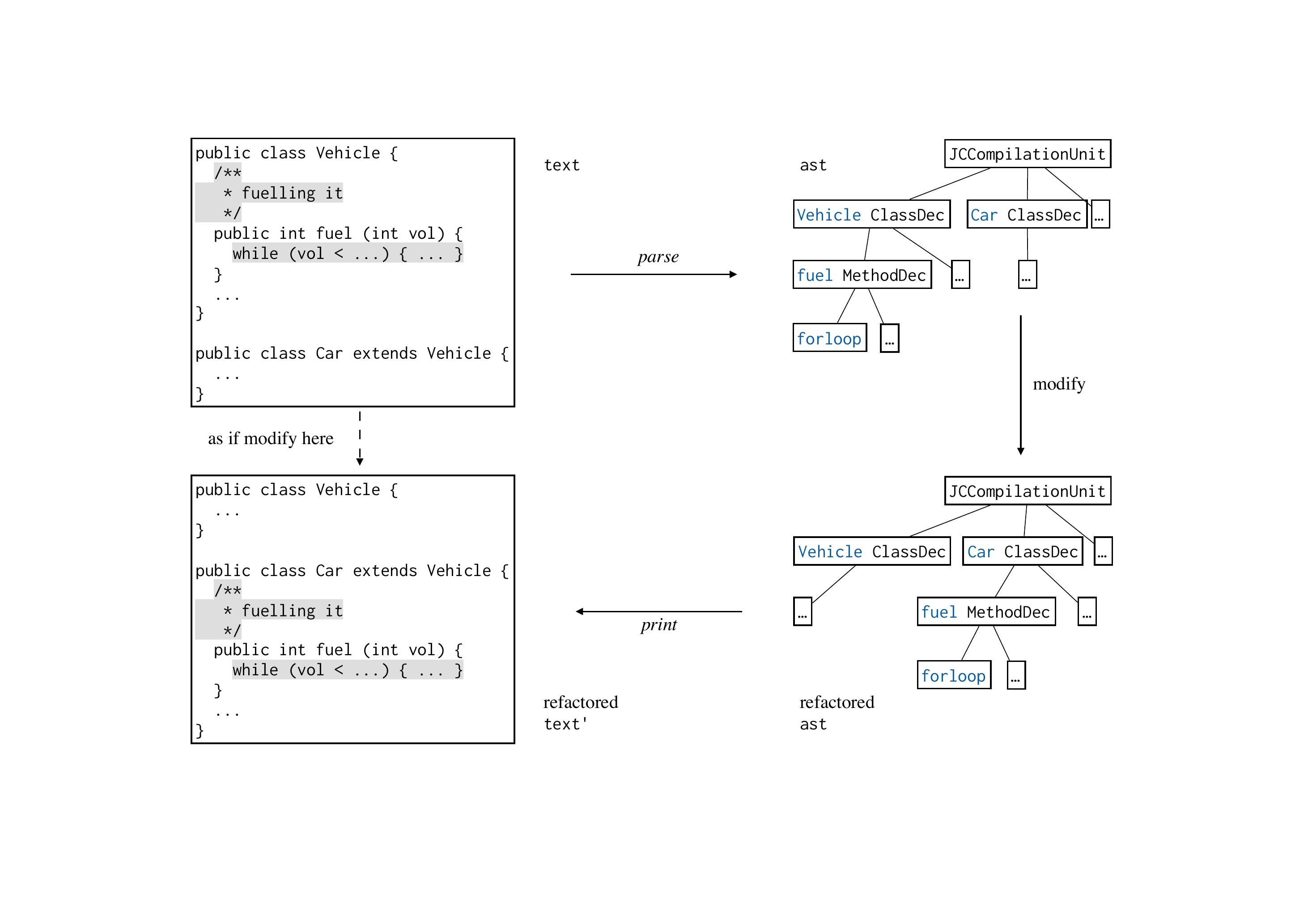}
\caption[An example of the \emph{push-down} code refactoring.]{An example of the \emph{push-down} code refactoring. \footnotesize{(Subclasses \lstinline{Bus} and \lstinline{Bicycle} are omitted due to space limitation.)}}
\label{fig:refactoringTextAndAst}
\end{figure}


\subsubsection{The \emph{Push-Down} Code Refactoring}
An example of the \emph{push-down} code refactoring is illustrated in \autoref{fig:refactoringTextAndAst}.
At first, the user designed a \lstinline{Vehicle} class and thought that it should possess a \lstinline{fuel} method for all the vehicles.
The \lstinline{fuel} method has a JavaDoc-style comment and contains a \lstinline{while} loop, which can be seen as syntactic sugar and is converted to a standard \lstinline{for} loop during parsing.
However, when later designing \lstinline{Vehicle}'s subclasses, the user realises that bicycles cannot be fuelled and decides to do the push-down code refactoring, which removes the \lstinline{fuel} method from \lstinline{Vehicle} and pushes the method definition down to subclasses \lstinline{Bus} and \lstinline{Car} but not \lstinline{Bicycle}.
Instead of directly modifying the (program) \lstinline{text}, most refactoring tools choose to parse the program text into its \lstinline{ast}, perform code refactoring on the \lstinline{ast}, and regenerate new (program) \lstinline{text'}.
The bottom-left corner of \autoref{fig:refactoringTextAndAst} shows the desired (program) \lstinline{text'} after refactoring, where we see that the comment associated with \lstinline{fuel} is also pushed down, and the \lstinline{while} sugar is kept.
However, the preservation of the comment and syntactic sugar does not come for free actually, as the \lstinline{ast}---being a concise and compact representation of the program \lstinline{text}---includes neither comments nor the form of the original \lstinline{while} loop.
So if the user implements the \ensuremath{\Varid{parse}} and \ensuremath{\Varid{print}} functions as back-and-forth conversions between CSTs ASTs (or even as a well-behaved lens), they may produce unsatisfactory results in which the comment and the \lstinline{while} syntactic sugar are lost.

\subsubsection{Implementation in Our DSL}
\label{sec:java8SyncImpl}

Following the grammar of Java~8, we define data types for a simplified version of its concrete syntax, which consists of definitions of classes, methods, and variables; arithmetic expressions (including assignment and method invocation); and conditional and loop statements.
For convenience, we also restrict the occurrence of statements and expressions to exactly once in some cases (such as variable declarations).
Then we define the corresponding simplified version of the abstract syntax that follows the one defined by the JDT parser~\cite{OpenJDK}.
This subset of Java~8 has around 80 CST constructs (production rules) and 30 AST constructs; the 70 consistency relations among them generate about 3000 lines of code for the retentive lenses and auxiliary functions (such as the ones for conversions between interchangeable data types and edit operations).

\subsubsection{Demo}
\label{sec:refactorSysRun}
We can now perform some experiments on \autoref{fig:refactoringTextAndAst}.
\begin{itemize}

\item First we test \lstinline{put cst ast ls}, where \lstinline{(ast, ls) = get cst}.
We get back the same \lstinline{cst}, showing that the generated lenses do satisfy Hippocraticness.

\item As a special case of Correctness, we let \lstinline{cst' = put cst ast []} and check \lstinline{fst (get cst')} \ensuremath{\mathop{\shorteq\kern 1pt\shorteq}} \lstinline{ast}.
In \lstinline{cst'}, the \lstinline{while} loop becomes a basic \lstinline{for} loop and all the comments disappear.
This shows that \ensuremath{\Varid{put}} will create a new source solely from the view if links are missing.

\item Then we change \lstinline{ast} to \lstinline{ast'} and the set of links \lstinline{ls} to \lstinline{ls'} using our edit operations, simulating the \emph{push-down} code refactoring for the \lstinline{fuel} method.
To show the effect of Retentiveness more clearly, when building \lstinline{ast'}, the \lstinline{fuel} method in the \lstinline{Car} class is copied from the \lstinline{Vehicle} class, while the \lstinline{fuel} method in the \lstinline{Bus} class is built from scratch (i.e.~replaced with a `new' \lstinline{fuel} method).
Let \lstinline{cst' = put cst ast' ls'}.
In the \lstinline{fuel} method of the \lstinline{Car} class, the \lstinline{while} loop and its associated comments are preserved; but in the \lstinline{fuel} method of the \lstinline{Bus} class, there is only a \lstinline{for} loop without any associated comments.
This is where Retentiveness helps the user to retain information on demand. Finally, we also check that Correctness holds: \lstinline{fst (get cst')} \ensuremath{\mathop{\shorteq\kern 1pt\shorteq}} \lstinline{ast'}.

\end{itemize}

\subsection{Resugaring}
We have seen syntactic sugar such as negation and \lstinline{while} loops.
The idea of \emph{resugaring} is to print evaluation sequences in a core language using the constructs of its surface syntax (which contains sugar)~\cite{Pombrio2014Resugaring,Pombrio2015Hygienic}.
To solve the problem,~\citeauthor{Pombrio2014Resugaring}~\cite{Pombrio2014Resugaring} enrich the AST to incorporate fields for holding tags that mark from which syntactic object an AST construct comes.
Using retentive lenses, we can also solve the problem while leaving the AST clean---we can write consistency relations between the surface syntax and the abstract syntax and passing the generated \ensuremath{\Varid{put}} function proper links for retaining syntactic sugar, which we have already seen in the arithmetic expression example (where we retain the negation) and in the code refactoring example (where we retain the \lstinline{while} loop).
Both \citeauthor{Pombrio2014Resugaring}'s `tag approach' and our `link approach', in actuality, identifies where an AST construct comes from; however, the link approach has an advantage that it leaves ASTs clean and unmodified so that we do not need to patch up the existing compiler to deal with tags.

\subsection{XML Synchronisation}
\label{sec:XMLSync}

In this subsection, we present a case study on XML synchronisation, which is pervasive in the real world.
The specific example used here is adapted from \citeauthor{Pacheco2014BiFluX}'s paper~\cite{Pacheco2014BiFluX}, where they use their DSL, \scName{BiFluX}, to synchronise address books.

As for their example, both the source address book and the view address book are grouped by social relationships; however, the source address book (defined by \lstinline{AddrBook}) contains names, emails, and telephone numbers whereas the view (social) address book (defined by \lstinline{SocialBook}) contains names only.

To synchronise \lstinline{AddrBook} and \lstinline{SocialBook}, we write consistency relations in our DSL and the core ones are
\begin{centerTab}
\begin{minipage}[t]{.55\textwidth}
\begin{lstlisting}
AddrGroup  <---> SocialGroup
  AddrGroup grp p  ~  SocialGroup grp p

List Person <---> List Name
  Nil  ~  Nil
  Cons p xs  ~  Cons p xs
\end{lstlisting}
\end{minipage}
\begin{minipage}[t]{.45\textwidth}
\begin{lstlisting}
Person <---> Name
  Person t  ~  t

Triple Name Email Tel <---> Name
  Triple name  _ _  ~  name (*@.@*)
\end{lstlisting}
\end{minipage}
\end{centerTab}
\\
The consistency relations will compile to a pair of \lstinline{get} and \lstinline{put}.

As \autoref{fig:XMLSync} shows, the original source is \lstinline{addrBook}
and its consistent view is \lstinline{socialBook}, both of which have two relationship groups: \lstinline{coworkers} and \lstinline{friends}.
The source has a record \lstinline{Person (Triple "Alice" "alice@abc.xyz" "000111")} in the group \lstinline{coworkers}, and we will see how this record changes in the new source after we update the view \lstinline{socialBook} in the following way and propagate the changes back:
we (i) reorder the two groups; (ii) change Alice's group from \lstinline{coworkers} to \lstinline{friends}; (iii) create a new social relationship group \lstinline{family} for family members.

In our case, to produce a new source \lstinline{socialBook'}, we handle the three updates using our basic edit operations (in this case, only \ensuremath{\Varid{swap}}, \ensuremath{\Varid{move}}, and \ensuremath{\Varid{replace}}) which also maintain the links.
Feeding the original source \lstinline{addrBook}, updated view \lstinline{socialBook'} and links \lstinline{hls'} to the (generated) \lstinline{put} function, we obtain the updated \lstinline{addrBook'}.
In \autoref{fig:XMLSync}, it is clearly seen that carefully maintained links help us to preserve email addresses and telephone numbers associated with each person during the \ensuremath{\Varid{put}} process; note that well-behavedness does not guarantee the retention of this information, since the input view is not consistent with the input source in this case.

\begin{figure}[t]
\centering
\includegraphics[scale=0.6,trim={4cm 8cm 4cm 3cm},clip]{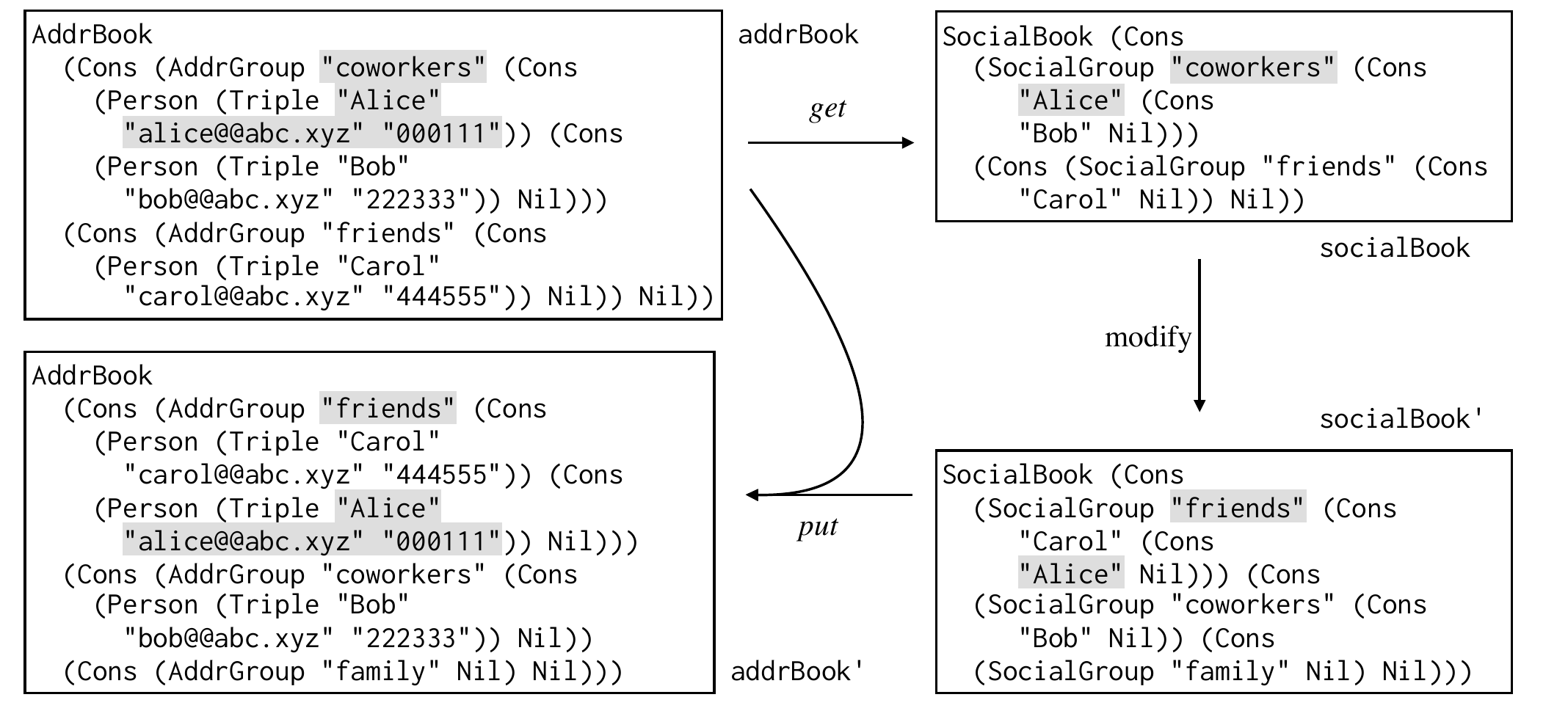}
\caption[An example of XML synchronisation.]{An example of XML synchronisation. (Grey areas highlight how the record \lstinline{Alice} is updated.)}
\label{fig:XMLSync}
\end{figure}

As pointed out by \citeauthor{Pacheco2014BiFluX}, examples of this kind motivate extensions to (combinator-based) alignment-aware languages such as \scName{Boomerang}~\cite{Bohannon2008Boomerang} and matching lenses~\cite{Barbosa2010Matching}.
In fact, it is hard for those languages to handle source-view alignment where some view elements are moved out of its original list-like structure (or \emph{chunk}~\cite{Barbosa2010Matching}) and put into a new list-like structure, probably far away---because when using those languages, we usually lift a lens combinator \ensuremath{\Varid{k}} handling a single element to $\ensuremath{\Varid{k}}^*$ dealing with a list of elements, so that the `scope' of the alignment performed by $\ensuremath{\Varid{k}}^*$ is always within that single list (which it currently works on).

\section{Related Work}
\label{sec:relatedWork}

\subsection{Alignment}
\label{sub:alignment}

\emph{Alignment} has been recognised as an important problem when we need to synchronise two lists. Our work is closely related.

\subsubsection{Alignment for Lists}
The earliest lenses~\cite{Foster2007Combinators} only allow source and view elements to be matched positionally---the $n$-th source element is simply updated using the $n$-th element in the modified view. Later, lenses with more powerful matching strategies are proposed, such as dictionary lenses~\cite{Bohannon2008Boomerang} and their successor matching lenses~\cite{Barbosa2010Matching}.
As for matching lenses, when a \ensuremath{\Varid{put}} is invoked, it will first find the correspondence between \emph{chunks} (data structures that are reorderable, such as lists) of the old and new views using some predefined strategies; based on the correspondence, the chunks in the source are aligned to match the chunks in the new view. Then element-wise updates are performed on the aligned chunks. Matching lenses are designed to be practically easy to use, so they are equipped with a few fixed matching strategies (such as greedy align) from which the user can choose. However, whether the information is retained or not, still depends on the lens applied after matching. As a result, the more complex the applied lens is, the more difficult to reason about the information retained in the new source.
Moreover, it suffers a disadvantage that the alignment is only between a single source list and a single view list, as already discussed in the last paragraph of \autoref{sec:XMLSync}.
\scName{BiFluX}~\cite{Pacheco2014BiFluX} overcomes the disadvantage by providing the functionality that allows the user to write alignment strategies manually; in this way, when we see several lists at once, we are free to search for elements and match them in all the lists. But this alignment still has the limitation that each source element and each view element can only be matched at most once---after that they are classified as either \emph{matched pair}, \emph{unmatched source element}, or \emph{unmatched view element}. Assuming that an element in the view has been copied several times, there is no way to align all the copies with the same source element. (However, it is possible to reuse an element several times for the handling of unmatched elements.)

By contrast, retentive lenses are designed to abstract out matching strategies (alignment) and are more like taking the result of matching as an additional input. This matching is not a one-layer matching but rather, a global one that produces (possibly all the) links between a source's and a view's unchanged parts. The information contained in the linked parts is preserved independently of any further applied lenses.

 \subsubsection{Alignment for Containers}
 To generalise list alignment, a more general notion of data structures called \emph{containers}~\cite{Abbott2005Containers} is used~\cite{Hofmann2012Edit}.
 In the container framework, a data structure is decomposed into a \emph{shape} and its \emph{content}; the shape encodes a set of positions, and the content is a mapping from those positions to the elements in the data structure.
 The existing approaches to container alignment take advantage of this decomposition and treat shapes and contents separately.
 For example, if the shape of a view container changes, \citeauthor{Hofmann2012Edit}'s approach will update the source shape by a fixed strategy that makes insertions or deletions at the rear positions of the (source) containers.
 By contrast, \citeauthor{Pacheco2012Delta}'s method permits more flexible shape changes, and they call it \emph{shape alignment}~\cite{Pacheco2012Delta}.
 In our setting, both the consistency on data and the consistency on shapes are specified by the same set of consistency declarations.
 In the \ensuremath{\Varid{put}} direction, both the data and shape of a new source is determined by (computed from) the data and shape of a view, so there is no need to have separated data and shape alignments.

 Container-based approaches have the same situation (as list alignment) that the retention of information is dependent on the basic lens applied after alignment. Moreover, the container-based approaches face another serious problem:
 they always translate a change on data in the view to another change on data in the source, without affecting the shape of a container. This is wrong in some cases, especially when the decomposition into shape and data is inadequate.
 For example, let the source be \lstinline{Neg (Lit 100)} and the view \lstinline{Sub (Num 0) (Num 100)}. If we modify the view by changing the integer \lstinline{0}~to~\lstinline{1} (so that the view becomes \lstinline{Sub (Num 1) (Num 100)}), the container-based approach would not produce a correct source \lstinline{Minus ...}, as this data change in the view must not result in a shape change in the source.
 In general, the essence of container-based approaches is the decomposition into shape and data such that they can be processed independently (at least to some extent), but when it comes to scenarios where such decomposition is unnatural (like the example above), container-based approaches hardly help.

\subsection{Provenance and Origin}
\label{sec:provenance}

Our idea of links is inspired by research on provenance~\cite{Cheney2009Provenance} in database communities and origin tracking~\cite{vanDeursen1993Origin} in the rewriting communities.

\citeauthor{Cheney2009Provenance} classify provenance into three kinds, \emph{why}, \emph{how}, and \emph{where}: \emph{why-provenance} is the information about which data in the view is from which rows in the source; \emph{how-provenance} additionally counts the number of times a row is used (in the source); \emph{where-provenance} in addition records the column where a piece of data is from. In our setting, we require that two pieces of data linked by vertical correspondence be equal (under a specific pattern), and hence the vertical correspondence resembles where-provenance.
However, the above-mentioned provenance is not powerful enough as they are mostly restricted to relational data, namely rows of tuples---in functional programming, the algebraic data types are more complex. For this need, \emph{dependency provenance}~\cite{Cheney2011Provenance} is proposed; it tells the user on which parts of a source the computation of a part of a view depends. In this sense, our consistency links are closer to dependency provenance.

The idea of inferring consistency links can be found in the work on origin tracking for term rewriting systems~\cite{vanDeursen1993Origin}, in which the origin relations between rewritten terms can be calculated by analysing the rewrite rules statically.
However, it was developed solely for building trace between intermediate terms rather than using trace information to update a tree further. Based on origin tracking, \citeauthor{deJonge2012Algorithm} implemented an algorithm for code refactoring systems, which `preserves formatting for terms that are not changed in the (AST) transformation, although they may have changes in their subterms'~\cite{deJonge2012Algorithm}.
This description shows that the algorithm also decomposes large terms into smaller ones resembling our regions.
In terms of the formatting aspect, we think that retentiveness can in effect be the same as their theorem if
we include vertical correspondence (representing view updates) in the theory, rather than dealing with it implicitly and externally as in \autoref{sec:editOperations}.

The use of consistency links can also be found in \citeauthor{Wang2011Incremental}'s work, where the authors extend state-based lenses and use links for tracing data in a view to its origin in a source~\cite{Wang2011Incremental}.
When a sub-term in the view is edited locally, they use links to identify a sub-term in the source that `contains' the edited sub-term in the view. When updating the old source, it is sufficient to only perform state-based \ensuremath{\Varid{put}} on the identified sub-term (in the source) so that the update becomes an incremental one. Since lenses generated by our DSL also create consistency links (albeit for a different purpose), they can be naturally incrementalised using the same technique.

\subsection{Operation-based BX}
\label{sec:operationBX}

Our work is closely relevant to the operation-based approaches to BX, in particular, the delta-based BX model~\cite{Diskin2011Asymmetric,Diskin2011Symmetric} and edit lenses~\cite{Hofmann2012Edit}. The (asymmetric) delta-based BX model regards the differences between a view state \ensuremath{\Varid{v}} and \ensuremath{\Varid{v'}} as \emph{deltas}, which are abstractly represented as arrows (from the old view to the new view).
The main law of the framework can be described as `given a source state \ensuremath{\Varid{s}} and a view delta $\ensuremath{\Varid{det}}_\ensuremath{\Varid{v}}$, $\ensuremath{\Varid{det}}_\ensuremath{\Varid{v}}$ should be translated to a source delta $\ensuremath{\Varid{det}}_\ensuremath{\Varid{s}}$ between \ensuremath{\Varid{s}} and $s'$ satisfying \ensuremath{\Varid{get}\;\Varid{s'}\mathrel{=}\Varid{v'}}'.
As the law only guarantees the existence of a source delta $\ensuremath{\Varid{det}}_\ensuremath{\Varid{s}}$ that updates the old source to a correct state, it is yet not sufficient to derive Retentiveness in their model, for there are infinite numbers of translated delta $\ensuremath{\Varid{det}}_\ensuremath{\Varid{s}}$ which can take the old source to a correct state, of which only a few are `retentive'.
To illustrate, \citeauthor{Diskin2011Asymmetric} tend to represent deltas as edit operations such as \emph{create}, \emph{delete}, and \emph{change}; representing deltas in this way will only tell the user what must be changed in the new source, while it requires additional work to reason about what is retained.
However, it is possible to exhibit Retentiveness if we represent deltas in some other proper form. Compared to \citeauthor{Diskin2011Asymmetric}'s work, \citeauthor{Hofmann2012Edit} give concrete definitions and implementations for propagating edit operations (in a symmetric setting).

\section{Conclusion}
\label{sec:conclusion}

In this paper, we showed that well-behavedness is not sufficient for retaining information after an update and it may cause problems in many real-world applications.
To address the issue, we illustrated how to use links to preserve desired data fragments of the original source, and developed a semantic framework of (asymmetric) retentive lenses.
Then we presented a small DSL tailored for describing consistency relations between syntax trees; we showed its syntax, semantics, and proved that the pair of \ensuremath{\Varid{get}} and \ensuremath{\Varid{put}} functions generated from any program in the DSL form a retentive lens.
We provide four edit operations which can update a view together with the links between the view and the original source, and demonstrated the practical use of retentive lenses for code refactoring, resugaring, and XML synchronisation;
we discussed related work about alignment, origin tracking, and operation-based BX.
Some further discussions can be found in the appendix (\autoref{sec:discussions}).


\begin{acks}                            
  We thank Jeremy Gibbons, Meng Wang for useful discussions and comments.
  Yongzhe Zhang helped us to create a nice figure in the introduction of the last submission, although the figure was later revised.
  This work is partially supported by the \grantsponsor{GS501100001691}{Japan Society for the Promotion of Science}{https://doi.org/10.13039/501100001691} (JSPS) Grant-in-Aid for Scientific Research (S)~No.~\grantnum{GS501100001691}{17H06099}.
\end{acks}

\newpage

\renewcommand*{\bibfont}{\small}
\bibliography{retentive}

\newpage
\appendix

\section{Programming Examples in Our DSL}
\label{sec:programmingExamples}

Although the DSL is tailored for describing consistency relations between syntax trees, it is also possible to handle general tree transformations and the following are small but typical programming examples other than syntax tree synchronisation.
\begin{itemize}
  \item Let us consider the binary trees
\begin{centerTab}
\begin{lstlisting}
data BinT a = Tip | Node a (BinT a) (BinT a) (*@.@*)
\end{lstlisting}
\end{centerTab}
We can concisely define the \ensuremath{\Varid{mirror}} consistency relation between a tree and its mirroring as
\begin{centerTab}
\begin{lstlisting}
BinT Int <---> BinT Int
  Tip         ~  Tip
  Node i x y  ~  Node i y x (*@.@*)
\end{lstlisting}
\end{centerTab}

  \item We demonstrate the implicit use of some other consistency relations when defining a new one. Suppose that we have defined the following consistency relation between natural numbers and boolean values:
\begin{centerTab}
\begin{lstlisting}
Nat <---> Bool
  Succ _  ~  True
  Zero    ~  False (*@.@*)
\end{lstlisting}
\end{centerTab}
Then we can easily describe the consistency relation between a binary tree over natural numbers and a binary tree over boolean values:
\begin{centerTab}
\begin{lstlisting}
BinT Nat  <---> BinT Bool
  Tip           ~  Tip
  Node x ls rs  ~  Node x ls rs (*@.@*)
\end{lstlisting}
\end{centerTab}

  \item Let us consider rose trees, a data structure mutually defined with lists:
\begin{centerTab}
\begin{lstlisting}
data RTree a = RNode a (List (RTree a))
data List  a = Nil | Cons a (List a) (*@.@*)
\end{lstlisting}
\end{centerTab}
We can define the following consistency relation to associate the left spine of a tree with a list:
\begin{centerTab}
\begin{lstlisting}
RTree Int <---> List Int
  RNode i Nil         ~  Cons i Nil
  RNode i (Cons x _)  ~  Cons i x (*@.@*)
\end{lstlisting}
\end{centerTab}
\end{itemize}

\section{Discussions of the Paper}
\label{sec:discussions}
We will briefly discuss Strong Retentiveness (that subsumes Hippocraticness), our thought on (retentive) lens composition, the feasibility of retaining code styles for refactoring tools, and our choice of the word `retentive'.

\subsection{Strong Retentiveness}
\label{sec:strongRetentiveness}
Through our research into Retentiveness, we also tried a different theory, which we call \emph{Strong Retentiveness} now, that requires that the consistency links generated by \ensuremath{\Varid{get}} should additionally capture all the `information' of the source and \emph{uniquely identify} it. Strong Retentiveness is appealing in the sense that (we proved that) it subsumes Hippocraticness: the more information we require that the new source have, the more restrictions we impose on the possible forms of the new source; in the extreme case where the input links capture all the information and are only valid for at most one source, the new source has to be the same as the original one. However, using Strong Retentiveness demands extra effort in practice, for a set of region patterns can never uniquely identify a tree; as a result, much more information is required. For instance, \lstinline[mathescape]!cst$_\texttt{1}$ = Minus "" (Lit 1) (Lit 2)! has region patterns \lstinline[mathescape]!reg$_{\texttt{1}}$ = Minus "" _ _!, \lstinline[mathescape]!reg$_{\texttt{2}}$ = Lit 1!, and \lstinline[mathescape]!reg$_{\texttt{3}}$ = Lit 2!, which, however, are also satisfied by \lstinline[mathescape]!cst$_{\texttt{2}}$ = Minus "" (Lit 2) (Lit 1)! in which regions are assembled in a different way.

This observation inspires us to generalise region patterns to \emph{properties} in order for holding more information (that can eventually uniquely identify a tree) and generalise links connecting $\ensuremath{\Conid{Pattern}} \times \ensuremath{\Conid{Path}}$ to links connecting $\ensuremath{\Conid{Property}} \times \ensuremath{\Conid{Path}}$ accordingly. We eventually formalised three kinds of properties that are sufficient to capture all the information of a tree (e.g.~\lstinline[mathescape]!cst$_{\texttt{1}}$!): \emph{region patterns} (e.g.~\lstinline[mathescape]!reg$_{\texttt{1}}$!, \lstinline[mathescape]!reg$_{\texttt{2}}$!, and \lstinline[mathescape]!reg$_{\texttt{3}}$!), \emph{relative positions} between two regions (e.g.~\lstinline[mathescape]!reg$_{\texttt{2}}$! is the first child of \lstinline[mathescape]!reg$_{\texttt{1}}$! and \lstinline[mathescape]!reg$_{\texttt{3}}$! is the second child of \lstinline[mathescape]!reg$_{\texttt{1}}$!), and \emph{top} that marks the top of a tree (e.g.~\lstinline[mathescape]!reg$_{\texttt{1}}$! is the top).
Worse still, observant readers might have found that properties need to be named so that they can be referred to by other properties; for instance, the region pattern \lstinline{Minus "" _ _} is named \lstinline[mathescape]!reg$_{\texttt{1}}$! and is referred to as the top of \lstinline[mathescape]!cst$_{\texttt{1}}$!. This will additionally cause many difficulties in lens composition, as different lenses might assign the same region different names and we need to do `alpha conversion'. Take everything into consideration, finally, we opted for the `weaker' but simpler version of Retentiveness.

\subsection{Rethinking Lens Composition}
\label{sec:rethinkRLensComp}
We defined retentive lens composition (\autoref{def:rlensComp}) in which we treat link composition as relation composition. In this case, however, the composition of two lenses \ensuremath{lens_{AB}} and \ensuremath{lens_{BC}} may not be satisfactory because the link composition might (trivially) produce an empty set as the result, if \ensuremath{lens_{AB}} and \ensuremath{lens_{BC}} decompose a tree \ensuremath{\Varid{b}} (of type \ensuremath{\Conid{B}}) in a different way, as the following example shows:
\begin{center}
\includegraphics[scale=0.7,trim={7.5cm 8.5cm 7.5cm 7.5cm},clip]{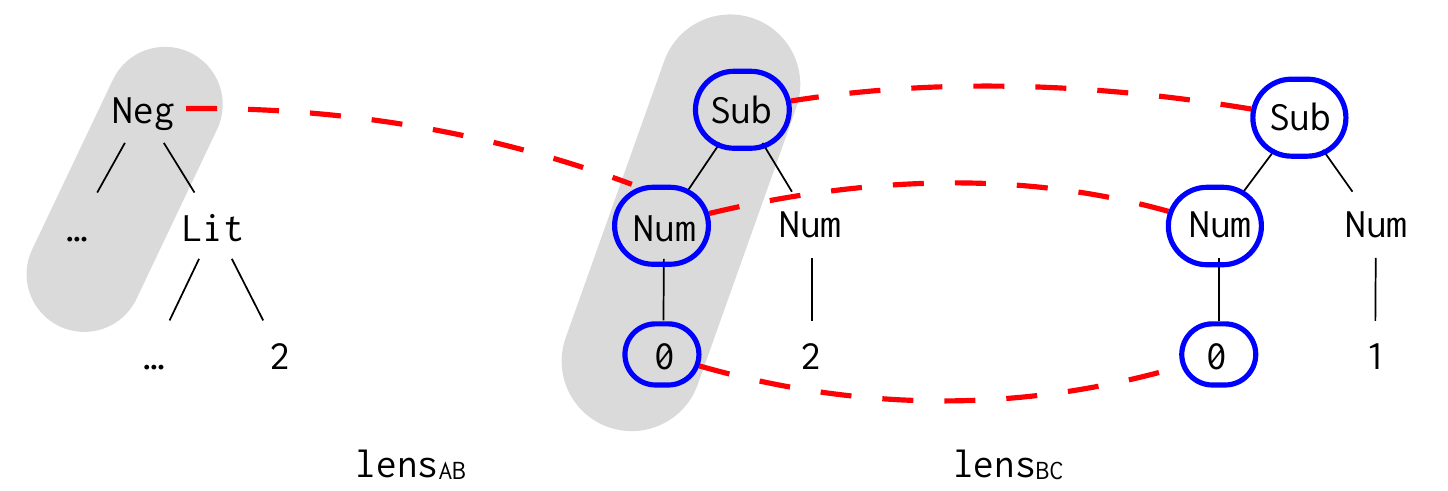}
\end{center}
In the above figure, \lstinline[mathescape]!lens$_{\texttt{AB}}$! connects the region (pattern) \lstinline{Neg a _} with \lstinline{Sub (Num 0) _} (the grey parts); while \lstinline[mathescape]!lens$_{\texttt{BC}}$! decomposes \lstinline{Sub (Num 0) _} into three small regions and establishes links for them respectively. For this case, our current link composition simply produces an empty set as the result.

Coincidentally, similar problems can also be found in quotient lenses~\cite{Foster2008Quotient}: A quotient lens operates on sources and views that are divided into many equivalent classes, and the well-behavedness is defined on those equivalent classes rather than a particular pair of source and view. In order to establish a sequential composition \ensuremath{\Varid{l};\Varid{k}}, the authors require that the abstract (view-side) equivalence relation of lens \ensuremath{\Varid{l}} is identical to the concrete (source-side) equivalence of lens \ensuremath{\Varid{k}}. We leave other possibilities of link composition to future work.

As for our DSL, the lack of composition does not cause problems because of the philosophy of design. Take the scenario of writing a parser for example where there are two main approaches for the user to choose: to use parser combinators (such as \textsc{Parsec}) or to use parser generators (such as \textsc{Happy}). While parser combinators offer the user many small composable components, parser generators usually provide the user with a high-level syntax for describing the grammar of a language using production rules (associated with semantic actions). Then the generated parser is used as a `standalone black box' and usually will not be composed with some others (although it is still possible to be composed `externally'). Our DSL is designed to be a `lens generator' and we have no difficulty in writing bidirectional transformations for the subset of Java~8 in \autoref{sec:refactoring}.

\subsection{Retaining Code Styles}
A challenge to refactoring tools is to retain the style of program text such as indentation, vertical alignment of identifiers, and the place of line breaks. For example, an argument of a function application may be vertically aligned with a previous argument; when a refactoring tool moves the application to a different place, what should be retained is not the absolute number of spaces preceding the arguments but the \emph{property} that these two arguments are vertically aligned.

Although not implemented in the DSL, these properties can be added to the set of \ensuremath{\Conid{Property}} as introduced in \autoref{sec:strongRetentiveness}.
For instance, we may have $\ensuremath{\mathsf{VertAligned}\;\Varid{x}\;\Varid{y}} \in \ensuremath{\Conid{Property}}$ for $\ensuremath{\Varid{x}}, \ensuremath{\Varid{y}} \in \ensuremath{\Conid{Name}}$ (i.e.~\ensuremath{\Varid{x}} and \ensuremath{\Varid{y}} are names of some regions); a CST satisfies such a property if region \ensuremath{\Varid{y}} is vertically aligned with region \ensuremath{\Varid{x}}.When \ensuremath{\Varid{get}} computes an AST from such a vertically aligned argument and produces consistency links, the links will not include (real) spaces preceding the argument as a part of the source region; instead, the links connect the property $\ensuremath{\mathsf{VertAligned}\;\Varid{x}\;\Varid{y}}$ (and the corresponding AST region).In the \ensuremath{\Varid{put}} direction, such links serve as directives to adjust the number of spaces preceding the argument to conform to the styling rule. In general, handling code styles can be very language-specific and is beyond the scope of this thesis but could be considered a direction of future work.

\section{Proofs About Retentive Lenses}
\label{sec:rLensProof}

\subsection{Composability}
\label{prf:retPrsv}

In this section, we show the proof of \autoref{thm:retPrsv} with the help of \autoref{fig:rlensComp} and the definition of retentive lens composition (\autoref{def:rlensComp}).

\begin{proof}[Hippocraticness Preservation]
We prove that the composite lens satisfies Hippocraticness with the help of \autoref{fig:rlensComp} and the definition of retentive lens composition (\autoref{def:rlensComp}).\\
Let \ensuremath{get_{AC}\;\Varid{a}\mathrel{=}(\Varid{c},ls_{ac})}. We prove \ensuremath{put_{AC}\;(\Varid{a},\Varid{c'},ls_{ac^{'}})\mathrel{=}\Varid{a'}\mathrel{=}\Varid{a}}. In this case, \ensuremath{\Varid{c'}\mathrel{=}\Varid{c}} and \ensuremath{ls_{ac^{'}}\mathrel{=}ls_{ac}}.
\begin{align*}
&  \ensuremath{put_{AC}\;(\Varid{a},\Varid{c'},ls_{ac^{'}})}\\
= & \mcond{ \ensuremath{put_{AC}\;(\Varid{a},\Varid{c'},ls_{ac^{'}})\mathrel{=}\Varid{a'}\mathrel{=}put_{AB}\;(\Varid{a},\Varid{b'},ls_{ab^{'}})} } \\[-0.8ex]
& \ensuremath{put_{AB}\;(\Varid{a},\Varid{b'},ls_{ab^{'}})} \\
= & \mcond{ \ensuremath{ls_{ab^{'}}} = \ensuremath{ls_{ac^{'}}} \cdot \ensuremath{ls_{b^{'}c^{'}}}^{\circ}\ } \\[-0.8ex]
& \ensuremath{put_{AB}}~(\ensuremath{\Varid{a}}, \ensuremath{\Varid{b'}}, \ensuremath{ls_{ac^{'}}} \cdot \ensuremath{ls_{b^{'}c^{'}}}^{\circ}) \\
= & \mcond{ \text{Since } \ensuremath{\Varid{c'}} = \ensuremath{\Varid{c}}, \text{ we have } \ensuremath{ls_{ac^{'}}} = \ensuremath{ls_{ac}} \text{ and } \ensuremath{ls_{b^{'}c^{'}}} = \ensuremath{ls_{b^{'}c}} } \\[-0.8ex]
& \ensuremath{put_{AB}}~(\ensuremath{\Varid{a}}, \ensuremath{\Varid{b'}}, (\ensuremath{ls_{ac}} \cdot \ensuremath{ls_{b^{'}c}}^{\circ})^{\circ}) \\
= & \mcond{  \ensuremath{\Varid{b'}} = \ensuremath{put_{BC}\;(\Varid{b},\Varid{c'},ls_{bc^{'}})} \text{ and } \ensuremath{\Varid{c'}} = \ensuremath{\Varid{c}} } \\[-0.8ex]
& \ensuremath{put_{AB}}~(\ensuremath{\Varid{a}}, \ensuremath{put_{BC}\;(\Varid{b},\Varid{c},ls_{bc})}, \ensuremath{ls_{ac}} \cdot \ensuremath{ls_{b^{'}c}}^{\circ}) \\
= & \mcond{ \text{By Hippocraticness of } \ensuremath{lens_{BC}}, \ensuremath{\Varid{b'}} = \ensuremath{put_{BC}\;(\Varid{b},\Varid{c},ls_{bc})} = \ensuremath{\Varid{b}} } \\[-0.8ex]
& \ensuremath{put_{AB}}~(\ensuremath{\Varid{a}}, \ensuremath{\Varid{b}}, \ensuremath{ls_{ac}} \cdot \ensuremath{ls_{bc}}^{\circ}) \\
= & \mcond{ \text{The link composition is \numcircledmod{6} in \autoref{fig:rlensComp}, and } \ensuremath{\Varid{b'}} = \ensuremath{\Varid{b}} } \\[-0.8ex]
& \ensuremath{put_{AB}}~(\ensuremath{\Varid{a}}, \ensuremath{\Varid{b}}, \ensuremath{ls_{ab}}) \\
= & \mcond{ \text{Hippocraticness of } \ensuremath{lens_{AB}}} \\[-0.8ex]
& \ensuremath{\Varid{a}} \ \text{.}
\end{align*}
\end{proof}

\begin{proof}[Correctness Preservation]
We prove that the composite lens satisfies Correctness with the help of \autoref{fig:rlensComp} and the definition of retentive lens composition (\autoref{def:rlensComp}).\\
Let \ensuremath{\Varid{a'}\mathrel{=}put_{AC}\;(\Varid{a},\Varid{c'},ls_{ac^{'}})}, we prove \ensuremath{\Varid{fst}\;(get_{AC}\;\Varid{a'})\mathrel{=}\Varid{c'}}.
\begin{align*}
& \ensuremath{\Varid{fst}\;(get_{AC}\;\Varid{a'})} \\
= & \mcond{ \text{Definition of \ensuremath{get_{AC}}}} \\[-0.8ex]
& \ensuremath{\Varid{fst}\;(get_{BC}\;(\Varid{fst}\;(get_{AB}\;\Varid{a'})))} \\
= & \mcond{ \text{\ensuremath{\Varid{a'}\mathrel{=}put_{AC}\;(\Varid{a},\Varid{c'},ls_{ac^{'}})}}} \\[-0.8ex]
& \ensuremath{\Varid{fst}\;(get_{BC}\;(\Varid{fst}\;(get_{AB}\;(put_{AC}\;(\Varid{a},\Varid{c'},ls_{ac^{'}})))))} \\
= & \mcond{ \ensuremath{put_{AC}\;(\Varid{a},\Varid{c'},ls_{ac^{'}})\mathrel{=}\Varid{a'}\mathrel{=}put_{AB}\;(\Varid{a},\Varid{b'},ls_{ab^{'}})} } \\[-0.8ex]
& \ensuremath{\Varid{fst}\;(get_{BC}\;(\Varid{fst}\;(get_{AB}\;(put_{AB}\;(\Varid{a},\Varid{b'},ls_{ab^{'}})))))} \\
= & \mcond{ \text{Correctness of } \ensuremath{lens_{AB}}} \\[-0.8ex]
& \ensuremath{\Varid{fst}\;(get_{BC}\;(\Varid{b'}))} \\
= & \mcond{ \ensuremath{\Varid{b'}\mathrel{=}put_{BC}\;(\Varid{b},\Varid{c'},ls_{bc^{'}})} } \\[-0.8ex]
& \ensuremath{\Varid{fst}\;(get_{BC}\;(put_{BC}\;(\Varid{b},\Varid{c'},ls_{bc^{'}})))} \\
= & \mcond{ \text{Correctness of } \ensuremath{lens_{BC}}} \\[-0.8ex]
& \ensuremath{\Varid{c'}} \ \text{.}
\end{align*}
\end{proof}

\begin{proof}[Retentiveness Preservation]
In \autoref{fig:rlensComp}, we prove $\ensuremath{\Varid{fst}} \cdot \ensuremath{ls_{ac}} \subseteq \ensuremath{\Varid{fst}} \cdot \ensuremath{ls_{a^{'}c^{'}}}$.

To finish the proof, we need the following lemma.
\begin{lemma}
\label{prop:fstPrsvDom}
Given a relation \ensuremath{\Conid{R}} and a function \ensuremath{\Varid{f}}, we have
\begin{align*}
\rdom (\ensuremath{\Varid{f}} \cdot R) &= \rdom R \quad\text{if } \ldom R \subseteq \rdom f \, \text{, and}\\
\ldom (R \cdot \ensuremath{\Varid{f}}) &= \ldom R \quad\text{if } \rdom R \subseteq \ldom f \ \text{.}
\end{align*}
\end{lemma}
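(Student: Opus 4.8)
The plan is to prove both equalities by a direct pointwise unfolding of the definitions of $\rdom$, $\ldom$, and relational composition, establishing each one as a pair of inclusions. The two statements are mirror images of each other (with $f$ appearing on the left of the composition in the first and on the right in the second), so I would carry out the first in full and then obtain the second by the same argument applied on the opposite side. Throughout, the one fact I must keep firmly in view is the function-to-relation lifting convention from the footnote: as a relation, $f \subseteq B \times A$ consists of the pairs $(f\,x, x)$ for $x \in \dom f$, so that $\rdom f = \dom f$ while $\ldom f$ is the image of $f$, with the domain and codomain orientation flipped.

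For the first equality I would expand $\rdom(f \cdot R) = \{\, c \mid \exists b.\ \exists a.\ (b,a) \in f \wedge (a,c) \in R \,\}$. The crucial simplification is that $(b,a) \in f$ holds exactly when $a \in \dom f$ and $b = f\,a$; hence the existential over $b$ collapses, yielding $\rdom(f \cdot R) = \{\, c \mid \exists a \in \dom f.\ (a,c) \in R \,\}$. The inclusion $\rdom(f \cdot R) \subseteq \rdom R$ is then immediate, since dropping the side condition $a \in \dom f$ can only enlarge the set. For the reverse inclusion I would take $c \in \rdom R$, pick a witness $a$ with $(a,c) \in R$, observe that $a \in \ldom R$, and invoke the hypothesis $\ldom R \subseteq \rdom f = \dom f$ to conclude $a \in \dom f$, so that $c \in \rdom(f \cdot R)$.

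For the second equality the mirror computation gives $\ldom(R \cdot f) = \{\, a \mid \exists y.\ (a,y) \in R \wedge y \in \ldom f \,\}$, this time using that $\exists z.\ (y,z) \in f$ holds precisely when $y$ lies in the image of $f$, i.e.\ $y \in \ldom f$. Again one inclusion is free, and the other follows by taking $a \in \ldom R$, choosing $y$ with $(a,y) \in R$, noting $y \in \rdom R$, and applying the hypothesis $\rdom R \subseteq \ldom f$ to supply the required witness $z$.

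There is no genuine obstacle here; the result is a routine fact of the relational calculus. The only place demanding care — and the step I would double-check — is the bookkeeping around the lifting convention, namely that $\rdom f = \dom f$ and that $\ldom f$ is the image of $f$ with domain and codomain swapped, together with the observation that the existential over the function's value collapses cleanly precisely because $f$ is (partially) functional. The two hypotheses $\ldom R \subseteq \rdom f$ and $\rdom R \subseteq \ldom f$ are exactly calibrated to furnish the witnesses needed for the non-trivial inclusions, so once the definitions are unfolded correctly the argument is mechanical.
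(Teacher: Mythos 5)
Your proof is correct and follows essentially the same route as the paper's: unfold $\rdom$, $\ldom$, and relational composition pointwise, use the hypothesis $\ldom R \subseteq \rdom f$ (resp.\ $\rdom R \subseteq \ldom f$) to supply the witness for the existential over $f$'s other component, and obtain the second equation by symmetry. Your explicit attention to the function-to-relation lifting convention is a welcome clarification of a point the paper's own proof leaves implicit.
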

\begin{proof}
We prove the first equation; the second equation is symmetric.\\
Suppose $f : X \to Y$ and $R : Y \sim Z$.
By definition, $\rdom R = \{\, z \in Z \mid \exists y \in Y, \ y \mathrel{R} z \,\}$ and $\rdom (f \cdot R) = \{\, z \in Z \mid \exists y \in Y,\ \exists \ x \in X, \ x \mathrel{f} y \mathrel{R} z \,\}$.
Since $\ldom R \subseteq \rdom f$, we know that $\forall y. \ y \in \ldom R \Rightarrow y \in \rdom f$;
on the other hand, we also have $y \in \rdom f \Rightarrow \exists x . \ x \in X$.
Therefore, $\forall y. \ y \in \ldom R \Rightarrow \exists x . \ x \in X$ and thus $\rdom (f \cdot R) = \{\, z \in Z \mid \exists y \in Y,\ \exists \ x \in X, \ x \mathrel{f} y \mathrel{R} z \,\} = \{\, z \in Z \mid \exists y \in Y,\ y \mathrel{R} z \,\} = \rdom R$.
\end{proof}

Now, we present the main proof:
\begin{align*}
& \ensuremath{\Varid{fst}} \cdot \ensuremath{ls_{ac}} \\
= & \mcond{ R = R \cdot id_{\rdom R}} \\[-0.8ex]
& \ensuremath{\Varid{fst}} \cdot \ensuremath{ls_{ac^{'}}} \cdot \ensuremath{\Varid{id}}_{\rdom (\ensuremath{ls_{ac^{'}}})} \\
\subseteq & \mcond{ \ensuremath{\Varid{id}}_{\rdom (\ensuremath{ls_{ac^{'}}})} \subseteq \ensuremath{ls_{c^{'}b^{'}}} \cdot \ensuremath{ls_{c^{'}b^{'}}}^{\circ} \text{ by sub-proof-1 below} } \\[-0.8ex]
& \ensuremath{\Varid{fst}} \cdot \ensuremath{ls_{ac^{'}}} \cdot (\ensuremath{ls_{c^{'}b^{'}}} \cdot \ensuremath{ls_{c^{'}b^{'}}}^{\circ}) \\
= & \mcond{ \text{Relation composition is associative} } \\[-0.8ex]
& \ensuremath{\Varid{fst}} \cdot (\ensuremath{ls_{ac^{'}}} \cdot \ensuremath{ls_{c^{'}b^{'}}}) \cdot \ensuremath{ls_{c^{'}b^{'}}}^{\circ} \\
= & \mcond{ \ensuremath{ls_{ab^{'}}} = \ensuremath{ls_{ac^{'}}} \cdot \ensuremath{ls_{c^{'}b^{'}}} \text{ (\numcircledmod{6} in \autoref{fig:rlensComp})}  } \\[-0.8ex]
& \ensuremath{\Varid{fst}} \cdot \ensuremath{ls_{ab^{'}}} \cdot \ensuremath{ls_{c^{'}b^{'}}}^{\circ} \\
\subseteq & \mcond{ \text{Retentiveness of } \ensuremath{lens_{AB}} \text{ and } \ensuremath{ls_{c^{'}b^{'}}}^{\circ} = \ensuremath{ls_{b^{'}c^{'}}} } \\[-0.8ex]
& \ensuremath{\Varid{fst}} \cdot \ensuremath{ls_{a^{'}b^{'}}} \cdot \ensuremath{ls_{b^{'}c^{'}}} \\
= & \mcond{ \ensuremath{ls_{a^{'}c^{'}}} = \ensuremath{ls_{a^{'}b^{'}}} \cdot \ensuremath{ls_{b^{'}c^{'}}} } \\[-0.8ex]
& \ensuremath{\Varid{fst}} \cdot \ensuremath{ls_{a^{'}c^{'}}} \ \text{.}
\end{align*}

sub-proof-1: $\ensuremath{\Varid{id}}_{\rdom (\ensuremath{ls_{ac^{'}}})} \subseteq \ensuremath{ls_{c^{'}b^{'}}} \cdot \ensuremath{ls_{c^{'}b^{'}}}^{\circ} \Leftrightarrow \rdom (\ensuremath{ls_{ac^{'}}}) \subseteq \ldom (\ensuremath{ls_{c^{'}b^{'}}})$ and we prove the latter using linear proofs.
The right column of each line gives the reason how it is derived.

\begin{align*}
1.\quad & \ldom (\ensuremath{ls_{c^{'}b^{'}}}) = \rdom (\ensuremath{ls_{b^{'}c^{'}}})   \quad&\quad   \text{definition of relations} \\
2.\quad & \ensuremath{\Varid{fst}} \cdot \ensuremath{ls_{bc^{'}}} \subseteq \ensuremath{\Varid{fst}} \cdot \ensuremath{ls_{b^{'}c^{'}}}   \quad&\quad   \text{Retentiveness of \ensuremath{lens_{BC}}}\\
3.\quad & \rdom (\ensuremath{\Varid{fst}} \cdot \ensuremath{ls_{bc^{'}}}) \subseteq \rdom (\ensuremath{\Varid{fst}} \cdot \ensuremath{ls_{b^{'}c^{'}}})   \quad&\quad   \text{2 and definition of relation inclusion} \\
4.\quad & \rdom (\ensuremath{ls_{bc^{'}}}) \subseteq \rdom (\ensuremath{ls_{b^{'}c^{'}}})   \quad&\quad   \text{3 and \autoref{prop:fstPrsvDom}} \\
5.\quad & \ensuremath{ls_{bc^{'}}} = (\ensuremath{ls_{ac^{'}}}^{\circ} \cdot \ensuremath{ls_{ab}})^{\circ} = (\ensuremath{ls_{c^{'}a}} \cdot \ensuremath{ls_{ab}})^{\circ}   \quad&\quad   \text{\numcircledmod{3} in \autoref{fig:rlensComp}} \\
6.\quad & \rdom (\ensuremath{ls_{bc^{'}}}) = \rdom (\ensuremath{ls_{c^{'}a}} \cdot \ensuremath{ls_{ab}})^{\circ}   \quad&\quad   \text{5}\\
7.\quad & \rdom (\ensuremath{ls_{c^{'}a}} \cdot \ensuremath{ls_{ab}})^{\circ} = \ldom (\ensuremath{ls_{c^{'}a}} \cdot \ensuremath{ls_{ab}})   \quad&\quad    \text{definition of converse relation} \\
8. \quad & \ldom (\ensuremath{ls_{c^{'}a}} \cdot \ensuremath{ls_{ab}}) \subseteq \ldom (\ensuremath{ls_{c^{'}a}})   \quad&\quad    \text{definition of relation composition} \\
9. \quad & \ldom (\ensuremath{ls_{c^{'}a}}) = \rdom (\ensuremath{ls_{ac^{'}}})   \quad&\quad    \text{definition of converse relation} \\
10. \quad & \rdom (\ensuremath{ls_{bc^{'}}}) = \rdom (\ensuremath{ls_{ac^{'}}})   \quad&\quad    \text{6, 7, 8, and 9} \\
11. \quad & \rdom (\ensuremath{ls_{ac^{'}}}) \subseteq \ldom (\ensuremath{ls_{c^{'}b^{'}}})   \quad&\quad    \text{10, 4, and 1}
\end{align*}

\end{proof}

\subsection{Retentiveness of the DSL}
\label{app:proof}

In this section, we prove that the \ensuremath{\Varid{get}} and \ensuremath{\Varid{put}} semantics given in \autoref{sec:DSLSem} does satisfy the three properties (\autoref{def:retLens}) of a retentive lens.
Most of the proofs are proved by induction on the size of the trees.

\begin{lemma}
  The \ensuremath{\Varid{get}} function described in \autoref{sec:getsem} is total.
\end{lemma}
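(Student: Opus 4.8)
The plan is to prove the lemma by strong induction on the size (number of constructor nodes) of the source tree~$s$, carried out simultaneously over all the $\textit{get}_{\textit{SV}}$ functions of the program---a single induction ranked by the size of the argument---since a recursive call may well target the $\textit{get}$ of a different consistency relation (with the appropriate $\textit{TypeOf}$ subscripts), yet all of them are defined by the same scheme~(\ref{equ:get}). Unfolding~(\ref{equ:get}), there are three obligations to discharge: (a)~the clause ``find $k$ such that $\textit{spat}_k \sim \textit{vpat}_k \in R$ and $\textit{isMatch}(\textit{spat}_k, s)$'' succeeds and pins down $k$ unambiguously; (b)~every recursive invocation hidden in $\textit{vls} = \textit{get} \circ \textit{decompose}(\textit{spat}_k, s)$ is applied to a tree strictly smaller than~$s$, so that the induction hypothesis applies; and (c)~the call $\textit{reconstruct}(\textit{vpat}_k, \textit{fst} \circ \textit{vls})$ is defined. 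Once these hold, the set-comprehensions defining $l_{\mathit{root}}$ and $\textit{links}$ are manifestly total, so $\textit{get}_{\textit{SV}}\,s$ is defined.

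Each obligation is discharged by one of the syntactic restrictions of Section~\ref{sec:synres}. For~(a), pattern coverage~(i) guarantees that at least one source pattern $\textit{spat}_k$ matches~$s$, while source pattern disjointness~(ii) rules out two distinct source patterns matching the same~$s$; hence exactly one $k$ is selected and the definition is single-valued. For~(b), note first that $\textit{decompose}(\textit{spat}_k, s)$ is defined because $s$ matches $\textit{spat}_k$; the crucial point is that restriction~(iii) forbids a bare variable as a source pattern, so $\textit{spat}_k$ is either a wildcard---in which case $\textit{Vars}(\textit{spat}_k) = \emptyset$ and there are no recursive calls at all, a base case---or it carries a constructor at its root, whence every variable of $\textit{spat}_k$ occurs strictly below the root. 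Consequently each subtree returned by $\textit{decompose}(\textit{spat}_k, s)$ is a proper subtree of~$s$ and thus strictly smaller, and the induction hypothesis makes $\textit{vls}$ a total function on $\textit{Vars}(\textit{spat}_k)$. For~(c), restriction~(iv) ensures that $\textit{vpat}_k$ contains no wildcards, so $\textit{reconstruct}$ is applicable, and restriction~(v) ensures $\textit{Vars}(\textit{vpat}_k) = \textit{Vars}(\textit{spat}_k) = \dom(\textit{fst} \circ \textit{vls})$, so a value is supplied for every variable the reconstruction needs.

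The remaining pieces require no separate argument: $\textit{spat}' = \textit{eraseVars}(\textit{fillWildcards}(\textit{spat}_k, s))$ is defined because $s$ matches $\textit{spat}_k$, and $l_{\mathit{root}}$ and $\textit{links}$ are finite set-builder expressions over already-defined data. This closes the induction and establishes totality. The main obstacle---and the only genuinely non-routine step---is the termination argument~(b): it is precisely here that restriction~(iii) is indispensable, since a bare source variable would permit a recursive call on~$s$ itself and destroy well-foundedness (the same phenomenon that, on the $\textit{put}$ side, is guarded against by preferring non-variable view patterns). Everything else is bookkeeping that follows directly from the coverage, disjointness, wildcard, and variable-matching restrictions.
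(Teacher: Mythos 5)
Your proof is correct and takes essentially the same route as the paper's: source-pattern coverage (plus disjointness) makes the rule selection succeed and be unambiguous, and the ban on bare variable source patterns forces every recursive call onto a proper subtree, giving termination by induction on the size of the source. You spell out a few extra well-definedness obligations (for \ensuremath{\Varid{reconstruct}} and the link comprehensions) that the paper leaves implicit, but the substance of the argument is the same.
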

\begin{proof}
  Because we require source pattern coverage, \ensuremath{\Varid{get}} is defined for all the input data.
  Besides, since our DSL syntactically restricts source pattern \ensuremath{\Varid{spat}_{\Varid{k}}} to not being a bare variable pattern, for any $\ensuremath{\Varid{v}} \in \ensuremath{\Conid{Vars}}(\ensuremath{\Varid{spat}_{\Varid{k}}})$, $\ensuremath{\Varid{decompose}}(\ensuremath{\Varid{spat}_{\Varid{k}}}, \ensuremath{\Varid{s}})$ is a proper subtree of \ensuremath{\Varid{s}}.
  So the recursion always decreases the size of the \ensuremath{\Varid{s}} parameter and thus terminates.
\end{proof}

\begin{lemma}
  For a pair of \ensuremath{\Varid{get}} and \ensuremath{\Varid{put}} described in \autoref{sec:putsem} and any \ensuremath{\Varid{s}\mathbin{:}\Conid{S}}, $\ensuremath{\Varid{check}}(\ensuremath{\Varid{s}}, \ensuremath{\Varid{get}}(\ensuremath{\Varid{s}})) = \ensuremath{\Conid{True}}$.
\end{lemma}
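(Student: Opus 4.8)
The plan is to prove the statement by a strengthened induction that tracks how source paths accumulate along the recursion. First I would record the key structural fact about \emph{get}: by \autoref{equ:get}, every call $\mathit{get}\,s$ emits the root link $l_{\mathit{root}}$ whose view path is the empty path $[\,]$. Consequently the guard of \emph{check} always selects the \emph{chkWithLink} branch on inputs of the form $(s, \mathit{get}\,s)$, and the \emph{chkNoLink} branch (together with its termination condition $\mathit{cond}_3$) is never exercised here. So the whole proof only has to discharge the four conditions of \emph{chkWithLink} (\autoref{def:chkWithLink}).

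Because each recursive call of \emph{check} keeps the \emph{original} source $s$ while the source paths stored in the links are absolute with respect to $s$, I would generalise the goal to a path-shifted form suitable for induction: for every $s$ and every path $p$ with $s' = \mathit{sel}(s,p)$ a well-typed subtree on which \emph{get} is defined, writing $(v',ls'') = \mathit{get}\,s'$ and $ls'_p = \{\,((\mathit{spat}, p \plus \mathit{spath}), r_v) \mid ((\mathit{spat},\mathit{spath}), r_v) \in ls''\,\}$, we have $\mathit{check}(s, v', ls'_p) = \mathit{True}$. The lemma is the case $p = [\,]$. I would prove this by well-founded induction on the size of $s'$. In the inductive step, the shifted root link $((\mathit{spat}'_{\mathit{root}}, p), (\mathit{eraseVars}\,\mathit{vpat}_k, [\,]))$ is the unique link with empty view path and shortest source path $p$ (all other links carry a strictly longer source path $p \plus \mathit{path}(\mathit{spat}_k, v) \plus \cdots$ since source patterns are never bare variables), so \emph{chkWithLink} picks $l = l_{\mathit{root}}$. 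Condition $\mathit{cond}_1$ holds because $\mathit{spat}'_{\mathit{root}}$ and $\mathit{eraseVars}\,\mathit{vpat}_k$ match $\mathit{sel}(s,p)=s'$ and $v'$ respectively by construction; $\mathit{cond}_2$ holds with the very rule \emph{get} used, and its uniqueness follows from source-pattern disjointness (\autoref{sec:synres}). For $\mathit{cond}_4$ I would unfold $\mathit{vs}\,t = \mathit{fst}(\mathit{get}\,s_t)$ and observe that $\mathit{divide}(\mathit{path}(\mathit{vpat}_k,t), ls'_p \setminus \{l\})$ is the $(p \plus \mathit{path}(\mathit{spat}_k,t))$-shifted copy of $\mathit{snd}(\mathit{get}\,s_t)$, where $s_t = \mathit{sel}(s, p \plus \mathit{path}(\mathit{spat}_k,t))$; this is exactly the generalised goal at path $p \plus \mathit{path}(\mathit{spat}_k,t)$, and since $\mathit{path}(\mathit{spat}_k,t)$ is nonempty, $s_t$ is strictly smaller than $s'$, so the induction hypothesis applies.

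The main obstacle is $\mathit{cond}_3$, the exact set-equality $ls = \{l\} \cup \bigcup_t \mathit{addVPrefix}(\mathit{path}(\mathit{vpat}_k,t), \mathit{divide}(\mathit{path}(\mathit{vpat}_k,t), ls \setminus \{l\}))$. Proving it requires the combinatorial observation that in a linear view pattern the paths $\mathit{path}(\mathit{vpat}_k,t)$ of distinct variables $t$ are pairwise incomparable (none is a prefix of another, because a variable occupies a leaf position of the pattern). Given this, for each $t$ the operation $\mathit{divide}(\mathit{path}(\mathit{vpat}_k,t), \cdot)$ retains precisely the recursive links contributed by variable $t$ in \autoref{equ:get} and strips their shared view prefix, and $\mathit{addVPrefix}$ restores it, so the two operations round-trip and the union over $t$ reassembles exactly the non-root links of $ls$; adding back $l_{\mathit{root}}$ gives $ls$. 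I expect the bookkeeping that $\mathit{divide}$ selects the right links (i.e.\ that no cross-variable link leaks into the wrong bucket) to be the delicate part, and it is where the syntactic restrictions of \autoref{sec:synres}---linearity and the absence of bare variables and wildcards on the relevant sides---are essential.
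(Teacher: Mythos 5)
Your proof is correct and follows the same overall route as the paper's: observe that $\ensuremath{\Varid{get}}$ always emits $l_{\mathit{root}}$ with empty view path so that $\ensuremath{\Varid{check}}$ lands in \ensuremath{\Varid{chkWithLink}}, unfold $\ensuremath{\Varid{get}}$, and discharge the four conditions by structural induction. The one genuine difference is how the inductive hypothesis is made to apply in $\ensuremath{\Varid{cond}}_4$: the recursive call there is $\ensuremath{\Varid{check}}(\ensuremath{\Varid{s}}, \ensuremath{\Varid{fst}}(\ensuremath{\Varid{get}}(s_t)), \cdot)$ with the \emph{whole} source $\ensuremath{\Varid{s}}$ and with source paths still absolute, which is not literally an instance of $\ensuremath{\Varid{check}}(s_t, \ensuremath{\Varid{get}}(s_t))$. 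The paper's proof writes ``by the inductive hypothesis'' and leaves this mismatch to be repaired by the separately stated Focusing lemma (\autoref{lem:focus}); you instead strengthen the induction hypothesis to a path-shifted form, which closes the same gap internally and makes the appeal to the IH exact. You are also more explicit than the paper on $\ensuremath{\Varid{cond}}_3$ (pairwise incomparability of variable paths in the view pattern, hence the $\ensuremath{\Varid{divide}}$/$\ensuremath{\Varid{addVPrefix}}$ round trip) and on why $l_{\mathit{root}}$ is the link selected (shortest source path); the paper dismisses both with ``following the definition.'' Both arguments are sound; yours trades the reusable Focusing lemma for a heavier induction statement, but is self-contained and arguably more rigorous at the point where the paper is tersest.
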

\begin{proof}
  We prove the lemma by induction on the structure of \ensuremath{\Varid{s}}.
  By the definition of \ensuremath{\Varid{get}} and \ensuremath{\Varid{check}},
  \begin{align*}
      &\ensuremath{\Varid{check}}(\ensuremath{\Varid{s}}, \ensuremath{\Varid{get}}(\ensuremath{\Varid{s}})) \\
    = & \mcond{ \ensuremath{\Varid{get}}(\ensuremath{\Varid{s}}) \text{ produces consistency links}} \\[-0.8ex]
    & \ensuremath{\Varid{chkWithLink}}(\ensuremath{\Varid{s}}, \ensuremath{\Varid{get}}(\ensuremath{\Varid{s}})) \\
    = & \mcond{ \text{Unfolding } \ensuremath{\Varid{get}}(s) }\\[-0.8ex]
     &\ensuremath{\Varid{chkWithLink}}(\ensuremath{\Varid{s}}, \ensuremath{\Varid{reconstruct}}(\ensuremath{\Varid{vpat}_{\Varid{k}}}, \ensuremath{\Varid{fst}} \circ \ensuremath{\Varid{vls}}), l_\ensuremath{\Varid{root}} \cup \ensuremath{\Varid{links}})
  \end{align*}
  where \ensuremath{\Varid{vpat}_{\Varid{k}}}, \ensuremath{\Varid{fst}}, \ensuremath{\Varid{vls}}, $l_\ensuremath{\Varid{root}}$ and \ensuremath{\Varid{links}} are those in the definition of \ensuremath{\Varid{get}} (\ref{equ:get}).
  In \hyperref[def:chkWithLink]{\ensuremath{\Varid{chkWithLink}}}, \ensuremath{\Varid{cond}_{\mathrm{1}}} and \ensuremath{\Varid{cond}_{\mathrm{2}}} are true by the evident semantics of pattern matching functions such as \ensuremath{\Varid{isMatch}} and \ensuremath{\Varid{reconstruct}}.
  \ensuremath{\Varid{cond}_{\mathrm{3}}} is true following the definition of $l_\ensuremath{\Varid{root}}$, \ensuremath{\Varid{links}}, and \ensuremath{\Varid{divide}}.
  Finally, \ensuremath{\Varid{cond}_{\mathrm{4}}} is true by the inductive hypothesis.
\end{proof}

\begin{lemma}{(Focusing)}\label{lem:focus}
  If $\ensuremath{\Varid{sel}}(\ensuremath{\Varid{s}}, p) = s'$ and for any $((\_, \ensuremath{\Varid{spath}}), (\_, \_)) \in \ensuremath{\Varid{ls}}$, \ensuremath{\Varid{p}} is a prefix of \ensuremath{\Varid{spath}}, then
  \[\ensuremath{\Varid{put}}(\ensuremath{\Varid{s}}, \ensuremath{\Varid{v}}, \ensuremath{\Varid{ls}}) = \ensuremath{\Varid{put}}(\ensuremath{\Varid{s'}}, \ensuremath{\Varid{v}}, \ensuremath{\Varid{ls'}}) \,\text{ and }\, \ensuremath{\Varid{check}}(\ensuremath{\Varid{s}}, \ensuremath{\Varid{v}}, \ensuremath{\Varid{ls}}) = \ensuremath{\Varid{check}}(\ensuremath{\Varid{s'}}, \ensuremath{\Varid{v}}, \ensuremath{\Varid{ls'}})\]
  where $\ensuremath{\Varid{ls'}} = \myset{((a, b), (c, d)) \mid ((a, p \ensuremath{\plus } b), (c, d))}$.
\end{lemma}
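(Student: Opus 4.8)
The plan is to prove both equalities simultaneously by induction on the recursion of $\ensuremath{\Varid{put}}$ and $\ensuremath{\Varid{check}}$ (equivalently, on the well-founded measure that justifies their termination), exploiting that the prefix-stripping operation relating $\ensuremath{\Varid{ls}}$ and $\ensuremath{\Varid{ls'}}$ is a bijection. Under the hypothesis that every source path in $\ensuremath{\Varid{ls}}$ has $\ensuremath{\Varid{p}}$ as a prefix, the map $\phi$ sending $((a, \ensuremath{\Varid{p}} \plus b), (c, d))$ to $((a, b), (c, d))$ is a bijection from $\ensuremath{\Varid{ls}}$ onto $\ensuremath{\Varid{ls'}}$ that leaves every source pattern, every view pattern, and every view path untouched, changing only source paths. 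First I would record three routine facts about $\phi$: (i)~it commutes with $\ensuremath{\Varid{divide}}$ and $\ensuremath{\Varid{addVPrefix}}$, since both act on view paths only; precisely, applying $\phi$ to $\ensuremath{\Varid{divide}}(\ensuremath{\Varid{prefix}}, \ensuremath{\Varid{ls}})$ yields $\ensuremath{\Varid{divide}}(\ensuremath{\Varid{prefix}}, \ensuremath{\Varid{ls'}})$; (ii)~it commutes with set difference, so $\phi(\ensuremath{\Varid{ls}} \setminus \myset{l}) = \ensuremath{\Varid{ls'}} \setminus \myset{\phi(l)}$; and (iii)~it preserves the presence of a link with empty view path, so that $\ensuremath{\Varid{put}}$ and $\ensuremath{\Varid{check}}$ take the same branch on $(\ensuremath{\Varid{s}}, \ensuremath{\Varid{v}}, \ensuremath{\Varid{ls}})$ and on $(\ensuremath{\Varid{s'}}, \ensuremath{\Varid{v}}, \ensuremath{\Varid{ls'}})$.

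For the $\ensuremath{\Varid{put}}$ equality the crucial observation is that the top portion of the result never inspects the source tree: in the no-link branch it is $\ensuremath{\Varid{fillWildcardsWD}}\,\ensuremath{\Varid{spat}_{\Varid{k}}}$, and in the with-link branch it is built from the source pattern carried inside the chosen link via $\ensuremath{\Varid{fillWildcards}}(\ensuremath{\Varid{spat}_{\Varid{k}}}, \ensuremath{\Varid{spat}})$. Because $\phi$ fixes source patterns, view patterns, and view paths, the selected rule index $k$ agrees on the two sides; and in the with-link branch the chosen link --- the one with empty view path and shortest source path --- corresponds under $\phi$, since prepending the fixed prefix $\ensuremath{\Varid{p}}$ to all source paths preserves their length ordering, so shortest maps to shortest. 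Hence the top portions coincide and the $\ensuremath{\Varid{inj}}$ coercion (determined by the rule index) is the same. The recursive calls then compare $\ensuremath{\Varid{put}}(\ensuremath{\Varid{s}}, \ensuremath{\Varid{vs}}\,t, \ensuremath{\Varid{divide}}(\ensuremath{\Varid{path}}(\ensuremath{\Varid{vpat}_{\Varid{k}}}, t), \ensuremath{\Varid{m}}))$ with $\ensuremath{\Varid{put}}(\ensuremath{\Varid{s'}}, \ensuremath{\Varid{vs}}\,t, \ensuremath{\Varid{divide}}(\ensuremath{\Varid{path}}(\ensuremath{\Varid{vpat}_{\Varid{k}}}, t), \phi(\ensuremath{\Varid{m}})))$, where $\ensuremath{\Varid{m}}$ is $\ensuremath{\Varid{ls}}$ in the no-link branch and $\ensuremath{\Varid{ls}} \setminus \myset{l}$ in the with-link branch; by facts~(i) and~(ii) the two sub-link-sets are again related by $\phi$ and still have $\ensuremath{\Varid{p}}$ as a common source-path prefix, while $\ensuremath{\Varid{sel}}(\ensuremath{\Varid{s}}, \ensuremath{\Varid{p}}) = \ensuremath{\Varid{s'}}$ is unchanged, so the induction hypothesis applies to each recursive call (whose view argument $\ensuremath{\Varid{vs}}\,t$ is a proper subtree of $\ensuremath{\Varid{v}}$ whenever a non-variable rule is chosen, and otherwise descends on the same measure that makes $\ensuremath{\Varid{put}}$ terminate), and the reconstructions agree.

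The $\ensuremath{\Varid{check}}$ equality follows the same branching and the same recursion, so the only genuinely new point is the single place where $\ensuremath{\Varid{check}}$ reads the source tree, namely $\ensuremath{\Varid{cond}}_1$ of $\ensuremath{\Varid{chkWithLink}}$, which evaluates $\ensuremath{\Varid{isMatch}}(\ensuremath{\Varid{spat}}, \ensuremath{\Varid{sel}}(\ensuremath{\Varid{s}}, \ensuremath{\Varid{spath}}))$. Writing $\ensuremath{\Varid{spath}} = \ensuremath{\Varid{p}} \plus \ensuremath{\Varid{spath}''}$ for the chosen link of $\ensuremath{\Varid{ls}}$ and its $\phi$-image in $\ensuremath{\Varid{ls'}}$, I would invoke compositionality of path selection, $\ensuremath{\Varid{sel}}(\ensuremath{\Varid{s}}, \ensuremath{\Varid{p}} \plus \ensuremath{\Varid{spath}''}) = \ensuremath{\Varid{sel}}(\ensuremath{\Varid{sel}}(\ensuremath{\Varid{s}}, \ensuremath{\Varid{p}}), \ensuremath{\Varid{spath}''}) = \ensuremath{\Varid{sel}}(\ensuremath{\Varid{s'}}, \ensuremath{\Varid{spath}''})$, so the two $\ensuremath{\Varid{isMatch}}$ tests agree while the view conjunct is literally unchanged. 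Each remaining condition is either a structural equality between link sets ($\ensuremath{\Varid{cond}}_1$ of $\ensuremath{\Varid{chkNoLink}}$, $\ensuremath{\Varid{cond}}_3$ of $\ensuremath{\Varid{chkWithLink}}$), which transfers across $\phi$ by facts~(i) and~(ii); or depends only on the chosen link's patterns and the rule set ($\ensuremath{\Varid{cond}}_2$ of $\ensuremath{\Varid{chkWithLink}}$ and the termination conjunct $\ensuremath{\Varid{cond}}_3$ of $\ensuremath{\Varid{chkNoLink}}$), which $\phi$ fixes; or is a recursive conjunct ($\ensuremath{\Varid{cond}}_2$ of $\ensuremath{\Varid{chkNoLink}}$, $\ensuremath{\Varid{cond}}_4$ of $\ensuremath{\Varid{chkWithLink}}$) discharged by the induction hypothesis exactly as for $\ensuremath{\Varid{put}}$.

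I expect the main obstacle to be making the commutation facts precise rather than anything conceptually deep: one must check that stripping the source prefix $\ensuremath{\Varid{p}}$ genuinely commutes with $\ensuremath{\Varid{divide}}$, $\ensuremath{\Varid{addVPrefix}}$, and set difference, and --- the most delicate point --- that the shortest-source-path tie-break selects $\phi$-corresponding links on the two sides, so that the two recursion trees stay aligned. Once those facts and the single application of $\ensuremath{\Varid{sel}}$-compositionality are in hand, the two computations proceed in lockstep and the induction closes.
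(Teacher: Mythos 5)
Your proposal is correct and rests on exactly the same observation as the paper's own (two-sentence) justification: the source tree is consulted only via $\ensuremath{\Varid{sel}}$ at the source paths of links, all of which lie under the prefix~$p$, so trimming $s$ to $s'$ while stripping $p$ from the link paths changes nothing. Your write-up is substantially more careful than the paper's — in particular the commutation of prefix-stripping with $\ensuremath{\Varid{divide}}$ and the preservation of the shortest-path tie-break are points the paper silently assumes — but it is the same argument, just made precise.
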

\begin{proof}
  From the definitions of \ensuremath{\Varid{put}} and \ensuremath{\Varid{check}}, we find that their first argument (of type \ensuremath{\Conid{S}}) is invariant during the recursive process.
  In fact, the first argument is only used when checking whether a link in \ensuremath{\Varid{ls}} is valid with respect to the source tree.
  Since all links in \ensuremath{\Varid{ls}} connect to the subtree \ensuremath{\Varid{s'}}, the parts in \ensuremath{\Varid{s}} above \ensuremath{\Varid{s'}} can be trimmed and the identity holds.
\end{proof}

\begin{theorem}{(Hippocraticness of the DSL)}
  For any \ensuremath{\Varid{s}} of type \ensuremath{\Conid{S}},
    \[ \ensuremath{\Varid{put}}(\ensuremath{\Varid{s}}, \ensuremath{\Varid{get}}(\ensuremath{\Varid{s}}))\footnotemark = s \ \text{.} \]
\end{theorem}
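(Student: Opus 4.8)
The plan is to prove the identity by induction on the structure (size) of the source~$s$, following the recursive skeleton of \ensuremath{\Varid{get}}. Suppose $\ensuremath{\Varid{get}}(s) = (v, \ensuremath{\Varid{ls}})$, computed by \eqref{equ:get} with the (unique, by pattern coverage and source disjointness) rule $\ensuremath{\Varid{spat}_{\Varid{k}}} \sim \ensuremath{\Varid{vpat}_{\Varid{k}}}$ such that $s$ matches $\ensuremath{\Varid{spat}_{\Varid{k}}}$. Write $s_t = \ensuremath{\Varid{sel}}(s, \ensuremath{\Varid{path}}(\ensuremath{\Varid{spat}_{\Varid{k}}}, t))$ and $(v_t, \ensuremath{\Varid{ls}}_t) = \ensuremath{\Varid{get}}(s_t)$ for each $t \in \ensuremath{\Conid{Vars}}\,\ensuremath{\Varid{spat}_{\Varid{k}}}$, so that $v = \ensuremath{\Varid{reconstruct}}(\ensuremath{\Varid{vpat}_{\Varid{k}}}, \ensuremath{\Varid{fst}} \circ \ensuremath{\Varid{vls}})$ and $\ensuremath{\Varid{ls}} = l_{\mathit{root}} \cup \ensuremath{\Varid{links}}$. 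The first step is to observe that $l_{\mathit{root}}$ carries the view region $(\ensuremath{\Varid{eraseVars}}\,\ensuremath{\Varid{vpat}_{\Varid{k}}}, [\,])$, whose path is empty; hence the root of $v$ lies within a linked region and \ensuremath{\Varid{put}} takes its second branch~\eqref{equ:put2} rather than \eqref{equ:put1}.

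Next I would pin down which link \ensuremath{\Varid{put}} selects. Every link in $\ensuremath{\Varid{links}}$ has source path $\ensuremath{\Varid{path}}(\ensuremath{\Varid{spat}_{\Varid{k}}}, t) \plus \ensuremath{\Varid{spath}}$, which is nonempty because syntactic restriction~(iii) forbids bare-variable source patterns, so $\ensuremath{\Varid{path}}(\ensuremath{\Varid{spat}_{\Varid{k}}}, t) \neq [\,]$. Thus $l_{\mathit{root}}$ is the unique link with the shortest (empty) source path, and \ensuremath{\Varid{put}}'s tie-free choice in \eqref{equ:put2} must be $l = l_{\mathit{root}}$, with $\ensuremath{\Varid{spat}} = \ensuremath{\Varid{spat}}'$ and $\ensuremath{\Varid{vpath}} = [\,]$. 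The rule index rediscovered inside \eqref{equ:put2} then coincides with the $k$ from \ensuremath{\Varid{get}}: since $\ensuremath{\Varid{spat}}' = \ensuremath{\Varid{eraseVars}}(\ensuremath{\Varid{fillWildcards}}(\ensuremath{\Varid{spat}_{\Varid{k}}}, s))$ exposes the top constructor of $s$, source-pattern disjointness~(ii) forces it to arise from $\ensuremath{\Varid{spat}_{\Varid{k}}}$ alone. A short calculation on the pattern operations shows $\ensuremath{\Varid{spat}}'_k = \ensuremath{\Varid{fillWildcards}}(\ensuremath{\Varid{spat}_{\Varid{k}}}, \ensuremath{\Varid{spat}}')$ reinstates exactly the top portion of $s$ (wildcard fields filled from $s$, variables left as holes), and because $s$ has type~$\ensuremath{\Conid{S}}$ we have $\ensuremath{\Conid{TypeOf}}\,\ensuremath{\Varid{spat}_{\Varid{k}}} = \ensuremath{\Conid{S}}$, so the coercion $\ensuremath{\Varid{inj}}_{\ensuremath{\Conid{S}} \to \ensuremath{\Conid{S}}@V}$ is the identity and may be dropped.

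It then remains to show each recursive value $\ensuremath{\Varid{ss}}\,t$ equals $s_t$, which is where the link bookkeeping is done. The recursive call in \eqref{equ:rec2} is $\ensuremath{\Varid{put}}(s, v_t, \ensuremath{\Varid{divide}}(\ensuremath{\Varid{path}}(\ensuremath{\Varid{vpat}_{\Varid{k}}}, t), \ensuremath{\Varid{ls}} \setminus \{l\}))$. Since distinct variables of $\ensuremath{\Varid{vpat}_{\Varid{k}}}$ occupy pairwise path-incomparable positions, $\ensuremath{\Varid{divide}}$~\eqref{equ:divide} cleanly isolates precisely the contributions of variable~$t$ and strips the view prefix $\ensuremath{\Varid{path}}(\ensuremath{\Varid{vpat}_{\Varid{k}}}, t)$, leaving source regions whose paths all carry the prefix $\ensuremath{\Varid{path}}(\ensuremath{\Varid{spat}_{\Varid{k}}}, t)$. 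I would then invoke the Focusing lemma (\autoref{lem:focus}) with this prefix and $s_t = \ensuremath{\Varid{sel}}(s, \ensuremath{\Varid{path}}(\ensuremath{\Varid{spat}_{\Varid{k}}}, t))$ to rewrite the call as $\ensuremath{\Varid{put}}(s_t, v_t, \widetilde{\ensuremath{\Varid{ls}}}_t)$, and check that $\widetilde{\ensuremath{\Varid{ls}}}_t$ (view prefix removed by \ensuremath{\Varid{divide}}, source prefix removed by focusing) is literally $\ensuremath{\Varid{ls}}_t = \ensuremath{\Varid{snd}}(\ensuremath{\Varid{get}}(s_t))$. The induction hypothesis gives $\ensuremath{\Varid{put}}(s_t, v_t, \ensuremath{\Varid{ls}}_t) = s_t$, so $\ensuremath{\Varid{reconstruct}}(\ensuremath{\Varid{spat}}'_k, \ensuremath{\Varid{ss}})$ rebuilds $s$ exactly; the base case (constructors with no variables, $\ensuremath{\Varid{links}} = \emptyset$) reconstructs $s$ from $l_{\mathit{root}}$ alone.

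The main obstacle I anticipate is the third step: verifying that the two prefix manipulations compose correctly so that the recursive link set matches $\ensuremath{\Varid{snd}}(\ensuremath{\Varid{get}}(s_t))$ on the nose. This requires the path-incomparability of distinct variable positions (to justify that \ensuremath{\Varid{divide}} neither loses nor mixes links) together with the precise interface of \autoref{lem:focus} (to relocate the source argument from the global $s$ down to the subtree $s_t$ while stripping exactly $\ensuremath{\Varid{path}}(\ensuremath{\Varid{spat}_{\Varid{k}}}, t)$). The bare-variable view-pattern case (e.g.\ \ensuremath{\Conid{FromT}} or \ensuremath{\Conid{Paren}}) needs a second look, since there $\ensuremath{\Varid{path}}(\ensuremath{\Varid{vpat}_{\Varid{k}}}, t) = [\,]$ and the view does not shrink; but following $l_{\mathit{root}}$ keeps the argument within the structural induction on $s$ (the recursion still descends to the proper subtree $s_t$), so no separate termination argument is needed.
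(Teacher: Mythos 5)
Your proposal is correct and follows essentially the same route as the paper's own proof: structural induction on \ensuremath{\Varid{s}}, observing that $l_{\mathit{root}}$ forces \ensuremath{\Varid{put}} into its second branch, the \ensuremath{\Varid{fillWildcards}}/\ensuremath{\Varid{eraseVars}} round-trip to recover the top portion of \ensuremath{\Varid{s}}, \ensuremath{\Varid{divide}} plus the Focusing lemma (\autoref{lem:focus}) to relocate each recursive call onto the subtree $s_t$ with exactly $\ensuremath{\Varid{snd}}(\ensuremath{\Varid{get}}(s_t))$ as its link set, the induction hypothesis, and the reconstruct--decompose identity. You in fact make explicit two details the paper glosses over (why the shortest-source-path tie-break selects $l_{\mathit{root}}$ even when $\ensuremath{\Varid{vpat}_{\Varid{k}}}$ is a bare variable, and why source-pattern disjointness forces the rediscovered rule to be the same $k$), which only strengthens the argument.
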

\footnotetext{For simplicity, we regard \ensuremath{(\Varid{a},(\Varid{b},\Varid{c}))} the same as \ensuremath{(\Varid{a},\Varid{b},\Varid{c})}.}
\begin{proof}[Proof of Hippocraticness]
  Also by induction on the structure of \ensuremath{\Varid{s}},
  \begin{align*}
    & \ensuremath{\Varid{put}}(\ensuremath{\Varid{s}}, \ensuremath{\Varid{get}}(\ensuremath{\Varid{s}})) \\
    =& \mcond{\text{Unfolding } \ensuremath{\Varid{get}}(s)} \\[-0.8ex]
    & \ensuremath{\Varid{put}}(\ensuremath{\Varid{s}}, \ensuremath{\Varid{reconstruct}}(\ensuremath{\Varid{vpat}_{\Varid{k}}}, \ensuremath{\Varid{fst}} \circ \ensuremath{\Varid{vls}}), l_\ensuremath{\Varid{root}} \cup \ensuremath{\Varid{links}}) \ \text{,}
  \end{align*}
  where $\ensuremath{\Varid{spat}_{\Varid{k}}} \sim \ensuremath{\Varid{vpat}_{\Varid{k}}} \in \ensuremath{\Conid{R}}$ is the unique rule such that \ensuremath{\Varid{spat}_{\Varid{k}}} matches \ensuremath{\Varid{s}}.
  $l_\ensuremath{\Varid{root}}$, \ensuremath{\Varid{links}}, and \ensuremath{\Varid{vls}} are defined exactly the same as in \ensuremath{\Varid{get}} (\ref{equ:get}).

  Now we expand \ensuremath{\Varid{put}}. Because $l_\ensuremath{\Varid{root}}$ links to the root of the view, \ensuremath{\Varid{put}} falls to its second case.
  \begin{equation} \label{equ:hippo}
       \ensuremath{\Varid{put}}(\ensuremath{\Varid{s}}, \ensuremath{\Varid{get}}(\ensuremath{\Varid{s}})) = \ensuremath{\Varid{inj}}(\ensuremath{\Varid{reconstruct}}(\ensuremath{\Varid{spat}_{\Varid{k}}}', \ensuremath{\Varid{ss}}))
  \end{equation}
  where
  \begin{align*}
  & \ensuremath{\Varid{spat}_{\Varid{k}}}' \\
  &= \mcond{ \ensuremath{\Varid{spat}} \text{ in (\ref{equ:put2}) is } \ensuremath{\Varid{eraseVars}}(\ensuremath{\Varid{fillWildcards}}(\ensuremath{\Varid{spat}_{\Varid{k}}}, \ensuremath{\Varid{s}})) } \\[-0.8ex]
  & \ensuremath{\Varid{fillWildcards}}(\ensuremath{\Varid{spat}_{\Varid{k}}}, \ensuremath{\Varid{eraseVars}}(\ensuremath{\Varid{fillWildcards}}(\ensuremath{\Varid{spat}_{\Varid{k}}}, \ensuremath{\Varid{s}}))) \\
  &= \mcond{ \text{See \autoref{fig:fillWildcards}}  } \\[-0.8ex]
  & \ensuremath{\Varid{fillWildcards}}(\ensuremath{\Varid{spat}_{\Varid{k}}}, \ensuremath{\Varid{s}}) \ \text{.}
  \end{align*}
  and
  \begin{align*}
    \ensuremath{\Varid{ss}} = \lambda (t \in \ensuremath{\Conid{Vars}}(\ensuremath{\Varid{spat}_{\Varid{k}}})) \rightarrow \ensuremath{\Varid{put}}(\ensuremath{\Varid{s}}, \ensuremath{\Varid{vs}}(t), \ensuremath{\Varid{divide}}(\ensuremath{\Conid{Path}}(\ensuremath{\Varid{vpat}_{\Varid{k}}}, t), \ensuremath{\Varid{links}}))
  \end{align*}
  where $\ensuremath{\Varid{vs}} = \ensuremath{\Varid{decompose}}(\ensuremath{\Varid{vpat}_{\Varid{k}}}, \ensuremath{\Varid{reconstruct}}(\ensuremath{\Varid{vpat}_{\Varid{k}}}, \ensuremath{\Varid{fst}} \circ \ensuremath{\Varid{vls}})) = \ensuremath{\Varid{fst}} \circ \ensuremath{\Varid{vls}}$. (See the beginning of the proof.)
  Since $\ensuremath{\Varid{vls}} = \ensuremath{\Varid{get}} \circ \ensuremath{\Varid{decompose}}(\ensuremath{\Varid{spat}_{\Varid{k}}}, s)$, we have
  \begin{align*}
    \ensuremath{\Varid{ss}} &= \lambda (t \in \ensuremath{\Conid{Vars}}(\ensuremath{\Varid{spat}_{\Varid{k}}})) \rightarrow \\
    &\myindent \ensuremath{\Varid{put}}(\ensuremath{\Varid{s}}, \ensuremath{\Varid{fst}}(\ensuremath{\Varid{get}}(\ensuremath{\Varid{decompose}}(\ensuremath{\Varid{spat}_{\Varid{k}}}, s)(t))), \ensuremath{\Varid{divide}}(\ensuremath{\Conid{Path}}(\ensuremath{\Varid{vpat}_{\Varid{k}}}, t), \ensuremath{\Varid{links}}))
  \end{align*}
  By \autoref{lem:focus}, we have
  \begin{align*}
    \ensuremath{\Varid{ss}} &= \lambda (t \in \ensuremath{\Conid{Vars}}(\ensuremath{\Varid{spat}_{\Varid{k}}})) \rightarrow \\
    &\myindent \ensuremath{\Varid{put}}(\ensuremath{\Varid{decompose}}(\ensuremath{\Varid{spat}_{\Varid{k}}}, s)(t), \ensuremath{\Varid{fst}}(\ensuremath{\Varid{get}}(\ensuremath{\Varid{decompose}}(\ensuremath{\Varid{spat}_{\Varid{k}}}, s)(t))), \\
    &\myindent \myindentS \ensuremath{\Varid{snd}}(\ensuremath{\Varid{get}}(\ensuremath{\Varid{decompose}}(\ensuremath{\Varid{spat}_{\Varid{k}}}, s)(t)))) \\
    &=\mcond{\text{Inductive hypothesis for }\ensuremath{\Varid{decompose}}(\ensuremath{\Varid{spat}_{\Varid{k}}}, s)(t)} \\
    &\quad \lambda (t \in \ensuremath{\Conid{Vars}}(\ensuremath{\Varid{spat}_{\Varid{k}}})) \rightarrow \ensuremath{\Varid{decompose}}(\ensuremath{\Varid{spat}_{\Varid{k}}}, s)(t) \\
    &= \ensuremath{\Varid{decompose}}(\ensuremath{\Varid{spat}_{\Varid{k}}}, s) \ \text{.}
  \end{align*}
  Now, we substitute $\ensuremath{\Varid{fillWildcards}}(\ensuremath{\Varid{spat}_{\Varid{k}}}, \ensuremath{\Varid{s}})$ for $\ensuremath{\Varid{spat}_{\Varid{k}}}'$ and $\ensuremath{\Varid{decompose}}(\ensuremath{\Varid{spat}_{\Varid{k}}}, s)$ for \ensuremath{\Varid{ss}} in equation (\ref{equ:hippo}), and obtain
  \begin{align*}
       & \ensuremath{\Varid{put}}(\ensuremath{\Varid{s}}, \ensuremath{\Varid{get}}(\ensuremath{\Varid{s}})) \\
      =& \mcond{ Equation (\ref{equ:hippo}) } \\[-0.8ex]
      & \ensuremath{\Varid{inj}}_{\ensuremath{\Conid{S}}\rightarrow\ensuremath{\Conid{S}}@V}(\ensuremath{\Varid{reconstruct}}(\ensuremath{\Varid{spat}_{\Varid{k}}}', \ensuremath{\Varid{ss}}))\\
      %
      = & \ensuremath{\Varid{inj}}_{\ensuremath{\Conid{S}}\rightarrow\ensuremath{\Conid{S}}@V}(\ensuremath{\Varid{reconstruct}}(\ensuremath{\Varid{fillWildcards}}(\ensuremath{\Varid{spat}_{\Varid{k}}}, \ensuremath{\Varid{s}}), \ensuremath{\Varid{decompose}}(\ensuremath{\Varid{spat}_{\Varid{k}}}, \ensuremath{\Varid{s}})))\\
      =& \mcond{ \text{See \autoref{fig:reconstructDecompose}} } \\[0.8ex]
      & \ensuremath{\Varid{inj}}_{\ensuremath{\Conid{S}} \rightarrow \ensuremath{\Conid{S}}@V}(s) \\
      =& \ensuremath{\Varid{s}} \ \text{.}
  \end{align*}
  This completes the proof of Hippocraticness.
\end{proof}

\begin{figure}[t]
\centering
\includegraphics[scale=0.7,trim={5cm 8.2cm 3.5cm 9cm},clip]{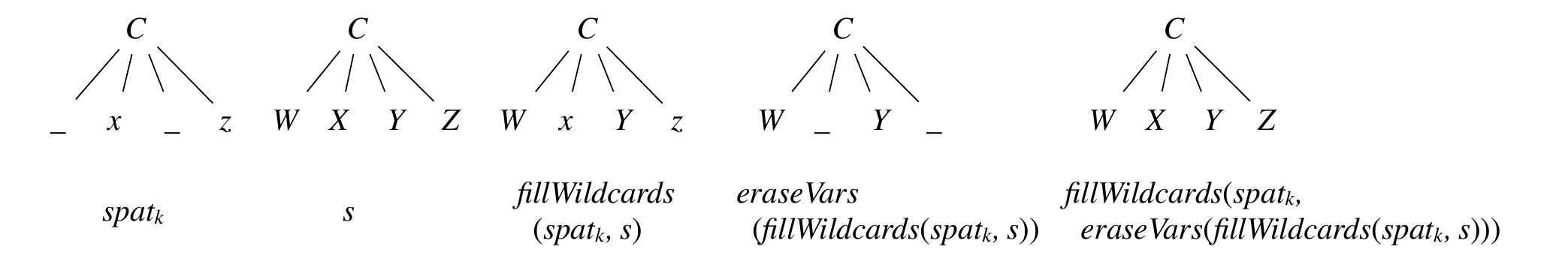}
\caption{A property regarding \ensuremath{\Varid{fillWildcards}}.}
\label{fig:fillWildcards}
\end{figure}

\begin{figure}[t]
\centering
\includegraphics[scale=0.7,trim={5cm 8.2cm 3cm 9cm},clip]{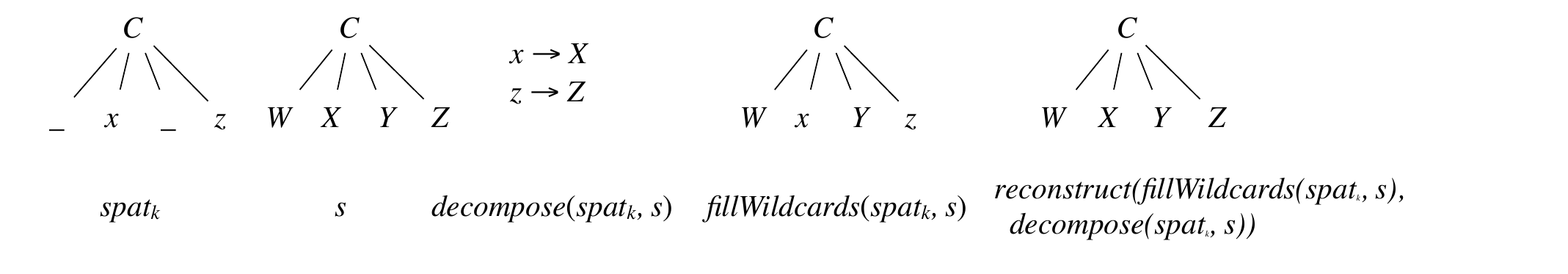}
\caption{A property regarding Reconstruct-Decompose.}
\label{fig:reconstructDecompose}
\end{figure}

\begin{theorem}{(Correctness of the DSL)}
  For any \ensuremath{(\Varid{s},\Varid{v},\Varid{ls})} that makes $\ensuremath{\Varid{check}}(s, v, \ensuremath{\Varid{ls}}) = \ensuremath{\Conid{True}}$, $\ensuremath{\Varid{get}}(\ensuremath{\Varid{put}}(\ensuremath{\Varid{s}}, \ensuremath{\Varid{v}}, \ensuremath{\Varid{ls}})) = (\ensuremath{\Varid{v}}, \ensuremath{\Varid{ls'}})$, for some \ensuremath{\Varid{ls'}}.
\end{theorem}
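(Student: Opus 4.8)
The plan is to prove the statement by well-founded induction following the recursive structure of $\Varid{put}$, whose termination on inputs satisfying $\Varid{check}(s,v,\Varid{ls}) = \Conid{True}$ is guaranteed by the conditions bundled into $\Varid{check}$ (in particular $\Varid{cond}_3$, which bounds chains of bare-variable rules). At the top level I would split into the two cases by which $\Varid{put}$ and $\Varid{check}$ are defined, namely whether the root of the view lies inside a region of some input link: the $\Varid{chkNoLink}$ branch corresponding to $\Varid{put}$'s first case (\ref{equ:put1}), and the $\Varid{chkWithLink}$ branch corresponding to the second (\ref{equ:put2}).

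Before the case analysis I would isolate one auxiliary fact about the conversion functions: for any $x$, running $\Varid{get}$ on $\Varid{inj}_{T \to \Conid{S} @ V}(x)$ recovers the same view as running $\Varid{get}$ on $x$, that is $\Varid{fst}(\Varid{get}(\Varid{inj}_{T \to \Conid{S} @ V}(x))) = \Varid{fst}(\Varid{get}(x))$. This holds because each wrapping step $t_i$ arises from a rule $\Conid{Pat}_i \sim x$ whose view pattern is a bare variable; by source-pattern disjointness and coverage, $\Varid{get}$ on the wrapped term is forced to use exactly that rule, which reproduces the view of the single inner subterm and so ``sees through'' the wrapping. For the first case, $\Varid{put}$ chooses a rule $\Varid{spat}_k \sim \Varid{vpat}_k$ with $\Varid{vpat}_k$ matching $v$ and returns $\Varid{reconstruct}(\Varid{spat}_k', \Varid{ss})$, whose top matches $\Varid{spat}_k$ because $\Varid{spat}_k'$ is $\Varid{spat}_k$ with its wildcards filled; when $\Varid{get}$ is applied to this source, disjointness forces it to select the same rule, so it reconstructs its view via $\Varid{vpat}_k$ from the views of the variable subtrees. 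By the induction hypothesis each recursive $\Varid{put}$ call is inverted by $\Varid{get}$, recovering $\Varid{vs}\,t$; hence the reconstructed view is $\Varid{reconstruct}(\Varid{vpat}_k, \Varid{vs}) = v$, using that $v$ matches $\Varid{vpat}_k$. The second case is analogous: $\Varid{put}$ returns an $\Varid{inj}$-wrapped term, the auxiliary lemma reduces $\Varid{get}$ of it to $\Varid{get}$ of $\Varid{reconstruct}(\Varid{spat}_k', \Varid{ss})$, whose top again matches $\Varid{spat}_k$, while $\Varid{cond}_2$ of $\Varid{chkWithLink}$ supplies $v$ matching $\Varid{vpat}_k$; the same reconstruction then yields $v$. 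In both cases the output links $\Varid{ls}'$ are simply whatever $\Varid{get}$ produces, which suffices since the statement asks only for their existence.

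The main obstacle I anticipate is the treatment of $\Varid{inj}$ in the second case: I must show that $\Varid{get}$ peels off exactly the constructors $\Varid{inj}$ introduced (the auxiliary lemma) and that the omitted subscripts on the recursive $\Varid{put}$ and $\Varid{get}$ calls line up so that the induction hypothesis applies at the correct source types $\Conid{TypeOf}(\Varid{spat}_k, t)$ and $\Conid{TypeOf}(\Varid{vpat}_k, t)$. A secondary subtlety is justifying the well-foundedness of the induction when $\Varid{vpat}_k$ is a bare-variable pattern, where the view does not shrink and $\Varid{divide}$ leaves $\Varid{ls}$ unchanged; here I would lean on $\Varid{cond}_3$ together with $\Varid{put}$'s preference for non-bare view patterns to bound such steps, so that the recursion terminates and the inductive argument is legitimate.
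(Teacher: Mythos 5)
Your proposal is correct and follows essentially the same route as the paper's proof: induction on the recursive structure of \ensuremath{\Varid{put}} (the paper phrases it as induction on the size of \ensuremath{(\Varid{v},\Varid{ls})}), a case split on the two branches of \ensuremath{\Varid{put}}, re-selection of the same rule via source-pattern disjointness, the reconstruct/decompose inverse property, the inductive hypothesis on the recursive calls, and the appeal to \ensuremath{\Varid{cond}_{\mathrm{3}}} for the bare-variable-pattern case. Your auxiliary ``see-through'' fact about \ensuremath{\Varid{inj}} is exactly what the paper invokes (in its Retentiveness proof) to handle the second case it omits here, and your statement of it at the level of \ensuremath{\Varid{fst}\circ\Varid{get}} is the appropriately careful version.
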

\begin{proof}[Proof of Correctness]
  We prove Correctness by induction on the size of $(\ensuremath{\Varid{v}}, \ensuremath{\Varid{ls}})$.
  The proofs of the two cases of \ensuremath{\Varid{put}} are quite similar, and therefore we only present the first one, in which $\ensuremath{\Varid{put}}(\ensuremath{\Varid{s}}, \ensuremath{\Varid{v}}, \ensuremath{\Varid{ls}})$ falls into the first case of \ensuremath{\Varid{put}}: i.e.~ $\ensuremath{\Varid{put}}(\ensuremath{\Varid{s}}, \ensuremath{\Varid{v}}, \ensuremath{\Varid{ls}}) = \ensuremath{\Varid{reconstruct}}(\ensuremath{\Varid{fillWildcardsWithDefaults}}(\ensuremath{\Varid{spat}_{\Varid{k}}}), \ensuremath{\Varid{ss}})$. Then
  \begin{align*}
    \ensuremath{\Varid{get}}(\ensuremath{\Varid{put}}(\ensuremath{\Varid{s}}, \ensuremath{\Varid{v}}, \ensuremath{\Varid{ls}})) = \ensuremath{\Varid{get}}(\ensuremath{\Varid{reconstruct}}(\ensuremath{\Varid{fillWildcardsWithDefaults}}(\ensuremath{\Varid{spat}_{\Varid{k}}}), \ensuremath{\Varid{ss}}))
  \end{align*}
  where $\ensuremath{\Varid{spat}_{\Varid{k}}} \sim \ensuremath{\Varid{vpat}_{\Varid{k}}} \in \ensuremath{\Conid{R}}$, $\ensuremath{\Varid{isMatch}}(\ensuremath{\Varid{vpat}_{\Varid{k}}}, \ensuremath{\Varid{v}}) = \ensuremath{\Conid{True}}$, and
  \begin{align*}
    & ss = \lambda (t \in \ensuremath{\Conid{Vars}}(\ensuremath{\Varid{spat}_{\Varid{k}}})) \rightarrow \\
    & \myindent \ensuremath{\Varid{put}}(\ensuremath{\Varid{s}}, \ensuremath{\Varid{decompose}}(\ensuremath{\Varid{vpat}}_k, \ensuremath{\Varid{v}})(t), \ensuremath{\Varid{divide}}(\ensuremath{\Conid{Path}}(\ensuremath{\Varid{vpat}_{\Varid{k}}}, t), \ensuremath{\Varid{ls}})) \ \text{.}
  \end{align*}

  Now expanding the definition of \ensuremath{\Varid{get}}, because of the disjointness of source patterns, the same $\ensuremath{\Varid{spat}_{\Varid{k}}} \sim \ensuremath{\Varid{vpat}_{\Varid{k}}} \in \ensuremath{\Conid{R}}$ will be select again. Thus
  \begin{align*}
     \ensuremath{\Varid{get}}(\ensuremath{\Varid{put}}(\ensuremath{\Varid{s}}, \ensuremath{\Varid{v}}, \ensuremath{\Varid{ls}})) = (\ensuremath{\Varid{reconstruct}}(\ensuremath{\Varid{vpat}_{\Varid{k}}}, \ensuremath{\Varid{fst}} \circ \ensuremath{\Varid{vls}}), \cdots)
  \end{align*}
  where
  \begin{align*}
    \ensuremath{\Varid{vls}} &= \ensuremath{\Varid{get}} \circ \ensuremath{\Varid{decompose}}(\ensuremath{\Varid{spat}_{\Varid{k}}},  \ensuremath{\Varid{put}}(\ensuremath{\Varid{s}},\ensuremath{\Varid{v}},\ensuremath{\Varid{ls}}))\\
          &= \ensuremath{\Varid{get}} \circ \ensuremath{\Varid{decompose}}(\ensuremath{\Varid{spat}_{\Varid{k}}},  \ensuremath{\Varid{reconstruct}}(\ensuremath{\Varid{fillWildcardsWithDefaults}}(\ensuremath{\Varid{spat}_{\Varid{k}}}), \ensuremath{\Varid{ss}}))\\
          &= \mcond{\text{See \autoref{fig:decomposeReconstruct}}}\\[-0.8ex]
          & \ensuremath{\Varid{get}} \circ \ensuremath{\Varid{ss}} \\
          &= \lambda (t \in \ensuremath{\Conid{Vars}}(\ensuremath{\Varid{spat}_{\Varid{k}}})) \rightarrow \ensuremath{\Varid{get}}(\ensuremath{\Varid{put}}(\ensuremath{\Varid{s}}, \ensuremath{\Varid{decompose}}(\ensuremath{\Varid{vpat}}_k, \ensuremath{\Varid{v}})(t), \ensuremath{\Varid{divide}}(\ensuremath{\Conid{Path}}(\ensuremath{\Varid{vpat}_{\Varid{k}}}, t), \ensuremath{\Varid{ls}})))
  \end{align*}

  To proceed, we want to use the inductive hypothesis to simplify $\ensuremath{\Varid{get}}(\ensuremath{\Varid{put}}(\cdots))$.
  When \ensuremath{\Varid{vpat}_{\Varid{k}}} is not a bare variable pattern, $\ensuremath{\Varid{decompose}}(\ensuremath{\Varid{vpat}}_k, \ensuremath{\Varid{v}})(t)$ is a proper subtree of \ensuremath{\Varid{v}} and the size of the third argument (i.e.~links \ensuremath{\Varid{ls}}) is non-increasing; thus the inductive hypothesis is applicable.
  On the other hand, if \ensuremath{\Varid{vpat}_{\Varid{k}}} is a bare variable pattern, the sizes of all the arguments stays the same; but \ensuremath{\Varid{cond}_{\mathrm{3}}} in \ensuremath{\Varid{chkNoLink}} guarantees that in the next round of the recursion, a pattern \ensuremath{\Varid{vpat}_{\Varid{k}}} that is not a bare variable pattern will be selected.
  Therefore we can still apply the inductive hypothesis.
  Applying the inductive hypothesis, we get
  \begin{align*}
     \ensuremath{\Varid{vls}} = \lambda (t \in \ensuremath{\Conid{Vars}}(\ensuremath{\Varid{spat}_{\Varid{k}}})) \rightarrow (\ensuremath{\Varid{decompose}}(\ensuremath{\Varid{vpat}}_k, \ensuremath{\Varid{v}})(t), \cdots)
  \end{align*}
  Thus $\ensuremath{\Varid{get}}(\ensuremath{\Varid{put}}(\ensuremath{\Varid{s}}, \ensuremath{\Varid{v}}, \ensuremath{\Varid{ls}})) = (\ensuremath{\Varid{reconstruct}}(\ensuremath{\Varid{vpat}_{\Varid{k}}}, \ensuremath{\Varid{decompose}}(\ensuremath{\Varid{vpat}_{\Varid{k}}}, \ensuremath{\Varid{v}})),\, \cdots) = (\ensuremath{\Varid{v}},\, \cdots)$, which completes the proof of Correctness.
\end{proof}

\begin{figure}[t]
\includegraphics[scale=0.7,trim={5cm 8.2cm 3.5cm 9cm},clip]{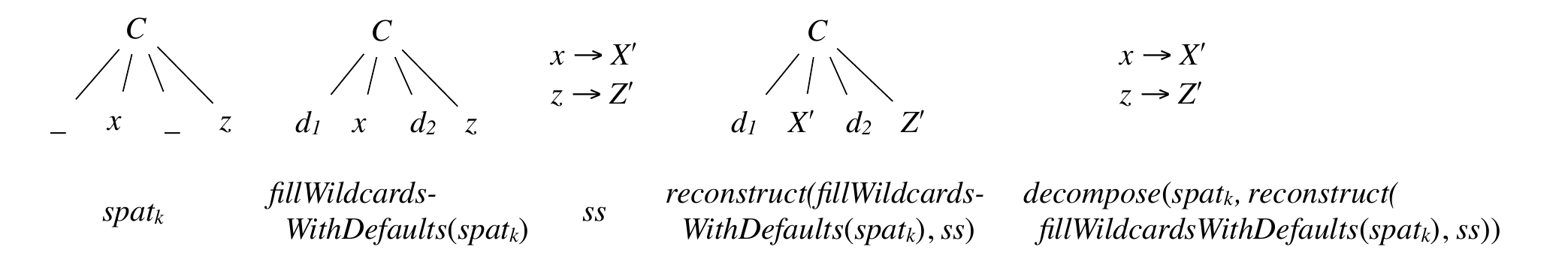}
\caption{A property regarding Decompose-Reconstruct.}
\label{fig:decomposeReconstruct}
\end{figure}

\begin{theorem}{(Retentiveness of the DSL)}
  For any \ensuremath{(\Varid{s},\Varid{v},\Varid{ls})} that $\ensuremath{\Varid{check}}(s, \ensuremath{\Varid{v}}, \ensuremath{\Varid{ls}}) = \ensuremath{\Conid{True}}$, $\ensuremath{\Varid{get}}(\ensuremath{\Varid{put}}(s, \ensuremath{\Varid{v}}, \ensuremath{\Varid{ls}})) = (\ensuremath{\Varid{v'}}, \ensuremath{\Varid{ls'}})$, for some \ensuremath{\Varid{v'}} and \ensuremath{\Varid{ls'}} such that
  \begin{align*}
       &\myset{(\ensuremath{\Varid{spat}}, (\ensuremath{\Varid{vpat}}, \ensuremath{\Varid{vpath}})) \mid ((\ensuremath{\Varid{spat}}, \ensuremath{\Varid{spath}}),\, (\ensuremath{\Varid{vpat}},\ensuremath{\Varid{vpath}})) \in \ensuremath{\Varid{ls}}} \\
    \subseteq &\myset{(\ensuremath{\Varid{spat}}, (\ensuremath{\Varid{vpat}}, \ensuremath{\Varid{vpath}})) \mid ((\ensuremath{\Varid{spat}}, \ensuremath{\Varid{spath}}),\, (\ensuremath{\Varid{vpat}},\ensuremath{\Varid{vpath}})) \in \ensuremath{\Varid{ls'}}}
  \end{align*}
\end{theorem}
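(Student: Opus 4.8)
The plan is to recognise that the displayed inclusion is exactly the relational Retentiveness law $\ensuremath{\Varid{fst}} \cdot \ensuremath{\Varid{ls}} \subseteq \ensuremath{\Varid{fst}} \cdot \ensuremath{\Varid{ls'}}$ of \autoref{def:retLens}: every input link, once we forget its source path but keep its source pattern and its full view region, must reappear among the links produced by the final \ensuremath{\Varid{get}}. I would prove this by induction on the size of $(\ensuremath{\Varid{v}}, \ensuremath{\Varid{ls}})$, reusing the case split on \ensuremath{\Varid{put}} and the same termination measure employed in the Correctness proof. I would first invoke Correctness to learn that $\ensuremath{\Varid{get}}(\ensuremath{\Varid{put}}(\ensuremath{\Varid{s}},\ensuremath{\Varid{v}},\ensuremath{\Varid{ls}})) = (\ensuremath{\Varid{v}}, \ensuremath{\Varid{ls'}})$, and, via source-pattern disjointness, that (below any wrapper introduced by \ensuremath{\Varid{inj}}) the \ensuremath{\Varid{get}} on the result reselects the very rule $\ensuremath{\Varid{spat}_{\Varid{k}}} \sim \ensuremath{\Varid{vpat}_{\Varid{k}}}$ that the top-level \ensuremath{\Varid{put}} used. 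This rule-agreement is what lets me line up the path bookkeeping of the two functions.

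In the first case of \ensuremath{\Varid{put}} (no link at the view root, equation (\ref{equ:put1})), condition $\ensuremath{\Varid{cond}_1}$ of \ensuremath{\Varid{chkNoLink}} tells me that every link of \ensuremath{\Varid{ls}} has a view path beginning with $\ensuremath{\Varid{path}}(\ensuremath{\Varid{vpat}_{\Varid{k}}}, t)$ for some variable~$t$, so \ensuremath{\Varid{ls}} is precisely the union of the prefixed families handed to the recursive calls by \ensuremath{\Varid{divide}} (\ref{equ:divide}). Each such link therefore lives, with its prefix stripped, in the input of one recursive \ensuremath{\Varid{put}}; the inductive hypothesis preserves it (modulo source path) in that subtree's \ensuremath{\Varid{get}} output, and the top-level \ensuremath{\Varid{get}} (\ref{equ:get}) re-prefixes the view path by the matching $\ensuremath{\Varid{path}}(\ensuremath{\Varid{vpat}_{\Varid{k}}}, t)$ when it assembles \ensuremath{\Varid{links}}. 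The fresh root link $l_{\mathit{root}}$ that \ensuremath{\Varid{get}} adds is harmless, since Retentiveness only asks that input links survive, not that every output link be accounted for.

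In the second case (a link at the view root, equation (\ref{equ:put2})) I must additionally show that the chosen link $l = ((\ensuremath{\Varid{spat}}, \ensuremath{\Varid{spath}}), (\ensuremath{\Varid{vpat}}, [\,]))$ is itself retained. Here \ensuremath{\Varid{put}} installs $\ensuremath{\Varid{spat}_{\Varid{k}}}' = \ensuremath{\Varid{fillWildcards}}(\ensuremath{\Varid{spat}_{\Varid{k}}}, \ensuremath{\Varid{spat}})$ as the top region, and I would use the $\ensuremath{\Varid{fillWildcards}}/\ensuremath{\Varid{eraseVars}}$ identity (\autoref{fig:fillWildcards}) to show that \ensuremath{\Varid{get}} recomputes its root source pattern as $\ensuremath{\Varid{eraseVars}}(\ensuremath{\Varid{fillWildcards}}(\ensuremath{\Varid{spat}_{\Varid{k}}}, \cdot)) = \ensuremath{\Varid{spat}}$ and its root view region as $(\ensuremath{\Varid{eraseVars}}\,\ensuremath{\Varid{vpat}_{\Varid{k}}}, [\,]) = (\ensuremath{\Varid{vpat}}, [\,])$; hence the corresponding link of the result is $((\ensuremath{\Varid{spat}}, \ensuremath{\Varid{spath'}}), (\ensuremath{\Varid{vpat}}, [\,]))$ for some new source path $\ensuremath{\Varid{spath'}}$, i.e.\ exactly $l$ up to its source path. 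The remaining links of $\ensuremath{\Varid{ls}} \setminus \myset{l}$ are distributed and preserved just as in the first case, using $\ensuremath{\Varid{cond}_3}$ of \ensuremath{\Varid{chkWithLink}} in place of $\ensuremath{\Varid{cond}_1}$.

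The main obstacle I anticipate is the \ensuremath{\Varid{inj}} wrapper in (\ref{equ:put2}): when the grabbed source region has a type different from \ensuremath{\Conid{S}}, \ensuremath{\Varid{inj}} prepends a chain of conversion constructors coming from rules $\ensuremath{\Conid{Pat}}_i \sim x$, and I must argue that the subsequent \ensuremath{\Varid{get}} peels exactly these constructors — producing only their own links, all with view path $[\,]$ — while leaving the retained region's source \emph{pattern} and its view region untouched, merely lengthening its source path. Verifying that these wrapper links contribute nothing to $\ensuremath{\Varid{fst}} \cdot \ensuremath{\Varid{ls'}}$ that could collide with~$l$, together with the shortest-source-path choice of~$l$ (which keeps overlapping root regions in the correct relative order) and the bare-variable termination accounting already needed for Correctness, is the delicate part; the rest is routine path arithmetic.
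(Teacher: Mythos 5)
Your proposal is correct and follows essentially the same route as the paper's proof: induction on the size of $(\ensuremath{\Varid{v}}, \ensuremath{\Varid{ls}})$ mirroring the case split of \ensuremath{\Varid{put}}, using \ensuremath{\Varid{cond}_1}/\ensuremath{\Varid{cond}_3} of the \ensuremath{\Varid{check}} conditions to distribute the input links to the recursive calls, the $\ensuremath{\Varid{fillWildcards}}$/$\ensuremath{\Varid{eraseVars}}$ identities to show the root link is reproduced as $l_{\mathit{root}}$, and the fact that \ensuremath{\Varid{get}} after \ensuremath{\Varid{inj}} only lengthens source paths (which $\ensuremath{\Varid{fst}}$ discards). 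The only presentational difference is that the paper details the with-link case and your write-up is somewhat more explicit about the \ensuremath{\Varid{inj}} wrapper, which the paper dispatches by citing $\ensuremath{\Varid{get}}(\ensuremath{\Varid{inj}}(\ensuremath{\Varid{s}})) = \ensuremath{\Varid{get}}(\ensuremath{\Varid{s}})$.
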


\begin{proof}[Proof of Retentiveness]
  Again, we prove Retentiveness by induction on the size of $(\ensuremath{\Varid{v}}, \ensuremath{\Varid{ls}})$.
  The proofs of the two cases of \ensuremath{\Varid{put}} are similar, and thus we only show the second one here.

  If there is some $l = ((\ensuremath{\Varid{spat}}, \ensuremath{\Varid{spath}}),\, (\ensuremath{\Varid{vpat}}, \ensuremath{[\mskip1.5mu \mskip1.5mu]})) \in \ensuremath{\Varid{ls}}$, let $\ensuremath{\Varid{spat}_{\Varid{k}}} \sim \ensuremath{\Varid{vpat}_{\Varid{k}}}$ be the unique rule in $\ensuremath{\Conid{S}} \sim \ensuremath{\Conid{V}}$ that $\ensuremath{\Varid{isMatch}}(\ensuremath{\Varid{spat}_{\Varid{k}}}, \ensuremath{\Varid{spat}}) = \ensuremath{\Conid{True}}$.
  We have
  \begin{align*}
    &  \ensuremath{\Varid{get}}(\ensuremath{\Varid{put}}(\ensuremath{\Varid{s}}, \ensuremath{\Varid{v}}, \ensuremath{\Varid{ls}})) \\
    =& \mcond{\text{Definition of } \ensuremath{\Varid{put}}} \\[-0.8ex]
    & \ensuremath{\Varid{get}}(\ensuremath{\Varid{inj}}_{\ensuremath{\Conid{TypeOf}}(\ensuremath{\Varid{spat}}_k) \rightarrow S}(s')) \\
    =& \mcond{ \ensuremath{\Varid{get}}(\ensuremath{\Varid{inj}}(\ensuremath{\Varid{s}})) = \ensuremath{\Varid{get}}(\ensuremath{\Varid{s}}) \text{ as shown in (\autoref{sec:synres})}} \\[-0.8ex]
    & \ensuremath{\Varid{get}}(s')
  \end{align*}
  where $s' = \ensuremath{\Varid{reconstruct}}(\ensuremath{\Varid{fillWildcards}}(\ensuremath{\Varid{spat}_{\Varid{k}}}, \ensuremath{\Varid{spat}}), \ensuremath{\Varid{ss}})$ and
  \begin{align*}
    & ss = \lambda (t \in \ensuremath{\Conid{Vars}}(\ensuremath{\Varid{spat}_{\Varid{k}}})) \rightarrow \\
    & \myindent \ensuremath{\Varid{put}}(\ensuremath{\Varid{s}}, \ensuremath{\Varid{decompose}}(\ensuremath{\Varid{vpat}}_k, \ensuremath{\Varid{v}})(t), \ensuremath{\Varid{divide}}(\ensuremath{\Conid{Path}}(\ensuremath{\Varid{vpat}_{\Varid{k}}}, t), \ensuremath{\Varid{ls}} \setminus \myset{l}))
  \end{align*}
  Now we expand the definition of \ensuremath{\Varid{get}} (and focus on the links)
  \begin{align*}
    \ensuremath{\Varid{get}}(\ensuremath{\Varid{put}}(\ensuremath{\Varid{s}}, \ensuremath{\Varid{v}}, \ensuremath{\Varid{ls}})) = (\cdots, \myset{l_\ensuremath{\Varid{root}}} \cup \ensuremath{\Varid{links}})
  \end{align*}
  where $l_\ensuremath{\Varid{root}} = \left(\left(\ensuremath{\Varid{eraseVars}}(\ensuremath{\Varid{fillWildcards}}(\ensuremath{\Varid{spat}_{\Varid{k}}}, \ensuremath{\Varid{s'}})), []\right),\, (\ensuremath{\Varid{eraseVars}}(\ensuremath{\Varid{vpat}_{\Varid{k}}}), [])\right)$,
  \begin{align*}
    &\ensuremath{\Varid{links}} = \{\, ((a, \ensuremath{\Conid{Path}}(\ensuremath{\Varid{spat}_{\Varid{k}}}, t) \ensuremath{\plus } b),\,(c, \ensuremath{\Conid{Path}}(\ensuremath{\Varid{vpat}_{\Varid{k}}}, t) \ensuremath{\plus } d)) \numberthis \label{equ:links}\\
    & \myindent \mid t \in \ensuremath{\Conid{Vars}}(\ensuremath{\Varid{vpat}_{\Varid{k}}}), ((a, b),\,(c, d)) \in \ensuremath{\Varid{snd}}(\ensuremath{\Varid{vls}}(t))\,\} \text{ ,and}
  \end{align*}
  \begin{align*}
    & \ensuremath{\Varid{vls}}(t) \\
    = & \mcond{\text{Unfolding } \ensuremath{\Varid{vls}}} \\[-0.8ex]
    & (\ensuremath{\Varid{get}}\circ \ensuremath{\Varid{decompose}}(\ensuremath{\Varid{spat}_{\Varid{k}}}, s'))(t) \\
    = & \mcond{\text{Unfolding } \ensuremath{\Varid{s'}}} \\[-0.8ex]
    & (\ensuremath{\Varid{get}} \circ \ensuremath{\Varid{decompose}}(\ensuremath{\Varid{spat}_{\Varid{k}}}, \ensuremath{\Varid{reconstruct}}(\ensuremath{\Varid{fillWildcards}}(\ensuremath{\Varid{spat}_{\Varid{k}}}, \ensuremath{\Varid{spat}}), \ensuremath{\Varid{ss}}))) (t) \\
    = & \mcond{ \text{Similar to the case shown in \autoref{fig:decomposeReconstruct}} }\\[-0.8ex]
    & (\ensuremath{\Varid{get}} \circ \ensuremath{\Varid{ss}})(t) \\
    = & \mcond{\text{Definition of } \ensuremath{\Varid{ss}}}\\[-0.8ex]
    & \ensuremath{\Varid{get}}(\ensuremath{\Varid{put}}(\ensuremath{\Varid{s}}, \ensuremath{\Varid{decompose}}(\ensuremath{\Varid{vpat}}_k, \ensuremath{\Varid{v}})(t), \ensuremath{\Varid{divide}}(\ensuremath{\Conid{Path}}(\ensuremath{\Varid{vpat}_{\Varid{k}}}, t), \ensuremath{\Varid{ls}} \setminus \myset{l}))) \ \text{.}
  \end{align*}
  For $l_\ensuremath{\Varid{root}}$, we have
  \begin{align*}
      &\ensuremath{\Varid{eraseVars}}(\ensuremath{\Varid{fillWildcards}}(\ensuremath{\Varid{spat}_{\Varid{k}}}, \ensuremath{\Varid{s'}})) \\
    = &\mcond{\text{Unfolding } \ensuremath{\Varid{s'}}}\\[-0.8ex]
    &\ensuremath{\Varid{eraseVars}}(\ensuremath{\Varid{fillWildcards}}(\ensuremath{\Varid{spat}_{\Varid{k}}}, \ensuremath{\Varid{reconstruct}}(\ensuremath{\Varid{fillWildcards}}(\ensuremath{\Varid{spat}_{\Varid{k}}}, \ensuremath{\Varid{spat}}), \ensuremath{\Varid{ss}}))) \\
    = &\mcond{\text{See \autoref{fig:fillWildcards2}}}\\[-0.8ex]
    &\ensuremath{\Varid{eraseVars}}(\ensuremath{\Varid{fillWildcards}}(\ensuremath{\Varid{spat}_{\Varid{k}}}, \ensuremath{\Varid{spat}})) \\
    = &\mcond{\text{By }\ensuremath{\Varid{cond}_{\mathrm{2}}} \text{ in } \hyperref[def:chkWithLink]{\ensuremath{\Varid{chkWithLink}}}} \\[-0.8ex]
    &\ensuremath{\Varid{spat}}
  \end{align*}
   Use the first clause of \ensuremath{\Varid{cond}_{\mathrm{2}}}, we have $\ensuremath{\Varid{vpat}} = \ensuremath{\Varid{eraseVars}}(\ensuremath{\Varid{vpat}_{\Varid{k}}})$.
   Thus \[ l_\ensuremath{\Varid{root}} = \left(\left(\ensuremath{\Varid{eraseVars}}(\ensuremath{\Varid{fillWildcards}}(\ensuremath{\Varid{spat}_{\Varid{k}}}, \ensuremath{\Varid{s'}})), []\right),\, (\ensuremath{\Varid{eraseVars}}(\ensuremath{\Varid{vpat}_{\Varid{k}}}), [])\right) =
   ((\ensuremath{\Varid{spat}}, \ensuremath{[\mskip1.5mu \mskip1.5mu]}),\,(\ensuremath{\Varid{vpat}}, \ensuremath{[\mskip1.5mu \mskip1.5mu]})) \ \text{,} \]
   and therefore the input link $l = ((\ensuremath{\Varid{spat}}, \ensuremath{\Varid{spath}}),\, (\ensuremath{\Varid{vpat}}, \ensuremath{[\mskip1.5mu \mskip1.5mu]}))$ is `preserved' by $l_\ensuremath{\Varid{root}}$, i.e.~$\ensuremath{\Varid{fst}} \cdot \{l\} = \ensuremath{\Varid{fst}} \cdot \{\ensuremath{\Varid{l}}_{\ensuremath{\Varid{root}}}\}$ .

   For the links in $\ensuremath{\Varid{ls}} \setminus \myset{\ensuremath{\Varid{l}}}$, we show that they are preserved in \ensuremath{\Varid{links}} (\ref{equ:links}) above.
   By \ensuremath{\Varid{cond}_{\mathrm{3}}} in \hyperref[def:chkWithLink]{\ensuremath{\Varid{chkWithLink}}}, for every link $\ensuremath{\Varid{m}} \in \ensuremath{\Varid{ls}} \setminus \myset{\ensuremath{\Varid{l}}}$, there is some \ensuremath{\Varid{t\char95 m}} in $\ensuremath{\Conid{Vars}}(\ensuremath{\Varid{spat}_{\Varid{k}}})$ such that \[\ensuremath{\Varid{m}} \in \ensuremath{\Varid{addVPrefix}}(\ensuremath{\Conid{Path}}(\ensuremath{\Varid{vpat}_{\Varid{k}}}, t_m), \ensuremath{\Varid{divide}}(\ensuremath{\Conid{Path}}(\ensuremath{\Varid{vpat}}_k, t_m), \ensuremath{\Varid{ls}} \setminus \myset{\ensuremath{\Varid{l}}})). \]
   If $m = ((a, b), (c, \ensuremath{\Conid{Path}}(\ensuremath{\Varid{vpat}}_k, t_m) \ensuremath{\plus } d))$, then
   \[m' = ((a, b),\, (c,d)) \in \ensuremath{\Varid{divide}}(\ensuremath{\Conid{Path}}(\ensuremath{\Varid{vpat}}_k, t_m), \ensuremath{\Varid{ls}} \setminus \myset{\ensuremath{\Varid{l}}}). \]
   By the inductive hypothesis for $\ensuremath{\Varid{snd}}(\ensuremath{\Varid{vls}}(t_m))$, \ensuremath{\Varid{m'}} is `preserved', that is
   \begin{align*}
     \exists b'.\, ((a, b'),\ (c, d)) \in \ensuremath{\Varid{snd}}(\ensuremath{\Varid{vls}}(t_m))
   \end{align*}
   Now by the definition of \ensuremath{\Varid{links}} (\ref{equ:links}), $((a, \ensuremath{\Conid{Path}}(\ensuremath{\Varid{spat}}_k, t_m) \ensuremath{\plus } b'), (c, \ensuremath{\Conid{Path}}(\ensuremath{\Varid{vpat}}_k, t_m) \ensuremath{\plus } d)) \in \ensuremath{\Varid{links}}$, therefore $m$ is also preserved.
\end{proof}

\begin{corollary}
  Let $\ensuremath{\Varid{put'}} = \ensuremath{\Varid{put}}$ with its domain intersected with $\ensuremath{\Conid{S}} \times \ensuremath{\Conid{V}} \times \ensuremath{\Conid{LinkSet}}$, \ensuremath{\Varid{get}} and \ensuremath{\Varid{put'}} form a retentive lens as in \autoref{def:retLens} since they satisfy Hippocraticness (\ref{law:get}), Correctness (\ref{law:correct}) and Retentiveness (\ref{law:retain}).
\end{corollary}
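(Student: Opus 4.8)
The plan is to verify the three laws of \autoref{def:retLens} directly from the definitions of \ensuremath{\Varid{get}}~(\ref{equ:get}) and \ensuremath{\Varid{put}}~(\ref{equ:put1},~\ref{equ:put2}), each by induction. I would first lay three pieces of groundwork. First, \ensuremath{\Varid{get}} is total: source pattern coverage makes exactly one rule applicable at every node, and since source patterns are never bare variables, \ensuremath{\Varid{decompose}\;(\Varid{spat}_{\Varid{k}},\Varid{s})} returns proper subtrees, so the recursion in~(\ref{equ:get}) strictly shrinks~\ensuremath{\Varid{s}}. Second, taking $\dom(\ensuremath{\Varid{put}})$ to be exactly the triples on which \ensuremath{\Varid{check}} returns \ensuremath{\Conid{True}}, I would show $\ensuremath{\Varid{check}}(\ensuremath{\Varid{s}},\ensuremath{\Varid{get}}\;\ensuremath{\Varid{s}}) = \ensuremath{\Conid{True}}$ for every \ensuremath{\Varid{s}} of type \ensuremath{\Conid{S}}, so that Hippocraticness is well-posed; this is an induction on~\ensuremath{\Varid{s}} checking each condition $\ensuremath{\Varid{cond}}_i$ of \ensuremath{\Varid{chkWithLink}} against the links $l_{\mathit{root}} \cup \ensuremath{\Varid{links}}$ that \ensuremath{\Varid{get}} emits. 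Third, I would prove a \emph{focusing} lemma: \ensuremath{\Varid{put}} and \ensuremath{\Varid{check}} consult their source argument only to validate the source-paths of links, so whenever every link of \ensuremath{\Varid{ls}} points into a subtree $\ensuremath{\Varid{s'}}=\ensuremath{\Varid{sel}}(\ensuremath{\Varid{s}},\Varid{p})$, one may replace \ensuremath{\Varid{s}} by \ensuremath{\Varid{s'}} (stripping the prefix~\ensuremath{\Varid{p}} from source-paths) without changing the outcome. This is the glue that lets the source argument follow the recursion on the view and keeps the inductive hypotheses applicable.

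For \textbf{Hippocraticness} I would fix \ensuremath{\Varid{s}}, put $(\ensuremath{\Varid{v}},\ensuremath{\Varid{ls}})=\ensuremath{\Varid{get}}\;\ensuremath{\Varid{s}}$, and induct on the structure of~\ensuremath{\Varid{s}}. Unfolding~(\ref{equ:get}) exposes the unique matching rule $\ensuremath{\Varid{spat}_{\Varid{k}}}\sim\ensuremath{\Varid{vpat}_{\Varid{k}}}$ together with a root link $l_{\mathit{root}}$ attached to the root of the view, so \ensuremath{\Varid{put}} is forced into its with-link branch~(\ref{equ:put2}) and rebuilds the top as $\ensuremath{\Varid{reconstruct}}(\ensuremath{\Varid{fillWildcards}}(\ensuremath{\Varid{spat}_{\Varid{k}}},\ensuremath{\Varid{s}}),\ensuremath{\Varid{ss}})$. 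Two routine pattern-algebra identities---that \ensuremath{\Varid{fillWildcards}} is idempotent after an intervening \ensuremath{\Varid{eraseVars}}, and that \ensuremath{\Varid{reconstruct}} inverts \ensuremath{\Varid{decompose}}---reduce the goal to $\ensuremath{\Varid{ss}}=\ensuremath{\Varid{decompose}}(\ensuremath{\Varid{spat}_{\Varid{k}}},\ensuremath{\Varid{s}})$, which is the focusing lemma followed by the inductive hypothesis on each proper subtree. Finally $\ensuremath{\Varid{inj}}_{\ensuremath{\Conid{S}}\rightarrow\ensuremath{\Conid{S}}@V}$ acts as the identity here, and we recover~\ensuremath{\Varid{s}}.

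I would prove \textbf{Correctness} and \textbf{Retentiveness} in tandem, each by induction on the size of $(\ensuremath{\Varid{v}},\ensuremath{\Varid{ls}})$ and splitting on the two cases of \ensuremath{\Varid{put}}. For Correctness the pivotal fact is that source-pattern disjointness makes \ensuremath{\Varid{get}} re-select, on $\ensuremath{\Varid{put}}(\ensuremath{\Varid{s}},\ensuremath{\Varid{v}},\ensuremath{\Varid{ls}})$, the same rule that \ensuremath{\Varid{put}} used; hence the reassembled view is $\ensuremath{\Varid{reconstruct}}(\ensuremath{\Varid{vpat}_{\Varid{k}}},\ensuremath{\Varid{decompose}}(\ensuremath{\Varid{vpat}_{\Varid{k}}},\ensuremath{\Varid{v}}))=\ensuremath{\Varid{v}}$ once the inductive hypothesis is applied to every subtree. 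For Retentiveness, writing $\ensuremath{\Varid{ls'}}$ for the links returned by that same \ensuremath{\Varid{get}}, I would establish $\ensuremath{\Varid{fst}}\cdot\ensuremath{\Varid{ls}}\subseteq\ensuremath{\Varid{fst}}\cdot\ensuremath{\Varid{ls'}}$: in the with-link case the chosen input link~\ensuremath{\Varid{l}} is matched by the freshly emitted $l_{\mathit{root}}$ (equal view pattern, view path, and source pattern, differing only in source path, exactly the coordinate forgotten by the projection $\ensuremath{\Varid{fst}}$), while each remaining link of $\ensuremath{\Varid{ls}}\setminus\myset{\ensuremath{\Varid{l}}}$ survives because $\ensuremath{\Varid{cond}}_3$ in \ensuremath{\Varid{chkWithLink}} guarantees it is routed into some recursive call, preserved there by the inductive hypothesis, and re-prefixed by the \ensuremath{\Varid{path}} operations in \ensuremath{\Varid{links}} of~(\ref{equ:get}).

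The main obstacle is not the pattern algebra but the well-definedness and rejection conditions that keep the induction honest. Termination of \ensuremath{\Varid{put}} is genuinely in doubt when the selected view pattern is a bare variable, for then the recursive call in~(\ref{equ:rec1}) shrinks none of its arguments; the preference for non-bare-variable rules, reinforced by the look-ahead condition $\ensuremath{\Varid{cond}}_3$ of \ensuremath{\Varid{chkNoLink}} over $N+1$ rounds, is what forces progress and must be woven into the well-founded order driving the induction. Equally delicate are the conditions that make \ensuremath{\Varid{put'}} correctly partial: overlapping view regions have to be excluded by $\ensuremath{\Varid{cond}}_3$ of \ensuremath{\Varid{chkWithLink}}; the shortest-source-path choice in~(\ref{equ:put2}) is precisely what stops nested sugar regions from being reordered (the \lstinline{Brac}/\lstinline{Paren} hazard) and is therefore indispensable for Hippocraticness; and the coercions $\ensuremath{\Varid{inj}}$ of~(\ref{equ:inj}) must be shown transparent to \ensuremath{\Varid{get}}, i.e.~$\ensuremath{\Varid{get}}\circ\ensuremath{\Varid{inj}}=\ensuremath{\Varid{get}}$, so that Correctness and Retentiveness survive the type conversions applied to grabbed source regions.
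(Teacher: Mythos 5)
Your proposal follows essentially the same route as the paper: the same three pieces of groundwork (totality of \ensuremath{\Varid{get}}, the fact that \ensuremath{\Varid{check}\;(\Varid{s},\Varid{get}\;\Varid{s})} holds, and the focusing lemma), followed by structural induction on the source for Hippocraticness and induction on the size of the view-and-links pair for Correctness and Retentiveness, invoking the same pattern-algebra identities, the same role of the root link forcing the with-link branch, and the same use of the \ensuremath{\Varid{check}} conditions and the transparency of \ensuremath{\Varid{inj}} to \ensuremath{\Varid{get}}. No substantive difference from the paper's argument.
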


\begin{figure}[t]
\centering
\includegraphics[scale=0.7,trim={5cm 4cm 10cm 5cm},clip]{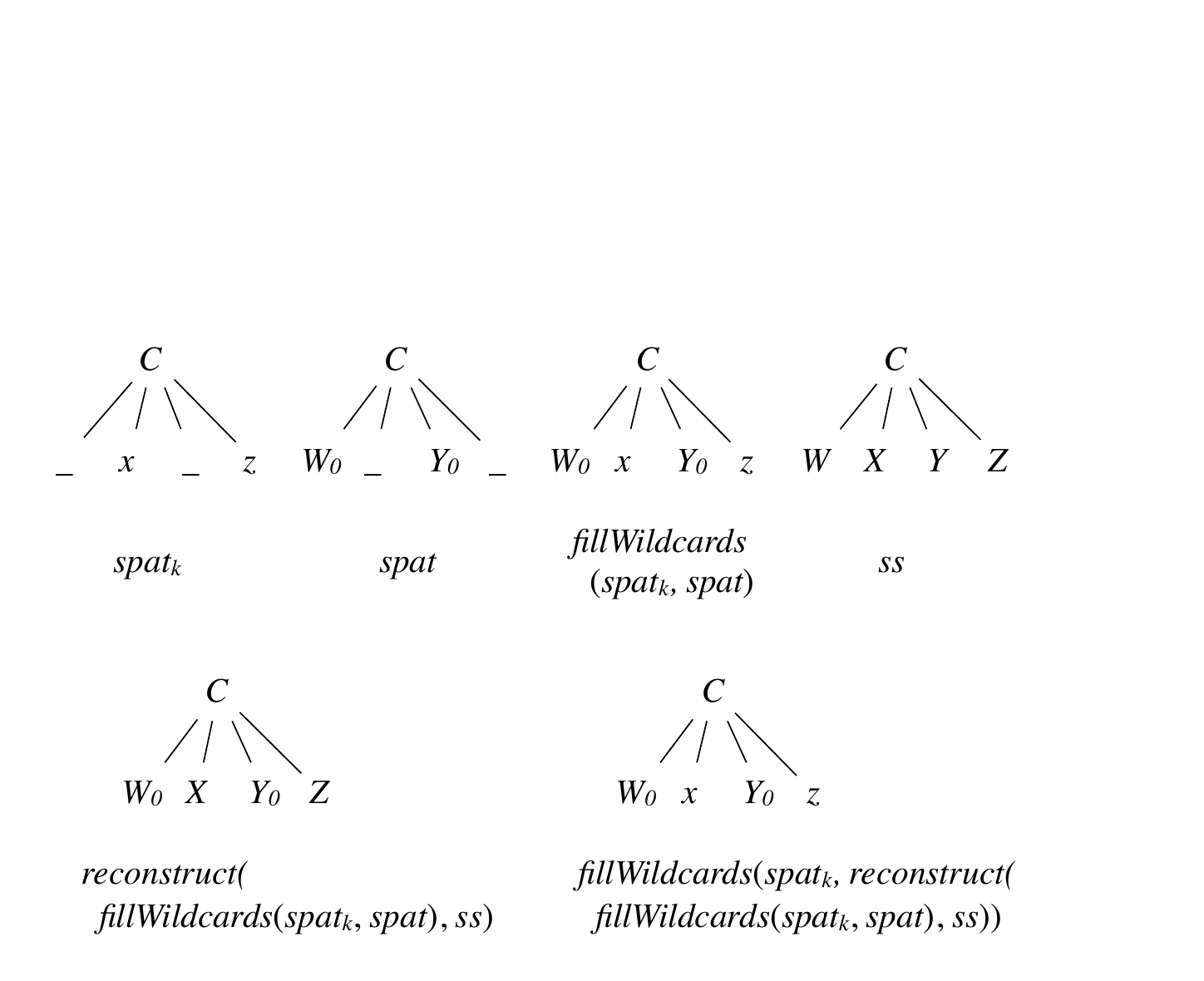}
\caption{Another property regarding \ensuremath{\Varid{fillWildcards}}.}
\label{fig:fillWildcards2}
\end{figure}

\section{Refactoring Operations as Edit Operation Sequences}
\label{sec:refactOptAsEditSeq}

We summarise how the 23 refactoring operations for Java~8 in Eclipse Oxygen could be described by \ensuremath{\Varid{replace}}, \ensuremath{\Varid{copy}}, \ensuremath{\Varid{move}}, \ensuremath{\Varid{swap}}, \ensuremath{\Varid{insert}}, and \ensuremath{\Varid{delete}}, where the \ensuremath{\Varid{insert}} and \ensuremath{\Varid{delete}} operations on lists can be implemented in terms of the first four.
For instance, to \ensuremath{\Varid{insert}} an element \ensuremath{\Varid{e}} at position \ensuremath{\Varid{i}} in a list of length \ensuremath{\Varid{n}} (where $1 \leq i \leq n$), we can follow these steps:
(i) Change the list to length $n+1$.
(ii) Starting from the tail of the list, \ensuremath{\Varid{move}} each element at position \ensuremath{\Varid{j}} such that \ensuremath{\Varid{j}\mathbin{>}\Varid{i}} to position \ensuremath{\Varid{j}\mathbin{+}\mathrm{1}}.
(iii) \ensuremath{\Varid{replace}} the element at position \ensuremath{\Varid{i}} with \ensuremath{\Varid{e}}.
Deleting the element at position \ensuremath{\Varid{i}} is almost as simple as moving each element after \ensuremath{\Varid{i}} one position ahead and decrease the length of the list by one.

\begin{small}
\begin{longtable}[c]{| p{0.115\textwidth} | p{0.4\textwidth} | p{0.4\textwidth} |}
\caption{Refactoring Operations as Edit Operation Sequences.}\\
\hline
Refactor Operation & Description & Edit Operations\\
\hline
Rename & Renames the selected element and (if enabled) corrects all references to the elements & \ensuremath{\Varid{replace}} the selected element and all references with the new name.\\
\hline
Use Supertype Where Possible & Replaces occurrences of a type with one of its supertypes after identifying all places where this replacement is possible. & \ensuremath{\Varid{replace}} all occurrences. \\
\hline
Generalize Declared Type & Allows the user to choose a supertype of the reference's current type. If the reference can be safely changed to the new type, it is. & \ensuremath{\Varid{replace}} all occurrences. \\
\hline
Infer Generic Type Arguments & Replaces raw type occurrences of generic types by parameterized types after identifying all places where this replacement is possible. & \ensuremath{\Varid{replace}} all occurrences. \\
\hline
Encapsulate Field & Replaces all references to a field with getter and setter methods. & \ensuremath{\Varid{insert}} getters and setters; \ensuremath{\Varid{replace}} all occurrences (with getters or setters respectively). \\
\hline
Change Method Signature & Changes parameter names, parameter types, parameter order and updates all references to the corresponding method. & \ensuremath{\Varid{replace}} all occurrences. Use \ensuremath{\Varid{swap}} if we need to change the parameter order. \\
\hline
Extract Method & Creates a new method containing the statements or expression currently selected and replaces the selection with a reference to the new method. & \ensuremath{\Varid{insert}} a new method; \ensuremath{\Varid{move}} selected code; \ensuremath{\Varid{replace}} the selection.\\
\hline
Extract Local Variable & Creates a new variable assigned to the expression currently selected and replaces the selection with a reference to the new variable. & \ensuremath{\Varid{insert}} a new variable; \ensuremath{\Varid{copy}} the selected expression to the variable assignment; \ensuremath{\Varid{replace}} the selected expression. \\
\hline
Extract Constant & Creates a static final field from the selected expression and substitutes a field reference, and optionally rewrites other places where the same expression occurs. & \ensuremath{\Varid{insert}} a field; \ensuremath{\Varid{copy}} the selected expression; \ensuremath{\Varid{replace}} the selected expression. \\
\hline
Introduce Parameter & Replaces an expression with a reference to a new method parameter, and updates all callers of the method to pass the expression as the value of that parameter. & \ensuremath{\Varid{insert}} a method parameter; \ensuremath{\Varid{insert}} the selected expression to all the callers (use \ensuremath{\Varid{copy}} if we want to preserve the information attached to the expression); \ensuremath{\Varid{replace}} the expression with the new method parameter.  \\
\hline
Introduce Factory & Creates a new factory method, which will call a selected constructor and return the created object. All references to the constructor will be replaced by calls to the new factory method. & \ensuremath{\Varid{insert}} a factory method; \ensuremath{\Varid{replace}} all the references to the constructor. \\
\hline
Introduce Indirection & Creates a static indirection method delegating to the selected method. & \ensuremath{\Varid{insert}} a method. \\
\hline
Convert  to Nested & Converts an anonymous inner class to a member class. & \ensuremath{\Varid{insert}} a member class; \ensuremath{\Varid{move}} the code within the anonymous class to the member class; \ensuremath{\Varid{delete}} the anonymous class.  \\
\hline
Move Type to New File & Creates a new Java compilation unit for the selected member type or the selected secondary type, updating all references as needed. & \ensuremath{\Conid{Move}} the selected code to the new file; \ensuremath{\Varid{replace}} all references. \\
\hline
Convert Local Variable to Field & Turn a local variable into a field. If the variable is initialized on creation, then the operation moves the initialization to the new field's declaration or to the class's constructors. & \ensuremath{\Varid{insert}} a field; \ensuremath{\Varid{copy}} the initialization; \ensuremath{\Varid{delete}} the variable declaration. \\
\hline
Extract Superclass & Extracts a common superclass from a set of sibling types. The selected sibling types become direct subclasses of the extracted superclass after applying the refactoring. & \ensuremath{\Varid{insert}} a superclass; \ensuremath{\Varid{move}} fields to the superclass; \ensuremath{\Varid{replace}} declarations of sibling types (classes) so that they extend the superclass; \ensuremath{\Varid{insert}} lacking fields into sibling classes. \\
\hline
Extract Interface & Creates a new interface with a set of methods and makes the selected class implement the interface. & generally the same as above. \\
\hline
Move & Moves the selected elements and (if enabled) corrects all references to the elements (also in other files). & \ensuremath{\Varid{move}} the selected elements; \ensuremath{\Varid{replace}} all references. \\
\hline
Push Down & Moves a set of methods and fields from a class to its subclasses. & \ensuremath{\Varid{move}} the methods and fields. \\
\hline
Pull Up & Moves a field or method to a superclass of its declaring class or (in the case of methods) declares the method as abstract in the superclass. & \ensuremath{\Varid{move}} the field or \ensuremath{\Varid{insert}} an abstract method declaration. \\
\hline
Introduce Parameter Object & Replaces a set of parameters with a new class, and updates all callers of the method to pass an instance of the new class as the value to the introduce parameter. & \ensuremath{\Varid{insert}} a class definition; \ensuremath{\Varid{move}} the parameters to the class; in callers' definitions, \ensuremath{\Varid{delete}} the set of parameters and \ensuremath{\Varid{insert}} the class type as a new parameter; for callers' arguments, \ensuremath{\Varid{delete}} the arguments corresponding to the set of parameters and \ensuremath{\Varid{insert}} a class instance. \\
\hline
Extract Class & Replaces a set of fields with new container object. All references to the fields are updated to access the new container object. & \ensuremath{\Varid{insert}} a new class; \ensuremath{\Varid{move}} the fields; \ensuremath{\Varid{replace}} references to the fields with references to the container object and field names. \\
\hline
Inline & Inline local variables, methods or constants. & For a variable or a constant, \ensuremath{\Varid{replace}} the occurrences with the value; \ensuremath{\Varid{delete}} the definition. For a method, \ensuremath{\Varid{replace}} all occurrences of parameters within the method body with real arguments; \ensuremath{\Varid{replace}} the method call with the (new) method body; \ensuremath{\Varid{delete}} the method definition. \\
\hline
\end{longtable}
\end{small}



\end{document}